\documentclass[a4paper,10pt]{article}

\usepackage[utf8]{inputenc}
\usepackage[T1]{fontenc}
\usepackage{url}
\usepackage{geometry}
\usepackage{authblk}
\usepackage[parfill]{parskip}
\usepackage[bookmarks=false,colorlinks]{hyperref}
\usepackage[style=ieee,mincitenames=1,maxcitenames=1,maxbibnames=9,backend=biber]{biblatex}
\usepackage{amsthm}

\usepackage{tabularx}
\usepackage{enumitem}
\usepackage{mdl}
\usepackage{autonum}

\addbibresource{references.bib}

\setlist{noitemsep,leftmargin=*}

\geometry{
  width=11in,
  height=8.5in,
  textwidth=5.5in,
  textheight=9.0in,
}

\title{Minimax Optimal Additive Functional Estimation with Discrete Distribution: Slow Divergence Speed Case}
\author[1]{Kazuto Fukuchi~\thanks{kazuto@mdl.cs.tsukuba.ac.jp}}
\author[1,2,3]{Jun Sakuma~\thanks{jun@cs.tsukuba.ac.jp}}
\affil[1]{Department of Computer Science, Graduate School of System and Information Engineering, University of Tsukuba}
\affil[3]{JST CREST}
\affil[4]{RIKEN Center for Advanced Intelligence Project}

\begin{document}
\maketitle
\begin{abstract}
\noindent This paper addresses an estimation problem of an {\em additive functional} of $\phi$, which is defined as $\theta(P;\phi)=\sum_{i=1}^k\phi(p_i)$, given $n$ i.i.d. random samples drawn from a discrete distribution $P=(p_1,...,p_k)$ with alphabet size $k$. We have revealed in the previous paper~\autocite{DBLP:journals/corr/FukuchiS17} that the minimax optimal rate of this problem is characterized by the {\em divergence speed} of the fourth derivative of $\phi$ in a range of fast divergence speed. In this paper, we prove this fact for a more general range of the divergence speed. As a result, we show the minimax optimal rate of the additive functional estimation for each range of the parameter $\alpha$ of the divergence speed. For $\alpha \in (1,3/2)$, we show that the minimax rate is $\frac{1}{n}+\frac{k^2}{(n\ln n)^{2\alpha}}$. Besides, we show that the minimax rate is $\frac{1}{n}$ for $\alpha \in [3/2,2]$.
\end{abstract}

\section{Introduction}

Let $P$ be a probability measure with alphabet size $k$, where we use a vector representation of $P$; $P=(p_1,...,p_k)$ for $p_i=P\cbrace{i}$. Let $\phi$ be a mapping from $[0,1]$ to $\RealSet$. Given a set of i.i.d. samples $S_n=\cbrace{X_1,...,X_n} \sim P^n$, we deal with the problem of estimating an {\em additive functional} of $\phi$. The additive functional $\theta$ of $\phi$ is defined as
\begin{align}
 \theta(P;\phi) = \sum_{i=1}^k \phi(p_i).
\end{align}
We simplify this notation to $\theta(P;\phi)=\theta(P)$. Most of entropy-like criteria can be formed in terms of $\theta$. For instance, when $\phi(p)=-p\ln p$, $\theta$ is Shannon entropy. For a positive real $\alpha$, letting $\phi(p)=p^\alpha$, $\ln(\theta(P))/(1-\alpha)$ becomes R\'enyi entropy. More generally, letting $\phi = f$ where $f$ is a concave function, $\theta$ becomes $f$-entropies~\autocite{akaike1998information}. The estimation problem of such entropy-like criteria is a basic but important component for various research areas; such as physics~\autocite{lake2011accurate}, neuroscience~\autocite{nemenman2004entropy}, security~\autocite{gu2005detecting}, and machine learning~\autocite{quinlan1986induction,peng2005feature}.

The goal of this study is to derive the minimax optimal estimator of $\theta$ given a function $\phi$. To precisely define the minimax optimality, we introduce the (quadratic) minimax risk. A sufficient statistic of $P$ is a histogram $N=\paren{N_1,...,N_k}$, where letting $\ind{\cbrace*{\cdot}}$ be the indicator function, $N_j = \sum_{i=1}^n \ind{\cbrace{X_i = j}}$ and $N \sim \Mul(n, P)$. The estimator of $\theta$ can thus be defined as a function $\hat\theta:[n]^k\to\RealSet$, where $[m]=\cbrace{1,...,m}$ for an integer $m$. The quadratic minimax risk is defined as
\begin{align}
 R^*(n,k;\phi) = \inf_{\hat\theta}\sup_{P \in \dom{M}_k} \Mean\bracket*{\paren*{\hat\theta(N) - \theta(P)}^2}, \label{eq:minimax-risk}
\end{align}
where $\dom{M}_k$ is the set of all probability measures on $[k]$, and the infimum is taken over all estimators $\hat\theta$. By definition, we can interpret the minimax risk as the worst case risk of the best estimator. With this definition, an estimator $\hat\theta$ is minimax \mbox{(rate-)optimal} if there is a constant $C>0$ such that
\begin{align}
 \sup_{P \in \dom{M}_k} \Mean\bracket*{\paren*{\hat\theta(N) - \theta(P)}^2} \le C R^*(n,k;\phi).
\end{align}
Since no estimator achieves smaller worst case risk than the minimax risk, we can say that the minimax optimal estimator is best regarding the worst case risk.

Minimax optimality of the additive functional estimation problem has attracted much attention of many researchers. For fixed $k$, asymptotic efficiency and minimax optimality were proved if we employ the plugin or the maximum likelihood estimator, in which the estimated value is obtained by substituting the empirical mean of the probabilities $P$ into $\theta$~\autocite{van2000asymptotic}. However, the plugin estimator suffers from a large bias if the alphabet size $k$ is large against the sample size $n$. Indeed, the plugin estimators for $\phi(p)=-p\ln p$ and $\phi(p)=p^\alpha$ have been shown to be suboptimal in the large-$k$ regime in recent studies~\autocite{wu2016minimax,jiao2015minimax,DBLP:conf/soda/AcharyaOST15}. Recent studies investigated the minimax optimal estimators for $\phi(p)=-p\ln p$, $\phi(p)=p^\alpha$, and $\phi(p)=\ind{p > 0}$ in the large-$k$ regime~\autocite{DBLP:conf/soda/AcharyaOST15,jiao2015minimax,wu2016minimax,2015arXiv150401227W,2015arXiv150401227W}. However, the results of these studies were specifically derived for these $\phi$.

We previously derived the minimax optimal estimator for general $\phi \in C^4[0,1]$ such that $\abs*{\phi^{(4)}(p)}$ is finite for $p \in (0,1]$, where $C^m[0,1]$ denotes a class of $m$th times differentiable functions from $[0,1]$ to $\RealSet$ and $\phi^{(m)}$ is $m$th derivative of $\phi$~\autocite{DBLP:journals/corr/FukuchiS17}. We revealed that the minimax optimal risk can be characterized by the {\em divergence speed} of the fourth derivative of $\phi$, or referring it to the {\em forth divergence speed} of $\phi$ in this manuscript. The definition of the $\ell$th divergence speed is given as follows.
\begin{definition}[Divergence speed]\label{def:div-speed}
  For a positive integer $\ell$ and $\alpha \in \RealSet$, the $\ell$th divergence speed of $\phi \in C^\ell[0,1]$ is $p^{\alpha}$ if there exist constant $W_\ell > 0$, $c_\ell \ge 0$ and $c'_\ell \ge 0$ such that for all $p \in (0,1)$,
  \begin{align}
    W_\ell p^{-\ell+\alpha} - c'_\ell \le \abs*{\phi^{(\ell)}(p)} \le W_\ell p^{-\ell+\alpha} + c_\ell.
  \end{align}
\end{definition}
With this definition, we derived the minimax optimal estimator for the additive functional of $\phi$ of which the second divergence speed is $p^\alpha$ for $\alpha \in (0,1)$. The optimal estimator is constructed from combination of {\em best polynomial estimator} and {\em (second order) bias corrected plugin estimator}. The best polynomial estimator is an unbiased estimator of a polynomial that minimizes the $L_\infty$ approximation error against $\phi$. The (second order) bias corrected estimator is the plugin estimator with Miller's bias correction~\autocite{miller1955nbi}. While the previous characterization is valid for a range of fast divergence speed as $\alpha \in (0,1)$, the slower case as $\alpha \ge 1$ still remains as an open problem.

\subsection{Related Work}\label{sec:related-work}
Mane researchers have been dealing with the estimation problem of the additive functional and provides many estimators and analyses in decades past. The plugin estimator or the maximum likelihood estimator~(MLE) is the simplest way to estimate $\theta$, in which the empirical probabilities $\tilde{P} = (N_1/n,...,N_k/n)$ are substituted as $\theta(\tilde{P})$. The plugin estimator is asymptotically efficient and minimax optimal if $k$ is fixed~\autocite{van2000asymptotic}, whereas it is inconsistent if $k$ is larger than linear order of $n$. Bias-correction methods, such as \autocite{miller1955nbi,grassberger1988finite,10.2307/1936227}, can be applied to the plugin estimator to reduce the bias whereas these bias-corrected estimators are still suboptimal in the large-$k$ regime. The estimators based on Bayesian approaches in \autocite{schurmann1996entropy,6620615,holste1998bayes} are also suboptimal~\autocite{DBLP:journals/corr/HanJW15a}. Hence, the recent studies interested in analyzing the minimax optimal risk for the large-$k$ regime.

\textcite{paninski2004estimating} firstly revealed existence of a consistent estimator even if the alphabet size $k$ is larger than linear order of the sample size $n$. However, they did not provide a concrete form of the consistent estimator. The first estimator that achieves consistency in the large-$k$ regime is proposed by \textcite{DBLP:conf/stoc/ValiantV11}. However, minimax optimality of the estimator was not shown even in a more detailed analysis in \autocite{DBLP:conf/focs/ValiantV11}.

Recently, many researchers were interested in deriving the minimax optimal rate for the additive functionals in the large-$k$ regime. \textcite{DBLP:conf/soda/AcharyaOST15} showed that the bias-corrected estimator of R\'enyi entropy achieves the minimax optimal rate in regard to the sample complexity if $\alpha > 1$ and $\alpha \in \NaturalSet$, but they did not show the minimax optimality for other $\alpha$. \textcite{2015arXiv150401227W} derived a minimax optimal estimator for $\phi(p)=\ind{p > 0}$. For $\phi(p)=-p\ln p$, \textcite{jiao2015minimax,wu2016minimax} independently introduced the minimax optimal estimators in the large-$k$ regime. \textcite{7997814,jiao2015minimax} introduced a minimax optimal estimator for $\phi(p)=p^\alpha$ for any $\alpha > 0$ in the large-$k$ regime. However, their analysis is specifically designed for these $\phi$, and thus these results cannot be applied to other $\phi$.

We previously analyze the minimax optimal rate for the additive functional with characterization of the {\em divergence speed} in \textcite{DBLP:journals/corr/FukuchiS17}. Under the condition that the divergence speed of $\phi$ is $p^\alpha$, we revealed the optimal rate for $\alpha \in (0,1)$. For $\alpha \in (0,1/2)$, the minimax optimal rate is
\begin{align}
  \frac{k^2}{(n\ln n)^{2\alpha}},
\end{align}
as long as $k \gtrsim \ln^{4/3}n$. For $\alpha \in [1/2,1)$, the optimal rate was obtained as
\begin{align}
  \frac{k^2}{(n\ln n)^{2\alpha}} + \frac{k^{2-2\alpha}}{n}.
\end{align}
However, the minimax optimal rate for $\alpha \ge 1$ still remains as an open problem.

Although it is a special case, \textcite{jiao2015minimax,wu2016minimax} revealed the minimax optimal rate for $\alpha \ge 1$; the divergence speed of $\phi(p)=-p\ln p$ and $\phi(p)=p^\alpha$ are $p^1$ and $p^\alpha$, respectively. For $\phi(p) = p^\alpha$ where $p \in (1,3/2)$, \textcite{jiao2015minimax} showed the minimax optimal rate as
\begin{align}
  \frac{1}{(n\ln n)^{2\alpha-2}},
\end{align}
whereas it is valid only if $k \asymp n\ln n$. The minimax optimal rate for $\alpha \in (1,3/2)$ when the condition $k \asymp n\ln n$ is not satisfied is an interested research area for clear understanding in estimating the additive functional.

Besides, the optimal estimators for divergences with a large alphabet size have been investigated in \autocite{7541473,7840425,7541399,8006529}. The estimation problems of divergences are much complicated than the additive function, while the similar techniques were applied to derive the minimax optimality.

\subsection{Our Contribution}

\begin{table}[tb]
  \centering
  \caption{Summary of results. }\label{tbl:results-summary}
  \begin{tabularx}{\textwidth}{ll|lp{.4\textwidth}l}
    \hline
     $\alpha$ & $\ell$ & minimax rate & estimator & \\
    \hline
     $\le 0$ & $1$ & no consistent estimator & & \autocite{DBLP:journals/corr/FukuchiS17} \\
     $(0,1/2)$ & $4$ & $\frac{k^2}{(n\ln n)^{2\alpha}}$ if $k \gtrsim \ln^{\frac{4}{3}} n$ & best polynomial \& second-order bias corrected plugin & \autocite{DBLP:journals/corr/FukuchiS17} \\
     $[1/2,1)$ & $4$ & $\frac{k^2}{(n\ln n)^{2\alpha}}+\frac{k^{2-2\alpha}}{n}$ & best polynomial \& second-order bias corrected plugin & \autocite{DBLP:journals/corr/FukuchiS17} \\
     $(1,3/2)$ & $6$ & $\frac{k^2}{(n\ln n)^{2\alpha}}+\frac{1}{n}$ & best polynomial \& fourth-order bias corrected plugin & this paper \\
     $[3/2,2]$ & $2$ & $\frac{1}{n}$ & plugin & this paper \\
    \hline
  \end{tabularx}
\end{table}

In this paper, we derive the minimax optimal rate of the additive functional estimation and construct minimax optimal estimators for any $\alpha > 1$. The results are summarized in \cref{tbl:results-summary}. This table shows the minimax optimal rate~(third column) and the estimator that achieves the optimal rate~(fourth column) for each range of $\alpha$. The column $\ell$~(second column) means that the presented minimax optimality is valid if the $\ell$th divergence speed of $\phi$ is $p^\alpha$. Our contribution consists of revealing the minimax optimal rates and constructing the minimax optimal estimators for two range of $\alpha$; $\alpha \in (1,3/2)$ and $\alpha \in [3/2,2]$.

%First, we prove the minimax optimal rate for $\alpha \in (0,1/3]$. In the previous study~\autocite{DBLP:journals/corr/FukuchiS17}, minimax optimality was proved only if $k \ge c\ln^{\frac{4}{3}}n$ for an universal constant $c > 0$. We successe to remove this condition and prove the minimax optimal rate as $\frac{k^2}{(n\ln n)^\alpha}$. To this end, we introduce the {\em variance reduction} technique and apply that technique to the (second order) bias corrected plugin estimator. In the variance reduction technique, we divide the samples into some chunks with approximately same size, and then for each chunk, we apply the bias corrected plugin estimator. Averaging the estimated values for all chunks gives the consequential estimated value. This procedure with appropriate number of the chunks can reduce the variance with sacrifycing few bias. Furthermore, we provide the lower bound by slightly modifying the analysis in the previous study~\autocite{DBLP:journals/corr/FukuchiS17}. Combining the new lower bound and the upper bound of the bias corrected plugin estimator with variance reduction, we prove the matching upper and lower bound on the minimax risk~(\cref{tbl:results-summary}, second row).

First, we show the minimax optimal rate and an optimal estimator for $\alpha \in (1,3/2)$. The existing result for this range is the analysis given by \textcite{jiao2015minimax} for the case $\phi(p) = p^\alpha$. As shown in \cref{sec:related-work}, they showed that the minimax optimal rate is $1/(n\ln n)^{2\alpha-2}$ under the condition $k \asymp n\ln n$. The optimal rate requires a strong condition on the relationship between $k$ and $n$, and the minimax optimal rate with $\alpha \in (1,3/2)$ is, therefore, far from clear understanding. In contrast, we success to prove the following minimax optimal rate without condition on the relationship between $k$ and $n$ as
\begin{align}
  \frac{k^2}{(n\ln n)^{2\alpha}} + \frac{1}{n}. \label{eq:optimal-rate-1-3/2-cont}
\end{align}
With this analysis, we clarify the question how many number of samples are necessary to consistently estimate the additive functional. To prove the upper bound in \cref{eq:optimal-rate-1-3/2-cont}, we employ the fourth order bias correction; which is an extension of the technique of Miller's bias correction~\autocite{miller1955nbi}. This technique offsets the fourth order approximation of bias of the plugin estimator in the similar manner of Miller's bias correction which offsets the second order approximation of bias. To prove the lower bound in \cref{eq:optimal-rate-1-3/2-cont}, we extend the technique introduced by \textcite{wu2016minimax} which analyzes the minimax lower bound for Shannon entropy. In the technique, the minimax lower bound is connected to the error of polynomial approximation. We carefully analyze the lower bound on the polynomial approximation error, and it yields the lower bound in \cref{eq:optimal-rate-1-3/2-cont}~(\cref{tbl:results-summary}, fourth row).

Second, we reveal that the plugin estimator is minimax optimal for $\alpha \in [3/2,2]$. In this case, the minimax optimal rate is $1/n$. It is almost obvious if $\alpha = 2$ because the second derivative is bounded in this case. To deal with the non-trivial case $\alpha \in [3/2,2)$, we extend the analysis given by \textcite{7997814}. They connected the bias of the plugin estimator to the Bernstein polynomial approximation error, and utilize the moduli of smoothness of $\phi$ to bound the the Bernstein polynomial approximation error. To apply their result, we analyze the moduli of smoothness of $\phi$ by using the divergence speed assumption. Besides, we extend their analysis to be applicable for general $\phi$~(\cref{tbl:results-summary}, fifth row)

\noindent{\bfseries Notations.}
We now introduce some additional notations. For any positive real sequences $\cbrace{a_n}$ and $\cbrace{b_n}$, $a_n \gtrsim b_n$ denotes that there exists a positive constant $c$ such that $a_n \ge c b_n$. Similarly, $a_n \lesssim b_n$ denotes that there exists a positive constant $c$ such that $a_n \le c b_n$. Furthermore, $a_n \asymp b_n$ implies $a_n \gtrsim b_n$ and $a_n \lesssim b_n$. For an event $\event$, we denote its complement by $\event^c$. For two real numbers $a$ and $b$, $a \lor b = \max\cbrace{a, b}$ and $a \land b = \min\cbrace{a, b}$.

\section{Preliminaries}

\subsection{Poisson Sampling}
We employ the Poisson sampling technique to derive upper and lower bounds for the minimax risk. The Poisson sampling technique models the samples as independent Poisson distributions, while the original samples follow a multinomial distribution. Specifically, the sufficient statistic for $P$ in the Poisson sampling is a histogram $\tilde{N} = \paren{\tilde{N}_i,...,\tilde{N}_k}$, where $\tilde{N}_1,...,\tilde{N}_k$ are independent random variables such that $\tilde{N}_i \sim \Poi(np_i)$. The minimax risk for Poisson sampling is defined as follows:
\begin{align}
 \tilde{R}^*(n,k;\phi) = \inf_{\cbrace{\hat\theta}}\sup_{P \in \dom{M}_k} \Mean\bracket*{\paren*{\hat\theta(\tilde{N}) - \theta(P)}^2}.
\end{align}
The minimax risk of Poisson sampling well approximates that of the multinomial distribution as $R^*(n,k;\phi) \asymp \tilde{R}^*(n,k;\phi)$ in ~\autocite{jiao2015minimax}.

\subsection{Best Polynomial Approximation}
\textcite{cai2011testing} presented a technique of the best polynomial approximation for deriving the minimax optimal estimators and their lower bounds for the risk. Let $\dom{P}_L$ be the set of polynomials of degree $L$. Given a function $\phi$ defined on an interval $I \subseteq [0,1]$ and a polynomial $g$, the $L_\infty$ error between $\phi$ and $g$ is defined as $\sup_{x \in I}\abs*{\phi(x) - p(x)}$. The best polynomial of $\phi$ with a degree-$L$ polynomial is a polynomial $g \in \dom{P}_L$ that minimizes the $L_\infty$ error. The error of the best polynomial approximation is defined as $E_L\paren*{\phi, I} = \inf_{g \in \dom{P}_L}\sup_{x \in I}\abs*{\phi(x) - g(x)}$. The error rate with respect to the degree $L$ has been studied since the 1960s~\autocite{timan1965theory,petrushev2011rational,ditzian2012moduli,achieser2013theory}. The polynomial that achieves the best polynomial approximation can be obtained, for instance, by the Remez algorithm~\autocite{remez1934determination} if $I$ is bounded.

\subsection{Basic Estimator Construction for $\alpha \in (0,3/2)$}\label{sec:basic-construction}
In the line of literature~\autocite{DBLP:conf/soda/AcharyaOST15,jiao2015minimax,wu2016minimax,DBLP:journals/corr/FukuchiS17}, the basic construction of the optimal estimator for the additive functional has a common part. Here, we describe the common methodology to construct the optimal estimator. For simplicity, we assume that we first draw $n' \sim \Poi(2n)$, and then draw $n'$ i.i.d. samples $S_{n'} = \cbrace{X_1,...,X_{n'}}$. Given the samples $S_{n'}$, we randomly divide these samples into two chunks. The assigned chunk for each sample is determined by a uniform random variable $B_i$, namely, $\p\cbrace{B_i = 0} = \p\cbrace{B_i = 1} = 1/2$. Then, we construct two hectograms $\tilde{N}$ and $\tilde{N}'$ defined as
\begin{align}
  \tilde{N}_i = \sum_{j=1}^{n'}\ind{B_j = 0}\ind{X_j = i}, \tilde{N}'_i = \sum_{j=1}^{n'}\ind{B_j = 1}\ind{X_j = i}.
\end{align}
$\tilde{N}$ and $\tilde{N}'$ are independent histograms such that $\tilde{N}_i,\tilde{N}'_i \sim \Poi(np_i)$.

Given $\tilde{N}'$, we determine whether the bias-corrected plugin estimator or the best polynomial estimator should be employed for each alphabet. Let $\Delta_{n,k}$ be a threshold depending on $n$ and $k$ to determine which estimator is employed. We apply the best polynomial estimator if $\tilde{N}'_i < 2\Delta_{n,k}$, and otherwise, i.e., $\tilde{N}'_i \ge 2\Delta_{n,k}$, we apply the bias-corrected plugin estimator. Let $\phi_{\mathrm{poly}}$ and $\phi_{\mathrm{plugin}}$ be the best polynomial estimator and the bias-corrected plugin estimator for $\phi$, respectively. Then, the estimator of $\theta$ is written as
\begin{align}
  \hat\theta(\tilde{N})\!=\!\sum_{i=1}^k \paren*{ \ind{\tilde{N}'_i \ge 2\Delta_{n,k}}\phi_{\mathrm{plugin}}(\tilde{N}_i)\!+\!\ind{\tilde{N}'_i < 2\Delta_{n,k}}\phi_{\mathrm{poly}}(\tilde{N}_i) }. \label{eq:hat-theta}
\end{align}
The best polynomial estimator $\phi_{\mathrm{poly}}$ is an unbiased estimator of the best polynomial of $\phi$ for a appropriate domain $I$. The detail of the best polynomial estimator can be found in \cref{sec:best-poly}. For $\phi_{\mathrm{plugin}}$, we previously employ the plugin estimator with Miller's bias correction~\autocite{DBLP:journals/corr/FukuchiS17}, which is described in \cref{sec:second-bias-correct}.

\subsection{Best Polynomial Estimator}\label{sec:best-poly}
Let $\cbrace{a_m}_{m=0}^L$ be coefficients of the polynomial that achieves the best approximation of $\phi$ by a degree-$L$ polynomial with range $I=[0,\frac{4\Delta_{n,k}}{n}]$; that is, the best approximation polynomial of $\phi$ is written as
\begin{align}
 \phi_L(p_i) =& \sum_{m=0}^L a_m p_i^m. \label{eq:approx-poly}
\end{align}
We utilize the factorial moments to construct an unbiased estimator of \cref{eq:approx-poly}. The $m$th factorial moment is defined as $(\tilde{N}_i)_m = \tfrac{\tilde{N}_i!}{(\tilde{N}_i - m)!}$ of which expectation is $(np_i)^m$. Thus, the following estimator is an unbiased estimator of $\phi_L$.
\begin{align}
  \bar\phi_{\mathrm{poly}}(\tilde{N}_i) = \sum_{m=0}^L \frac{a_m}{n^j} (\tilde{N}_i)_m.
\end{align}
We truncate $\bar\phi_{\mathrm{poly}}$ so that it is not outside of the domain of $\phi(p)$. Let $\phi_{{\mathrm{inf}},\frac{\Delta_{n,k}}{n}} = \inf_{p \in [0,\frac{\Delta_{n,k}}{n}]} \phi(p)$ and $\phi_{{\mathrm{sup}},\frac{\Delta_{n,k}}{n}} = \sup_{p \in [0,\frac{\Delta_{n,k}}{n}]} \phi(p)$. Then, the best polynomial estimator is defined as
\begin{align}
 \phi_{\mathrm{poly}}(\tilde{N}_i) = \paren*{\bar\phi_{\mathrm{poly}}(\tilde{N}_i) \land \phi_{{\mathrm{sup}},\frac{\Delta_{n,k}}{n}}} \lor \phi_{{\mathrm{inf}},\frac{\Delta_{n,k}}{n}}.
\end{align}

\subsection{Second Order Bias Correction Estimator}\label{sec:second-bias-correct}
In \autocite{DBLP:journals/corr/FukuchiS17}, we employ the plugin estimator with Miller's bias correction~\autocite{miller1955nbi} as $\phi_{\mathrm{plugin}}$. The bias correction offsets the second order bias, which is obtained as follows.
\begin{align}
  \Mean\bracket*{\phi\paren*{\frac{\tilde{N}}{n}} - \phi(p)} \approx& \Mean\bracket*{\frac{\phi^{(2)}(p)}{2}\paren*{\frac{\tilde{N}}{n} - p}^2} = \frac{p\phi^{(2)}(p)}{2n},
\end{align}
where $\tilde{N} \sim \Poi(np)$ for $p \in (0,1)$. The bias corrected function is hence obtained as $\phi_2(p) = \phi(p) - \frac{p\phi^{(2)}(p)}{2n}$.

Besides, we converted $\phi_2$ so that it becomes smooth by using the generalized Hermite interpolation~\autocite{spitball1960generalization}. The generalized Hermite interpolation between $\phi(a)$ and $\phi(b)$ is obtained as
\begin{align}
  H_L(p;\phi,a,b) = \phi(a) + \sum_{m=1}^L\frac{\phi^{(m)}(a)}{m!}(p-a)^m\sum_{\ell=0}^{L-m}\frac{L+1}{L+\ell+1}\Beta_{\ell,L+\ell+1}\paren*{\frac{p-a}{b-a}},
\end{align}
where $\Beta_{\nu,n}(x)=\binom n\nu x^\nu(1-x)^{n-\nu}$ denotes the Bernstein basis polynomial. Then, $H_L^{(i)}(a;\phi,a,b) = \phi^{(i)}(a)$ for $i = 0,...,L$ and $H_L^{(i)}(b;\phi,a,b) = 0$ for $i = 1,...,L$. Given an integer $L > 0$, an positive real $\Delta > 0$, and a function $\phi$, define a functional:
\begin{align}
  H_{L,\Delta}[\phi](p) = \begin{dcases}
    H_L\paren*{\frac{\Delta}{2};\phi,\Delta,\frac{\Delta}{2}} & \textif p \le \frac{\Delta}{2}, \\
    H_L\paren*{p;\phi,\Delta,\frac{\Delta}{2}} & \textif \frac{\Delta}{2} < p < \Delta, \\
    \phi(p) & \textif \frac{\Delta_{n,k}}{n} \le p \le 2, \\
    H_L\paren*{p;\phi,1,2} & \textif 1 < p < 2, \\
    H_L\paren*{2;\phi,1,2} & \textif p \ge 2.
  \end{dcases}
\end{align}
For a function $\phi$ of which the $\ell$th divergence speed is $p^\alpha$, this functional satisfies $\abs*{H_{L,\Delta}[\phi](p)} \lesssim \Delta^{\alpha-s}$ for $s = \ceil{\alpha},...,L$, $\ell \ge L$, and any $p \in (0,1)$.

Combining the bias correction and the smoothing procedure yields the second order bias corrected plugin estimator. To ensure smoothness of $\phi_2$, we set $L = 4$. Then, the estimator is given as follows.
\begin{align}
  \phi_{\mathrm{plugin}}(\tilde{N}_i) = H_{4,\frac{\Delta_{n,k}}{n}}[\phi]\paren*{\frac{\tilde{N}_i}{n}} - \frac{\tilde{N}_i}{2n^2}H^{(2)}_{4,\frac{\Delta_{n,k}}{n}}\bracket*{\phi}\paren*{\frac{\tilde{N}_i}{n}}.
\end{align}

\section{Main Results}

Our main results are revealing the minimax optimal rate of the additive functional estimation in characterizing with the divergence speed. We derive the minimax optimal rates for each range of $\phi$; $\alpha \in (1,3/2)$ and $\alpha \in [3/2,2]$. First, we derive the minimax optimal rate for $\alpha \in (1,3/2)$. In this case, the rate is obtained as follows.
\begin{theorem}\label{thm:optimal-rate-1-3/2}
  Suppose $\phi:[0,1]\to\RealSet$ is a function of which the sixth divergence speed is $p^{\alpha}$ for $\alpha \in (1,3/2)$. If $n \gtrsim k^{1/\alpha}/\ln k$, the minimax optimal rate is obtained as
  \begin{align}
    R^*(n,k;\phi) \asymp \frac{k^2}{(n\ln n)^{2\alpha}} + \frac{1}{n}, \label{eq:optiaml-rate-1-3/2}
  \end{align}
  otherwise there is no consistent estimator.
\end{theorem}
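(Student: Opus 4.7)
My plan decomposes into an upper bound, a matching lower bound, and the inconsistency regime. For the upper bound I would instantiate the construction of \cref{sec:basic-construction} with threshold $\Delta_{n,k}\asymp \ln n$ and polynomial degree $L\asymp \ln n$, but strengthen $\phi_{\mathrm{plugin}}$ to a fourth-order bias-corrected plugin in place of the second-order (Miller) correction of \cref{sec:second-bias-correct}. Concretely, using the closed-form Poisson factorial moments I build $\phi_{\mathrm{plugin}}$ that offsets the $n^{-1}$, $n^{-2}$, and $n^{-3}$ leading terms of the Taylor expansion of the plugin bias; the remainder then involves $\phi^{(5)}$ and $\phi^{(6)}$, which by the sixth-divergence-speed hypothesis are pointwise bounded by multiples of $p^{\alpha-5}$ and $p^{\alpha-6}$, giving a per-coordinate residual bias of order $p^{\alpha}n^{-3}$ in the plugin regime $p\gtrsim \ln n/n$. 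Applying the smoothing functional $H_{L,\Delta_{n,k}/n}$ with $L\ge 6$ makes the corrected integrand globally smooth so that the usual variance bound transfers. Summing over the $k$ coordinates, the polynomial-regime squared bias is $\asymp k^{2}E_{L}(\phi,[0,4\Delta_{n,k}/n])^{2}\asymp k^{2}/(n\ln n)^{2\alpha}$ thanks to the classical rate $E_{L}(p^{\alpha},[0,h])\asymp h^{\alpha}L^{-2\alpha}$, the plugin-regime squared bias is of strictly smaller order, and the variance of both parts is $O(1/n)$ since $\sum_{i}p_{i}^{2\alpha-1}\le 1$ when $\alpha\ge 1$; together these deliver the claimed upper bound.

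The lower bound splits as $1/n+k^{2}/(n\ln n)^{2\alpha}$ and I prove the two terms separately. For the $1/n$ piece I use a two-point Le Cam argument with $P_{0}$ and $P_{1}$ differing only in two coordinates bounded away from $0$ and $1$, perturbed by $\Theta(1/\sqrt n)$ along a direction in which $\phi'$ is nonvanishing (which exists because the divergence-speed hypothesis precludes $\phi$ from being affine), so that $|\theta(P_{0})-\theta(P_{1})|\asymp 1/\sqrt n$ while the total variation of $P_{i}^{\otimes n}$ stays bounded. For the $k^{2}/(n\ln n)^{2\alpha}$ piece I adapt the duality technique of \textcite{wu2016minimax}: construct two probability measures $\mu_{0},\mu_{1}$ supported on $[0,c\ln n/n]$ whose first $L\asymp \ln n$ moments coincide but whose expected $\phi$-values differ by $\Omega(E_{L}(\phi,[0,c\ln n/n]))$, then form priors on $P$ as $k$ i.i.d. draws (conditioned on $\sum p_{i}=1$). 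Moment matching makes the induced mixtures of $k$ independent Poissons statistically indistinguishable to order $L$, while the expected functional gap is $\asymp k\cdot E_{L}(\phi,[0,c\ln n/n])$, and squaring yields the stated lower bound.

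The main obstacle I foresee is proving $E_{L}(\phi,[0,c\ln n/n])\gtrsim (\ln n/n)^{\alpha}L^{-2\alpha}$ from the divergence-speed hypothesis alone, since the sharp Ditzian--Totik estimates are stated only for the pure power $p^{\alpha}$. My plan is to isolate the singular part: integrating the bound $|\phi^{(6)}(p)|\asymp p^{\alpha-6}$ six times expresses $\phi(p)=c p^{\alpha}+Q_{5}(p)+r(p)$ on $[0,h]$ where $Q_{5}$ is a polynomial of degree at most $5$ (hence absorbed into the degree-$L$ approximant whenever $L\ge 5$) and $r$ is a remainder whose $L^{\infty}$ norm on $[0,h]$ is $o(h^{\alpha}L^{-2\alpha})$ coming from the bounded slack $c'_{6}$ in \cref{def:div-speed}; the classical power-function lower bound then transfers to $\phi$ up to the leading constant $|c|$. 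The inconsistency claim is an immediate corollary of this lower bound: when $n\lesssim k^{1/\alpha}/\ln k$ one has $\ln n\asymp \ln k$, hence $n\ln n\lesssim k^{1/\alpha}$, so $k^{2}/(n\ln n)^{2\alpha}\gtrsim 1$ and no consistent estimator can exist.
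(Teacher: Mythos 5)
Your overall architecture matches the paper's: the same hybrid estimator (best-polynomial on small counts, fourth-order bias-corrected and Hermite-smoothed plugin on large counts, $\Delta_{n,k}\asymp L\asymp\ln n$), the same bias/variance accounting for the upper bound, a two-point Le Cam argument for the $1/n$ term, and a Wu--Yang-style moment-matching prior for the $k^2/(n\ln n)^{2\alpha}$ term. The one place where you genuinely diverge is the key approximation-theoretic lower bound. The paper does not bound $E_L(\phi,[0,h])$ directly: because the prior construction pins the first moment via the tilting $\phi^\star(p)=\phi(p)/p$ (\cref{lem:best-approx-solution}), the quantity that enters is $E_L(\phi^\star,[\gamma,2L^2\gamma])$ with a \emph{positive} left endpoint, and the paper lower-bounds it (\cref{thm:lower-poly-approx}) through the Ditzian--Totik modulus together with the direct and converse theorems. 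Your alternative --- integrate the two-sided bound on $|\phi^{(6)}|$ to write $\phi=cp^{\alpha}+Q_5+r$ with $\|r\|_{L^\infty[0,h]}\lesssim h^{6}=o(h^{\alpha}L^{-2\alpha})$, absorb $Q_5$ into the approximant, and invoke the classical Bernstein lower bound for $E_L(x^{\alpha},[0,h])$ --- is sound (the sign of $\phi^{(6)}$ is constant near $0$ by the lower part of \cref{def:div-speed}, so the extraction of the singular part is legitimate, and the bounded sixth derivative of the residue gives one-sided derivatives at $0$ so the Taylor expansion about $0$ is valid). It is arguably more elementary and more transparent than the paper's converse-theorem argument; its cost is that you must redo it for $\phi^\star$ once you adopt the mean-pinning construction, which works since $\phi^\star=cp^{\alpha-1}+Q_5(p)/p+r(p)/p$ and $Q_5(p)/p$ is again a polynomial when $\phi(0)=0$.

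Two points need more care than your sketch gives them. First, in the two-point bound, ``a direction in which $\phi^{(1)}$ is nonvanishing'' is not the right condition: $\theta(\cdot;\phi)$ is invariant under adding affine functions to $\phi$, so what you need is $\phi^{(1)}(\xi_1)\ne\phi^{(1)}(\xi_2)$ at the two perturbed mass levels; the paper extracts this quantitatively from the \emph{lower} part of the divergence-speed bound on $|\phi^{(2)}|$ (\cref{Sec:proof-thm-lower1}), and your argument should do the same rather than appeal to non-affineness qualitatively. Second, ``$k$ i.i.d.\ draws conditioned on $\sum p_i=1$'' is not directly usable --- exact conditioning destroys the product structure that makes the Poissonized mixtures computable; the standard fix (used in \cref{thm:lower-approximated,thm:approx-tv-lower}) is to work on the enlarged class $\dom{M}_k(\epsilon)$ of approximate probability vectors, restrict to a high-probability event, and pay the resulting additive error terms. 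Neither issue is fatal, but both are where the actual work sits.
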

For this range, \textcite{jiao2015minimax} showed the minimax optimal rate for $\phi(p) = p^{\alpha}$ as $\frac{1}{(n\ln n)^{2\alpha-2}}$ under condition that $k \asymp n\ln n$. In contrast, we success to prove the minimax optimal rate for this range without the condition $k \asymp n\ln n$. The first term in \cref{eq:optiaml-rate-1-3/2} corresponds to their result because it is same as their result under the condition they assume. To derive the upper bound, we introduce the fourth order bias correction which offsets bias of the plugin estimator up to fourth order Taylor approximation in the similar manner of the Miller's bias correction~\autocite{miller1955nbi}. This technique will be explained in \cref{sec:forth-order}. For proving the lower bound, we follows the same manner of the analysis given by \textcite{wu2016minimax}, in which the minimax lower bound is connected to the lower bound on the best polynomial approximation. Our careful analysis of the best polynomial approximation yields the lower bound~(in \cref{sec:lower-1-3/2}).

Second, the minimax optimal rate for $\alpha \in [3/2,2]$ is obtained as follows.
\begin{theorem}\label{thm:optimal-rate-3/2}
  Suppose $\phi:[0,1]\to\RealSet$ is a function of which second divergence speed is $p^{\alpha}$ for $\alpha \in [3/2,2]$. Then, the minimax optimal rate is obtained as
  \begin{align}
    R^*(n,k;\phi) \asymp \frac{1}{n}.
  \end{align}
\end{theorem}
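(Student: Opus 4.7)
The plan is to show the rate in two parts. For the upper bound I would analyze the plugin estimator $\hat\theta = \sum_i \phi(\tilde N_i/n)$ (with $\tilde N_i/n$ clamped at $1$ when needed) under Poisson sampling and use the equivalence $R^* \asymp \tilde R^*$ from the Preliminaries. Since $\alpha > 1$, the divergence speed bound $|\phi''(t)| \le W_2 t^{\alpha-2} + c_2$ is integrable on $[0,1]$, so $\phi'$ is bounded: $\|\phi'\|_\infty \le M$ for some finite $M$. Combined with independence of the $\tilde N_i$, this yields the variance bound $\sum_i \mathrm{Var}(\phi(\tilde N_i/n)) \le M^2 \sum_i p_i/n = M^2/n$ essentially for free.

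The bias is the technical piece. I would decompose $\phi(x) = \phi(0) + \phi'(0)x + g(x)$ with $g(x) = \int_0^x(x-t)\phi''(t)dt$, so that $|g(x)| \lesssim x^\alpha$ (by integrating the divergence speed bound twice) and the affine part contributes zero bias under $\Mean[\tilde N_i/n] = p_i$; it then suffices to control $\sum_i (\Mean[g(\tilde N_i/n)] - g(p_i))$. I would split the coordinates at $np_i = 1$: for $np_i \le 1$, a direct Poisson moment computation gives $\Mean[(\tilde N_i/n)^\alpha] \lesssim n^{-\alpha}(np_i + (np_i)^2) \lesssim p_i n^{1-\alpha}$ (using $j^\alpha \le j^2$ for $j \ge 1$ and $\alpha \le 2$), and combined with $|g(p_i)| \lesssim p_i^\alpha \le p_i n^{1-\alpha}$ this bounds the per-coordinate bias by $p_i n^{1-\alpha}$; for $np_i > 1$, second-order Taylor expansion about $p_i$ together with $|\phi''(\xi)| \lesssim p_i^{\alpha-2}$ on the bulk event $\{|\tilde N_i/n - p_i| \le p_i/2\}$ gives bulk contribution $\lesssim p_i^{\alpha-2}\mathrm{Var}(\tilde N_i/n) = p_i^{\alpha-1}/n$, while the complementary event, of Poisson probability $\lesssim e^{-c n p_i}$, contributes at most $\lesssim p_i^{\alpha-1}e^{-cnp_i/2}$, which also optimizes to $\lesssim n^{1-\alpha}$. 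Summing over $i$ using $\sum p_i = 1$ and $p_i^{\alpha-2} \le n^{2-\alpha}$ whenever $p_i > 1/n$, the total bias is $\lesssim n^{1-\alpha} \le n^{-1/2}$ for $\alpha \ge 3/2$, so the squared bias is $\lesssim 1/n$ and the plugin risk is $O(1/n)$.

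For the matching lower bound I would reduce to $k = 2$ and apply Le Cam's two-point method. The divergence speed lower bound $|\phi''(p)| \ge W_2 p^{\alpha-2} - c'_2$ keeps $\phi''$ bounded away from zero near the origin, so $\phi'$ is nonconstant, and generically one can pick $p^* \in (0,1)$ with $c_* := \phi'(p^*) - \phi'(1-p^*) \ne 0$; in the edge case where $\phi'$ happens to be symmetric about $1/2$, the two-sided divergence speed forces $\alpha = 2$ with $\phi''$ constant, so $\theta$ is affine in $\sum p_i^2$ and the $\Omega(1/n)$ rate follows from the classical two-point bound for $\sum p_i^2$. Setting $P_b = (p^* + b\epsilon, 1 - p^* - b\epsilon, 0, \ldots, 0)$ for $b \in \{0,1\}$ with $\epsilon = c_0/\sqrt n$ and $c_0$ small, one has $|\theta(P_1) - \theta(P_0)| = |c_*|\epsilon + O(\epsilon^2) \asymp 1/\sqrt n$, while $\mathrm{KL}(P_1^{\otimes n} \| P_0^{\otimes n}) \le n\,\chi^2(P_1 \| P_0) = O(n\epsilon^2) = O(1)$, so Le Cam's inequality delivers $R^*(n,k;\phi) \gtrsim \epsilon^2 \asymp 1/n$. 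The hard part will be the intermediate regime $np_i \asymp 1$ in the bias analysis, where neither the Poisson-moment bound nor the second-order Taylor bound is individually sharp and the two estimates must be matched at the threshold; additionally, the Poisson tail event $\{\tilde N_i > n\}$ (on which $\phi$ is formally outside its domain) has to be absorbed by a clamping or smooth-extension argument that does not inflate the per-coordinate bias.
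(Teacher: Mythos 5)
Your overall architecture matches the paper's: plugin estimator for the upper bound with variance $O(1/n)$ from the boundedness of $\phi^{(1)}$ (which does follow from the divergence speed assumption with $\alpha>1$, as in \cref{lem:div-speed-holder}), and a LeCam two-point argument with a perturbation of size $1/\sqrt{n}$ for the lower bound (the paper's \cref{thm:lower1}). Where you genuinely diverge is the bias analysis. The paper does not Taylor-expand coordinatewise with a bulk/tail split; it invokes the identity of \textcite{7997814} that the bias of the binomial plugin estimator equals a Bernstein polynomial approximation error, bounds that by the second-order modulus of smoothness $\omega^2(\phi,\sqrt{p(1-p)/n})$ (\cref{lem:bias-moduli}), and then shows $\omega^2(\phi,t)\lesssim t\,\omega^1(\phi^{(1)},t)\lesssim t^{\alpha}$ using the $(\alpha-1)$-H\"older continuity of $\phi^{(1)}$ (\cref{lem:moduli-bound}). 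This yields the uniform per-coordinate bound $(p_i/n)^{\alpha/2}$ with no case analysis on $np_i$ beyond the split at $p_i=1/n$, and it works directly on the multinomial so the domain issue $\tilde N_i>n$ never arises. Your bare-hands route is more elementary and self-contained, but it buys you exactly the headache you flag at the end: the intermediate regime $np_i\asymp 1$ must be fought by hand.

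And that is where your sketch, as written, breaks. Your tail bound ``$\lesssim p_i^{\alpha-1}e^{-cnp_i/2}$, which optimizes to $n^{1-\alpha}$'' is a per-coordinate bound; there can be $\Theta(n)$ coordinates with $np_i\asymp 1$, each contributing $\asymp n^{1-\alpha}$, so the summed tail contribution is $n^{2-\alpha}=\sqrt{n}$ at $\alpha=3/2$, not $n^{1-\alpha}$. (The same issue hits the first-order term you lose by truncating the Taylor expansion to the bulk event, since $\Mean[(\tilde N_i/n-p_i)\ind{\mathrm{bulk}}]\neq 0$.) The fix is a per-coordinate tail bound of order $n^{-\alpha}$ in that regime, obtainable from $|g(x)|\lesssim x^{\alpha}$ together with H\"older/Cauchy--Schwarz against Poisson moments — essentially extending your $np_i\le 1$ moment computation rather than the Taylor argument to cover $np_i\lesssim 1$ — after which everything sums to $n^{1-\alpha}$ as required. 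A second, smaller soft spot is the lower bound's degenerate case: $\phi^{(1)}(p)\equiv\phi^{(1)}(1-p)$ indeed forces $\alpha=2$, but it does not force $\phi^{(2)}$ constant (e.g.\ $\phi^{(1)}(p)=p(1-p)$ is consistent with second divergence speed $p^{2}$ since $c_2'\ge W_2$ is allowed), so the ``affine in $\sum p_i^2$'' fallback does not cover it. The paper's construction $P=(1-p,\tfrac{p}{k-1},\dots,\tfrac{p}{k-1})$ sidesteps this by normalizing $\phi^{(1)}(0)=0$ and using the H\"older bound to make $\phi^{(1)}$ at the small atoms negligible, so only the derivative near the large atom matters.
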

\begin{remark}
  The meaning of the second divergence speed is $p^2$ is that the second derivative of $\phi$ is bounded. Since a function of which the $\ell$th divergence speed is $p^\alpha$ for any $\alpha \ge 2$ and $\ell \ge \alpha$ has bounded second derivative, this result covers all case for $\alpha \ge 3/2$.
\end{remark}
The lower bound can be obtained easily by application of LeCam's two point method~\autocite{LeCam:1986:AMS:20451}. The upper bound is obtained by employing the plugin estimator; its analysis is easy if $\alpha = 2$ because the second derivative of $\phi$ is bounded. For $\alpha \in [3/2,2)$, we extend the analysis of $\phi(p)=p^\alpha$ given by \textcite{7997814} to be applicable to general $\phi$ in \cref{sec:plugin-3/2-2}.

{\bfseries H\"older continuousness under the divergence speed assumption.}
Before moving to the detailed description of analyses, we show the useful property of $\phi$ regarding the H\"older continuousness, which comes from the divergence speed assumption. For a real $\bar\alpha \in (0,1]$, a function $\phi$ is $\bar\alpha$-H\"older continuous if
\begin{align}
  \norm{\phi}_{C^{0,\bar\alpha}} = \sup_{x \ne y \in I}\frac{\abs*{\phi(x) - \phi(y)}}{\abs*{x - y}^{\bar\alpha}} < \infty.
\end{align}
In particular, $1$-H\"older continuous is called as Lipschitz continuous. If the divergence speed of $\phi$ is $p^\alpha$ for $\alpha \in (1,2]$, the following properties regarding the H\"older continuousness hold.
\begin{lemma}\label{lem:div-speed-holder}
  Suppose $\phi:[0,1]\to\RealSet$ is a function of which second divergence speed is $p^{\alpha}$ for $\alpha \in (1,2]$ such that $\phi^{(1)}(0) = 0$. Then, $\phi$ is Lipschitz continuous, and $\phi^{(1)}$ is $(\alpha-1)$-H\"older continuous.
\end{lemma}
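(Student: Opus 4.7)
The plan is to use the divergence speed assumption to bound the integral of $|\phi^{(2)}|$, then recover Hölder continuity of $\phi^{(1)}$ and Lipschitz continuity of $\phi$ by integration. The condition $\phi^{(1)}(0)=0$ lets us anchor $\phi^{(1)}$ at the origin, and the hypothesis $\alpha>1$ makes the singular factor $p^{\alpha-2}$ integrable on $(0,1)$, which is the crucial integrability needed to push the argument through.

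First, by Definition~\ref{def:div-speed} applied with $\ell=2$, there exist constants $W_2>0$ and $c_2\ge 0$ such that $|\phi^{(2)}(p)|\le W_2 p^{\alpha-2}+c_2$ for all $p\in(0,1)$. Since $\alpha\in(1,2]$, we have $\alpha-2\in(-1,0]$, so $p^{\alpha-2}$ is integrable on $(0,1)$. Using $\phi^{(1)}(0)=0$, for $0\le y\le x\le 1$ I would write $\phi^{(1)}(x)-\phi^{(1)}(y)=\int_y^x \phi^{(2)}(t)\,dt$ and estimate
\begin{align}
  \abs*{\phi^{(1)}(x)-\phi^{(1)}(y)}\le W_2\int_y^x t^{\alpha-2}\,dt + c_2(x-y) = \frac{W_2}{\alpha-1}\paren*{x^{\alpha-1}-y^{\alpha-1}} + c_2(x-y).
\end{align}

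Next, I would invoke the subadditivity of $t\mapsto t^{\alpha-1}$ on $[0,\infty)$ (which holds since $\alpha-1\in(0,1]$, and is trivial when $\alpha-1=1$): namely $x^{\alpha-1}=((x-y)+y)^{\alpha-1}\le (x-y)^{\alpha-1}+y^{\alpha-1}$, i.e.\ $x^{\alpha-1}-y^{\alpha-1}\le (x-y)^{\alpha-1}$. Since $x-y\in[0,1]$ and $\alpha-1\in(0,1]$, we also have $x-y\le (x-y)^{\alpha-1}$. Combining,
\begin{align}
  \abs*{\phi^{(1)}(x)-\phi^{(1)}(y)}\le \paren*{\frac{W_2}{\alpha-1}+c_2}(x-y)^{\alpha-1},
\end{align}
which gives $\norm{\phi^{(1)}}_{C^{0,\alpha-1}}<\infty$, proving the second claim.

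Finally, taking $y=0$ in the displayed bound yields $|\phi^{(1)}(x)|\lesssim x^{\alpha-1}\le 1$ for all $x\in[0,1]$, so $\phi^{(1)}$ is bounded on $[0,1]$. Hence $|\phi(x)-\phi(y)|\le \norm{\phi^{(1)}}_\infty |x-y|$ by the mean value theorem, establishing that $\phi$ is Lipschitz continuous. There is no serious obstacle here; the only subtlety is verifying that $\phi^{(1)}$ admits the integral representation from $0$, which is guaranteed by $\alpha>1$ (integrability of $p^{\alpha-2}$ near $0$) together with $\phi^{(1)}(0)=0$.
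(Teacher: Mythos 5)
Your proposal is correct and follows essentially the same route as the paper's proof: bound $|\phi^{(2)}(p)|$ by $W_2p^{\alpha-2}+c_2$ via the divergence speed assumption, integrate from $0$ (using $\phi^{(1)}(0)=0$) to bound $\phi^{(1)}$ and from $y$ to $x$ to get the Hölder estimate, and then use the $(\alpha-1)$-Hölder continuity of $t\mapsto t^{\alpha-1}$ (which you justify by subadditivity, where the paper simply cites it). The only cosmetic difference is that you derive boundedness of $\phi^{(1)}$ as a corollary of the Hölder bound at $y=0$ rather than as a separate first step.
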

Note that we can assume $\phi^{(1)}(0)=0$ without loss of generality because, for any $c \in \RealSet$, $\theta(P;\phi) = \theta(P;\phi_c)$ where $\phi_c(p) = \phi(p) + c(p-1/k)$.

\section{Estimators and Upper Bound Analysis}

In this section, we introduce the minimax optimal estimators for each range of $\alpha$. For $\alpha \in (1,3/2)$, we employ the basic construction described in \cref{sec:basic-construction}. We thus describe construction of $\phi_{\mathrm{plugin}}$ in \cref{sec:forth-order}. Besides, we analyze the bias and the variance of $\phi_{\mathrm{plugin}}$ and $\phi_{\mathrm{poly}}$. For $\alpha \in [3/2,2]$, we employ the plugin estimator, whereas it is non-trivial from the existing results that the bias can be bounded by $1/n$. In \cref{sec:plugin-3/2-2}, we will prove the bound on the bias.

\subsection{Fourth order bias correction for $\alpha \in (1,3/2)$}\label{sec:forth-order}

As mentioned before, the second order bias correction offsets the second order approximation of bias. In analogy with that, the fourth order bias correction offsets the fourth order approximation of bias. By the Taylor approximation, the bias of $\phi_2$ is obtained as
\begin{align}
  \Mean\bracket*{\phi_2\paren*{\frac{\tilde{N}}{n}} - \phi(p)} \approx& \frac{2p\phi^{(3)}}{3n^2} + \frac{7p\phi^{(4)}}{24n^3} + \frac{3p^2\phi^{(4)}}{8n^2}.
\end{align}
Thus, the fourth order bias corrected function is obtained as $\phi_4(p) = \phi_2(p) - \frac{2p\phi^{(3)}}{3n^2} - \frac{7p\phi^{(4)}}{24n^3} - \frac{3p^2\phi^{(4)}}{8n^2}$.

As well as the second order bias correction, we employ the smoothing procedure using the generalized Hermite interpolation described in \cref{sec:second-bias-correct}. To ensure the smoothness of $\phi_4$, we use $L = 6$. Hence, the estimator is obtained as
\begin{multline}
  \phi_{\mathrm{plugin}}(\tilde{N}_i) = H_{6,\frac{\Delta_{n,k}}{n}}[\phi]\paren*{\frac{\tilde{N}_i}{n}} - \frac{\tilde{N}_i}{2n^2}H^{(2)}_{6,\frac{\Delta_{n,k}}{n}}\bracket*{\phi}\paren*{\frac{\tilde{N}_i}{n}} \\ - \frac{2\tilde{N}_i}{3n^3}H^{(3)}_{6,\frac{\Delta_{n,k}}{n}}\bracket*{\phi}\paren*{\frac{\tilde{N}_i}{n}} - \frac{7\tilde{N}_i}{24n^4}H^{(4)}_{6,\frac{\Delta_{n,k}}{n}}\bracket*{\phi}\paren*{\frac{\tilde{N}_i}{n}} - \frac{3\tilde{N}^2_i}{8n^4}H^{(4)}_{6,\frac{\Delta_{n,k}}{n}}\bracket*{\phi}\paren*{\frac{\tilde{N}_i}{n}}. \label{eq:fourth-bias-correct}
\end{multline}

For analysis of the estimator in \cref{eq:fourth-bias-correct}, we define a function as
\begin{align}
  \bar\phi_{4,\Delta}(p)\!=\!H_{6,\Delta}[\phi](p)\!-\!\frac{p}{2n}H^{(2)}_{6,\Delta}[\phi](p)\!-\!\frac{2p}{3n^2}H^{(3)}_{6,\Delta}[\phi](p)\!-\!\frac{7p}{24n^3}H^{(4)}_{6,\Delta}[\phi](p)\!-\!\frac{3p^2}{8n^2}H^{(4)}_{6,\Delta}[\phi](p).
\end{align}
We analyze the bias and the variance of $\bar\phi_{4,\Delta}(\tilde{N}/n)$.
\begin{lemma}\label{lem:plugin-bias-1-3/2}
 Suppose $\phi:[0,1]\to\RealSet$ is a function of which the sixth divergence speed is $p^{\alpha}$ for $\alpha \in (1,3/2)$. Suppose $\frac{1}{n} \lesssim \Delta < p \le 1$. Let $\tilde{N} \sim \Poi(np)$. Then, we have
 \begin{align}
  \Bias\bracket*{\bar\phi_{4,\Delta}\paren*{\frac{\tilde{N}}{n}} - \phi(p)} \lesssim \frac{1}{n^3\Delta^{3-\alpha}}.
 \end{align}
\end{lemma}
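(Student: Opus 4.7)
The plan is to Taylor-expand each of the five pieces of $\bar\phi_{4,\Delta}(\tilde N/n)$ about $p$ in $Y := \tilde N/n - p$, to orders just high enough that, on the bulk, the remainders involve only derivatives $\phi^{(j)}$ with $j \le 6$; take expectations using the closed-form centred Poisson moments; verify the algebraic cancellation through order $1/n^2$ that was built into the construction; and bound the residue by $1/(n^3\Delta^{3-\alpha})$ via the sixth divergence speed hypothesis.

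\textbf{Setup.} Because $\Delta < p \le 1$, $H_{6,\Delta}[\phi]$ agrees with $\phi$ on a full neighbourhood of $p$, so every derivative $\bar\phi_{4,\Delta}^{(k)}(p)$ equals the corresponding derivative of the un-smoothed function
\[
\bar\phi_4(q) := \phi(q) - \tfrac{q\phi^{(2)}(q)}{2n} - \tfrac{2q\phi^{(3)}(q)}{3n^2} - \tfrac{7q\phi^{(4)}(q)}{24n^3} - \tfrac{3q^2\phi^{(4)}(q)}{8n^2},
\]
which involves only $\phi^{(0)},\ldots,\phi^{(4)}$. The Hermite patches enter only through the dependence on $X:=\tilde N/n$ when $X$ strays into the boundary regions $[\Delta/2,\Delta)$ or $(1,2]$; such strays are exponentially rare under $\mathrm{Poi}(np)$ when $p > \Delta$ and will be absorbed by a bulk/tail split.

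\textbf{Core algebraic step.} I would expand $H(X)$ to order~$5$, $X H^{(2)}(X)$ to order~$3$, $X H^{(3)}(X)$ to order~$2$, and each of $X H^{(4)}(X)$ and $X^2 H^{(4)}(X)$ to order~$1$; by the Leibniz rule these are the minimal orders for which the Taylor remainders involve only $\phi^{(5)}$ and $\phi^{(6)}$. Substitute the centred Poisson moments
\[
\mathbb{E}[Y]=0,\quad \mathbb{E}[Y^2]=\tfrac{p}{n},\quad \mathbb{E}[Y^3]=\tfrac{p}{n^2},\quad \mathbb{E}[Y^4]=\tfrac{3p^2}{n^2}+\tfrac{p}{n^3},\quad \mathbb{E}[Y^5]=\tfrac{10p^2}{n^3}+\tfrac{p}{n^4},
\]
and collect by order in $1/n$. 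The Miller subtraction $-\tfrac{X}{2n}H^{(2)}(X)$ kills the $1/n$ coefficient (\cref{sec:second-bias-correct}), and the three further subtractions are chosen precisely to kill the $1/n^2$ coefficients and the leading $1/n^3$ term involving $\phi^{(4)}$, as recalled in the bias formula $\tfrac{2p\phi^{(3)}}{3n^2}+\tfrac{3p^2\phi^{(4)}}{8n^2}+\tfrac{7p\phi^{(4)}}{24n^3}$ listed just above the lemma. The surviving terms are therefore all of the form $\phi^{(j)}(p)\,\mathbb{E}[Y^m]/(c\,n^r)$ with total $1/n$-weight three, plus Taylor remainders involving $\phi^{(5)}$ or $\phi^{(6)}$ at an intermediate point $\xi$ between $p$ and $X$.

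\textbf{Bounding the residue and main obstacle.} Using $|\phi^{(j)}(p)|\lesssim p^{\alpha-j}$ for $j=2,\ldots,6$ together with the explicit Poisson moment bounds (for $p \ge 1/n$ every $\mathbb{E}[Y^m]$ is dominated by its leading term), every explicit surviving term is at most $p^{\alpha-3}/n^3$. For the Taylor remainders I would split on the bulk event $\mathcal{E}:=\{X \ge p/2\}$: on $\mathcal{E}$ any intermediate $\xi$ satisfies $\xi \asymp p$, so $|\phi^{(6)}(\xi)| \lesssim p^{\alpha-6}$ and the contribution is again $\lesssim p^{\alpha-3}/n^3$; on $\mathcal{E}^c$, Poisson concentration gives $\mathbb{P}(\mathcal{E}^c) \lesssim e^{-cnp} \le e^{-cn\Delta}$ which, combined with the uniform derivative bound $|H_{6,\Delta}^{(s)}[\phi]|\lesssim \Delta^{\alpha-s}$ recalled at the end of \cref{sec:second-bias-correct}, absorbs the tail into the target once $n\Delta \gtrsim 1$. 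Since $\alpha-3<0$ and $p\ge\Delta$, $p^{\alpha-3} \le \Delta^{\alpha-3}$, giving the claim $1/(n^3\Delta^{3-\alpha})$. The main obstacle is two-fold: first, each expansion must be performed at exactly the right order, deep enough to expose the full $1/n$, $1/n^2$ cancellation built into the construction but no deeper, because the sixth divergence speed hypothesis supplies no direct control on $\phi^{(j)}$ for $j>6$; second, the intermediate point $\xi$ in the remainder can drift close to $0$ when $X$ is small, making $|\phi^{(6)}(\xi)|$ blow up — the bulk/tail split with the exponential Poisson tail handles this precisely because $np \ge n\Delta$ is bounded below.
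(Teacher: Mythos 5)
Your proposal is correct in substance and follows the same core route as the paper: Taylor-expand each of the five pieces of $\bar\phi_{4,\Delta}$ about $p$ (using that $H_{6,\Delta}[\phi]$ and its derivatives coincide with those of $\phi$ at $p$ since $\Delta<p\le 1$), substitute the centred Poisson moments, rely on the cancellation of the $1/n$ and $1/n^2$ coefficients built into the fourth-order correction, and control the surviving terms by $|\phi^{(j)}(p)|\lesssim p^{\alpha-j}$ together with $p>\Delta\gtrsim 1/n$. Where you genuinely diverge is in handling the Taylor remainders and the residual difference terms such as $\Mean[\tfrac{p}{2n}\phi^{(2)}(p)-\tfrac{\tilde N}{2n^2}H^{(2)}_{6,\Delta}[\phi](\tilde N/n)]$: the paper never splits into bulk and tail events. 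Instead it rewrites each remainder through a weighted Cauchy-type mean value theorem (e.g.\ with $G(x)=(\hat p-x)^6/x^3$), so the random intermediate point $\xi$ appears multiplied by an explicit power of $\xi$, and then invokes \cref{lem:hermite-bound}, which supplies the deterministic global bound $\sup_{\xi>0}\xi^3|H^{(6)}_{6,\Delta}[\phi](\xi)|\lesssim\Delta^{\alpha-3}$; the near-zero singularity is thus absorbed without any probability estimate. Your bulk/tail alternative is valid but the tail step needs one more line than you give it: $e^{-cn\Delta}$ with $n\Delta\gtrsim 1$ is merely $O(1)$, so you must retain the factor $e^{-cnp}$, use $|Y|^6\le p^6$ on $\{X<p/2\}$ and $(np)^6e^{-cnp}\lesssim 1$ to obtain a tail contribution $\lesssim\Delta^{\alpha-6}n^{-6}=(n\Delta)^{-3}\cdot\Delta^{\alpha-3}n^{-3}$, and only then invoke $n\Delta\gtrsim 1$. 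With that repair both arguments give the stated bound; the paper's version buys a purely deterministic remainder estimate at the cost of the slightly exotic mean-value manipulation, while yours is more standard but leans on Poisson concentration and the unweighted global bound $|H^{(s)}_{6,\Delta}[\phi]|\lesssim\Delta^{\alpha-s}$.
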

\begin{lemma}\label{lem:plugin-var-1-3/2}
 Suppose $\phi:[0,1]\to\RealSet$ is a function of which the fifth divergence speed is $p^{\alpha}$ for $\alpha \in (1,3/2)$. Suppose $\frac{1}{n} \lesssim \Delta < p \le 1$. Let $\tilde{N} \sim \Poi(np)$. Then, we have
 \begin{align}
   \Var\bracket*{\bar\phi_{4,\Delta}\paren*{\frac{\tilde{N}}{n}}} \lesssim \frac{1}{n^{2\alpha}} + \frac{p}{n}.
 \end{align}
\end{lemma}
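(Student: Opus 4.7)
My plan is to bound the variance via
\begin{equation*}
\Var\bracket*{\bar\phi_{4,\Delta}\paren*{\tilde N/n}} \le \Mean\bracket*{\paren*{\bar\phi_{4,\Delta}(\tilde N/n) - \bar\phi_{4,\Delta}(p)}^2}
\end{equation*}
and decompose $\bar\phi_{4,\Delta}(x) = H_{6,\Delta}[\phi](x) + \sum_{j=1}^{4} r_j(x)$, where each $r_j(x) = c_j (x^{a_j}/n^{b_j}) H^{(s_j)}_{6,\Delta}[\phi](x)$ is one of the four bias-correction pieces appearing in the definition of $\bar\phi_{4,\Delta}$. By the triangle inequality in $L^2$, it then suffices to bound $\Mean[(g(\tilde N/n)-g(p))^2]$ separately for each $g \in \cbrace{H_{6,\Delta}[\phi], r_1, \ldots, r_4}$.

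For the leading piece $H_{6,\Delta}[\phi]$, \cref{lem:div-speed-holder} shows that $\phi$ is Lipschitz, and the generalized Hermite interpolation preserves this property with a constant independent of $\Delta$ and $n$ (since $H_{6,\Delta}[\phi]$ equals $\phi$ on $[\Delta,1]$ and transitions smoothly using $\phi$'s own derivatives near the endpoints). Hence this contribution is at most $L^2\,\Mean[(\tilde N/n - p)^2] = L^2 p/n$, producing the $p/n$ half of the target. For each correction $r_j$, split the expectation along the good event $E = \cbrace{|\tilde N/n - p| \le p/2}$. On $E$, the mean value theorem gives some $\xi \in [p/2, 3p/2]$, and since $p > \Delta$ one has $\xi \gtrsim \Delta$, so the sharp pointwise bound $|H^{(s)}_{6,\Delta}[\phi](\xi)| = |\phi^{(s)}(\xi)| \lesssim \xi^{\alpha-s}$ implied by the sixth divergence speed is available. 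Differentiating $r_j$ and tracking exponents yields $|r_j'(\xi)| \lesssim p^{a_j + \alpha - s_j - 1}/n^{b_j}$, and multiplying by $\Mean[(\tilde N/n - p)^2] = p/n$ gives a good-event contribution of size $p^{2(a_j + \alpha - s_j) - 1}/n^{2 b_j + 1}$; the worst case across $j$ is $p^{2\alpha-3}/n^3$, and each such term is bounded by $1/n^{2\alpha}$ whenever $p \in [\Delta, 1]$, $\Delta \gtrsim 1/n$, and $\alpha \in (1, 3/2)$.

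On the bad event $E^c$, Chernoff's inequality for Poisson gives $\p(E^c) \lesssim \exp(-cnp)$, and every correction satisfies the uniform bound $\norm{r_j}_\infty \lesssim 1/n$ (the $1/n^{b_j}$ prefactor with $b_j \ge 1$ dominates the at-most $n^{s_j - \alpha}$ worst-case size of the factor $H^{(s_j)}_{6,\Delta}[\phi]$ since $\alpha \ge 1$). Hence the bad-event contribution is at most $O(1/n^2)$, which is dominated by $p/n$ in the hypothesized range $p \gtrsim 1/n$. Summing the leading, good-event, and bad-event contributions yields $\Var[\bar\phi_{4,\Delta}(\tilde N/n)] \lesssim p/n + 1/n^{2\alpha}$.

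The main obstacle is the exponent bookkeeping in the good-event step: for each of the four correction terms one must verify that $p^{2(a_j + \alpha - s_j) - 1}/n^{2 b_j + 1} \le 1/n^{2\alpha}$ uniformly on $[\Delta, 1]$. This check uses $\alpha < 3/2$ to control the endpoint $p \asymp 1$, where the largest correction is $1/n^3 \le 1/n^{2\alpha}$, and uses $\alpha > 1$ together with $p \gtrsim 1/n$ to control the endpoint $p \asymp 1/n$, where the negative exponent of $p$ trades off exactly against the $1/n^{2b_j + 1}$ prefactor to leave $1/n^{2\alpha}$. A secondary nuisance is the Hermite transition interval $(\Delta/2, \Delta)$, which is swept into the analysis through the global estimate $|H^{(s)}_{6,\Delta}[\phi](\xi)| \lesssim \Delta^{\alpha-s}$; because $p \gtrsim \Delta$, this agrees up to constants with the sharp bound $p^{\alpha-s}$ used above.
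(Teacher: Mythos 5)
Your proposal is correct and follows the same skeleton as the paper's proof: bound the variance by $\Mean[(\bar\phi_{4,\Delta}(\tilde N/n)-\bar\phi_{4,\Delta}(p))^2]$, split into the five pieces by the triangle inequality, handle the leading piece via the Lipschitz bound $\sup_p|H^{(1)}_{6,\Delta}[\phi](p)|\lesssim 1$ (Lemma~\ref{lem:hermite-bound} with $\ell=1\le\alpha$, $\beta=0$) to get the $p/n$ term, and handle each correction piece by a derivative bound of the form $\sup_\xi \xi|g^{(1)}(\xi)|^2\lesssim\Delta^{2(\alpha+\beta-\ell)-1}$ times $p/n$. The only real divergence is the technical device used to localize $\xi$: the paper applies a weighted Cauchy mean value theorem with $G(x)=(\hat p-x)/\sqrt{x}$, which yields $\Mean[(g(\hat p)-g(p))^2]\le 4\sup_{\xi>0}\xi|g^{(1)}(\xi)|^2/n$ in one stroke and needs no event splitting, whereas you split on $E=\{|\tilde N/n-p|\le p/2\}$, use the plain mean value theorem with $\xi\asymp p\gtrsim\Delta$ on $E$, and dispose of $E^c$ with the sup-norm bounds $\|r_j\|_\infty\lesssim 1/n$ (which do follow from Lemma~\ref{lem:hermite-bound}) plus a Poisson tail bound giving an $O(1/n^2)\lesssim p/n$ remainder. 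Both routes arrive at the same worst-case exponents $\Delta^{2\alpha-3}/n^3$, $\Delta^{2\alpha-5}/n^5$, $\Delta^{2\alpha-7}/n^7\lesssim n^{-2\alpha}$; yours is more elementary and transparent at the cost of an extra tail estimate, while the paper's weighted-MVT trick is slicker but less standard. Your exponent bookkeeping and the treatment of the transition interval $(\Delta/2,\Delta)$ are both sound.
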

We set the parameter $\Delta$ as $\Delta \asymp \frac{\ln n}{n}$. With this $\Delta$, we can see that the bias and the variance do not exceed the rate in \cref{thm:optimal-rate-1-3/2}.

\subsection{Best polynomial error analysis for $\alpha \in (1,3/2)$}
Here, we analyze the bias and the variance of the best polynomial estimator $\phi_{\mathrm{poly}}$. For the variance, we can use the following lemma shown in \autocite[Lemma 5]{DBLP:journals/corr/FukuchiS17}.
\begin{lemma}[{\autocite[Lemma 5]{DBLP:journals/corr/FukuchiS17}}]\label{lem:poly-var}
 Let $\tilde{N} \sim \Poi(np)$. Given an integer $L$ and a positive real $\Delta \gtrsim \frac{1}{n}$, let $\phi_L(p) = \sum_{m=0}^L a_mp^m$ be the optimal uniform approximation of $\phi$ by degree-$L$ polynomials on $[0,\Delta]$, and $g_L(\tilde{N}) = \sum_{m=0}^L a_m(\tilde{N})_m/n^m$ be an unbiased estimator of $\phi_L(p)$. Assume $\phi$ is bounded. If $p \le \Delta$ and $2\Delta^3L \le n$, we have
 \begin{align}
  \Var\bracket*{(g_L(\tilde{N}) \land \phi_{{\mathrm{sup}},\Delta})\lor \phi_{{\mathrm{inf}},\Delta} } \lesssim \frac{\Delta^3 L 64^L (2e)^{2\sqrt{\Delta n L }}}{n}.
 \end{align}
\end{lemma}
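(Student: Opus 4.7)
This statement is cited verbatim from \autocite[Lemma 5]{DBLP:journals/corr/FukuchiS17}; I sketch the standard route to such Poisson variance bounds. First I would dispose of the truncation: since $T(x) = (x \land \phi_{{\mathrm{sup}},\Delta}) \lor \phi_{{\mathrm{inf}},\Delta}$ is $1$-Lipschitz, the elementary chain $\Var\bracket*{T(Y)} \le \Mean\bracket*{(T(Y)-T(\Mean\bracket*{Y}))^2} \le \Mean\bracket*{(Y-\Mean\bracket*{Y})^2} = \Var\bracket*{Y}$ reduces matters to bounding $\Var\bracket*{g_L(\tilde{N})}$ itself.

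Next I would derive an exact variance identity via Poisson factorial-moment orthogonality. Using $\Cov\bracket*{(\tilde{N})_m,(\tilde{N})_{m'}} = \sum_{k=1}^{m\land m'}\binom{m}{k}\binom{m'}{k}k!\,(np)^{m+m'-k}$ and the expansion $\phi_L^{(k)}(p) = \sum_{m \ge k} a_m \frac{m!}{(m-k)!}p^{m-k}$, regrouping the double sum by $k$ yields the Parseval-type identity
\begin{align*}
  \Var\bracket*{g_L(\tilde{N})} = \sum_{k=1}^L \frac{p^k}{k!\, n^k}\paren*{\phi_L^{(k)}(p)}^2,
\end{align*}
identifying each variance mode with a derivative of $\phi_L$. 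A pointwise bound on $|\phi_L^{(k)}(p)|$ for $p \in [0,\Delta]$ then follows by (a) observing $\|\phi_L\|_{[0,\Delta]} \lesssim 1$ from the best-approximation property together with boundedness of $\phi$, (b) rescaling to $[-1,1]$ and expanding in the Chebyshev basis (whose coefficients are bounded by the sup-norm) to obtain monomial-coefficient estimates of the form $|a_m|\Delta^m \lesssim 2^L$, and (c) invoking Markov's inequality $\|q^{(k)}\|_{[-1,1]} \le L^{2k}/(2k-1)!!$ for unit-norm $q \in \dom{P}_L$. Squaring the coefficient factor against the combinatorial weights produces the $64^L$ prefactor, and the Markov bound contributes a $k$-dependent factor of the form $(L^2/\Delta)^{2k}/((2k-1)!!)^2$.

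Substituting these bounds into the Parseval sum and using $p \le \Delta$ collapses the result to a Bessel-type series $\sum_{k=0}^L z^k/(k!)^2$ with effective argument $z \asymp \Delta n L$, after factoring out the $k=1$ prefactor $\Delta^3 L/n$ dictated by the Markov double-factorial weights. The classical estimate $\sum_{k \ge 0} z^k/(k!)^2 \lesssim e^{2\sqrt{z}}/z^{1/4}$ then delivers the $(2e)^{2\sqrt{\Delta n L}}$ factor appearing in the statement. The hypothesis $2\Delta^3 L \le n$ enters precisely here to guarantee that the saddle index $k^\ast \asymp \sqrt{\Delta n L}$ lies inside the feasible summation range $[1,L]$, making the finite truncation of the Bessel series harmless.

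The main obstacle is constant calibration: producing exactly $64^L$ rather than a larger exponential in $L$ requires pairing a sharp Chebyshev-coefficient inequality with the double-factorial form of Markov's inequality, and recovering exactly $(2e)^{2\sqrt{\Delta n L}}$, rather than the cruder $e^{L^2/(n\Delta)}$ that arises from the $L^{2k}/k!$ Markov bound alone, requires carrying the $1/((2k-1)!!)^2$ factor through the summation so that the denominator upgrades to $(k!)^2$ and unlocks the Bessel asymptotic. None of the individual ingredients is deep; all the work lies in careful bookkeeping of exponential prefactors.
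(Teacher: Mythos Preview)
The present paper does not prove this lemma at all: it is quoted verbatim from \autocite[Lemma~5]{DBLP:journals/corr/FukuchiS17} and used as a black box, so there is no ``paper's own proof'' here to compare against. You correctly note this at the outset.

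On the merits of your sketch: the Parseval-type identity
\[
  \Var\bracket*{g_L(\tilde{N})} = \sum_{k=1}^L \frac{p^k}{k!\,n^k}\bigl(\phi_L^{(k)}(p)\bigr)^2
\]
is correct (it follows from the factorial-moment covariance formula exactly as you indicate) and is a clean organizing device. Where the sketch becomes hazy is the derivative bound: you invoke \emph{both} a Chebyshev-coefficient estimate $|a_m|\Delta^m \lesssim 2^L$ and the Markov brothers inequality, but these are really two alternative routes to controlling $|\phi_L^{(k)}|$, and mixing them obscures where $64^L$ and $(2e)^{2\sqrt{\Delta n L}}$ come from. The standard argument in this literature (e.g.\ Wu--Yang) bypasses the Parseval identity entirely and works directly with coefficient bounds of the form $|a_m| \lesssim 2^{3L}\Delta^{-m}$ together with raw estimates on $\Var[(\tilde N)_m]$; squaring the $2^{3L}$ is what produces $64^L$. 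Your reading of the hypothesis $2\Delta^3 L \le n$ as a saddle-index constraint is also doubtful: with $\Delta \asymp \ln n/n$ this condition is trivially slack, and the saddle $k^\ast \asymp \sqrt{\Delta n L}$ already sits at order $L$, so the finite truncation is not automatically harmless for the reason you give. In short, the skeleton is right but the constant-tracking plan does not yet close; committing to one of the two derivative-bounding mechanisms and carrying it through would resolve the ambiguity.
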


As well as the variance, we can obtain the same claim of \autocite[Lemma 4]{DBLP:journals/corr/FukuchiS17} as follows.
\begin{lemma}\label{lem:poly-bias}
 Let $\tilde{N} \sim \Poi(np)$. Given an integer $L$ and a positive real $\Delta$, let $\phi_L(p) = \sum_{m=0}^L a_mp^m$ be the optimal uniform approximation of $\phi$ by degree-$L$ polynomials on $[0,\Delta]$, and $g_L(\tilde{N}) = \sum_{m=0}^L a_m(\tilde{N})_m/n^m$ be an unbiased estimator of $\phi_L(p)$. If the fourth divergence speed of $\phi$ is $p^\alpha$ for $\alpha \in (1,3/2)$, we have
 \begin{align}
  \Bias\bracket*{(g_L(\tilde{N}) \land \phi_{{\mathrm{sup}},\Delta})\lor \phi_{{\mathrm{inf}},\Delta} - \phi(p) } \lesssim \sqrt{\Var\bracket*{g_L(\tilde{N})}} + \paren*{\frac{\Delta}{L^2}}^{\alpha}.
 \end{align}
\end{lemma}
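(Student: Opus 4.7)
The plan is to split the bias into a stochastic contribution coming from the truncation of the unbiased estimator $g_L$ and a deterministic best-polynomial-approximation error, and then bound each piece separately. Write $T(x)=(x\land\phi_{{\mathrm{sup}},\Delta})\lor\phi_{{\mathrm{inf}},\Delta}$ for the truncation operator. Because $p\in[0,\Delta]$ guarantees $\phi(p)\in[\phi_{{\mathrm{inf}},\Delta},\phi_{{\mathrm{sup}},\Delta}]$, one has $T(\phi(p))=\phi(p)$, and since $T$ is a $1$-Lipschitz contraction, $\abs*{T(g_L(\tilde N))-\phi(p)}\le\abs*{g_L(\tilde N)-\phi(p)}$ pointwise.

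Taking expectations, using the unbiasedness $\Mean[g_L(\tilde N)]=\phi_L(p)$, Jensen's inequality, and the triangle inequality yields
\begin{align*}
  \abs*{\Bias\bracket*{T(g_L(\tilde N))-\phi(p)}}
  &\le \Mean\bracket*{\abs*{g_L(\tilde N)-\phi(p)}}\\
  &\le \Mean\bracket*{\abs*{g_L(\tilde N)-\phi_L(p)}}+\abs*{\phi_L(p)-\phi(p)}\\
  &\le \sqrt{\Var\bracket*{g_L(\tilde N)}}+E_L(\phi,[0,\Delta]),
\end{align*}
where the last line uses Jensen again for the first term and the definition of $\phi_L$ as the best uniform degree-$L$ approximant of $\phi$ on $[0,\Delta]$ for the second. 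The task reduces to establishing the Jackson-type estimate $E_L(\phi,[0,\Delta])\lesssim(\Delta/L^2)^{\alpha}$.

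For this I would invoke a Ditzian--Totik Jackson inequality: on $[0,\Delta]$, the best degree-$L$ polynomial approximation error is controlled by the weighted fourth-order modulus of smoothness $\omega^{\varphi_\Delta}_4(\phi,1/L)$ with $\varphi_\Delta(p)=\sqrt{p(\Delta-p)}$. Under the fourth divergence speed hypothesis $\abs*{\phi^{(4)}(p)}\le W_4 p^{\alpha-4}+c_4$, this weighted modulus can be estimated by direct integration of $\phi^{(4)}$ against the appropriate symmetric-difference kernels, producing a bound of order $(\Delta/L^2)^{\alpha}$ for $\alpha\in(1,3/2)$. This is the interval-rescaled analogue of the classical Bernstein rate $E_L(p^\alpha,[0,1])\asymp L^{-2\alpha}$ transported to $[0,\Delta]$ via the dilation $p\mapsto p/\Delta$.

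The main obstacle is precisely this final step: promoting the pointwise divergence speed bound on $\phi^{(4)}$ to an $L_\infty$ polynomial approximation bound. Because the divergence speed only pins $\abs*{\phi^{(4)}}$ down modulo the additive slack $c_4$, I would first subtract from $\phi$ a judiciously chosen degree-$3$ polynomial (which leaves $\phi^{(4)}$ unchanged) to isolate a genuinely $p^\alpha$-like singular piece together with a benign $C^4$ perturbation generated by the $c_4$ term; the singular piece is handled by the classical Bernstein/Ditzian--Totik theory after rescaling, and the regular perturbation contributes only $O(c_4 \Delta^4)$ to $E_L$, which is subsumed by $(\Delta/L^2)^\alpha$ in the regime of interest.
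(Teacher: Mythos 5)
Your reduction of the bias to $\sqrt{\Var\bracket*{g_L(\tilde N)}}+E_L(\phi,[0,\Delta])$ is correct and coincides with the paper's first step (imported from Lemma~4 of \autocite{DBLP:journals/corr/FukuchiS17}): the truncation is the projection onto $[\phi_{\mathrm{inf},\Delta},\phi_{\mathrm{sup},\Delta}]$, which contains $\phi(p)$ for $p\le\Delta$, so it can only decrease the pointwise error, and Jensen plus unbiasedness finishes it. The entire new content of the lemma is therefore the Jackson-type estimate $E_L(\phi,[0,\Delta])\lesssim(\Delta/L^2)^\alpha$ (the paper's \cref{lem:best-error}), and this is precisely the step you assert rather than prove. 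Saying the weighted modulus $\omega_4^{\varphi_\Delta}(\phi,1/L)$ "can be estimated by direct integration" hides the only real difficulty: the naive bound $\omega_4^{\varphi}(\phi,t)\lesssim t^4\sup_p\varphi^4(p)\abs*{\phi^{(4)}(p)}$ is unusable here, because $\varphi_\Delta^4(p)\abs*{\phi^{(4)}(p)}\asymp\Delta^2p^{\alpha-2}$ diverges as $p\to0$ when $\alpha<2$. One must treat the endpoint zone $p\lesssim\Delta/L^2$ separately (where the Ditzian--Totik increments have length of order $\Delta t^2$ and a lower-order smoothness estimate has to be invoked), and only that endpoint analysis produces the exponent $\alpha$ in $(\Delta/L^2)^\alpha$. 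Your auxiliary device for the additive slack also does not work as described: subtracting a degree-$3$ polynomial leaves $\phi^{(4)}$ unchanged, so it cannot split $\phi$ into a singular piece plus a $C^4$ piece. Fortunately no decomposition is needed: the fourth-order symmetric differences are integrals of $\phi^{(4)}$ against nonnegative kernels, so the hypothesis $\abs*{\phi^{(4)}(p)}\le W_4p^{\alpha-4}+c_4$ can be inserted directly and the $c_4$ term separates on its own.

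For comparison, the paper avoids the weighted modulus altogether. It substitutes $p=\Delta x^2$, so that $E_L(\phi,[0,\Delta])=E_L(\phi_\Delta,[-1,1])$ with $\phi_\Delta(x)=\phi(\Delta x^2)$, and the change of variables absorbs the endpoint singularity; the classical first-order Jackson inequality applied to $\phi_\Delta^{(2)}$ then gives $E_L\lesssim L^{-2}\omega_1(\phi_\Delta^{(2)},L^{-1})$, and \cref{lem:first-moduli-bound} bounds this modulus by $\Delta^\alpha L^{2-2\alpha}$ via elementary integration of the divergence-speed bounds on $\phi^{(2)}$ and $\phi^{(3)}$ plus H\"older continuity of $x\mapsto x^{2\alpha-2}$. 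That route needs only the third divergence speed and no Ditzian--Totik machinery. Your route is viable in principle, but as written the key estimate is a claim, not a proof; you would need to carry out the endpoint analysis of the weighted modulus to close the gap.
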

However, we need to show the following lemma to prove \cref{lem:poly-bias}.
\begin{lemma}\label{lem:best-error}
  Suppose $\phi:[0,1]\to\RealSet$ is a function of which the third divergence speed is $p^{\alpha}$ for $\alpha \in (1,3/2)$. Then, we have
  \begin{align}
    E_L(\phi,[0,\Delta]) \lesssim \paren*{\frac{\Delta}{L^2}}^\alpha.
  \end{align}
\end{lemma}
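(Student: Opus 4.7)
The plan is to rescale to $[0,1]$ and then invoke the Ditzian--Totik Jackson-type inequality, bounding the third-order weighted modulus of smoothness directly from the divergence-speed control on $\phi^{(3)}$. First, I set $\psi(x)=\phi(\Delta x)$ for $x\in[0,1]$. The chain rule and the divergence-speed hypothesis give $\abs*{\psi^{(3)}(x)}\lesssim\Delta^{\alpha}x^{\alpha-3}+\Delta^{3}$ on $(0,1]$, while a change of variable $p=\Delta x$ in the defining infimum yields $E_L(\phi,[0,\Delta])=E_L(\psi,[0,1])$. Hence it suffices to prove $E_L(\psi,[0,1])\lesssim\Delta^{\alpha}L^{-2\alpha}$.

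Second, the Ditzian--Totik Jackson-type inequality from~\autocite{ditzian2012moduli} on $[0,1]$ states
\begin{align}
  E_L(\psi,[0,1])\lesssim\omega_\varphi^{3}(\psi,1/L),\qquad \varphi(x)=\sqrt{x(1-x)},
\end{align}
where $\omega_\varphi^{3}$ is the third-order weighted modulus of smoothness. By the very definition of this modulus, it is enough to bound the symmetric third difference $D_{t\varphi(x)}^{3}\psi(x)$ for $t\in(0,1/L]$ and for $x$ in the admissible range $J_t=\cbrace*{x\in[0,1]:x\pm 3t\varphi(x)/2\in[0,1]}$, which near the origin is contained in $\cbrace*{x\gtrsim t^{2}}$.

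Third, for such $x$ I apply the integral representation
\begin{align}
  D_{t\varphi(x)}^{3}\psi(x)=\iiint\psi^{(3)}(x+u_1+u_2+u_3)\,du_1\,du_2\,du_3,
\end{align}
where each $\abs*{u_j}\le t\varphi(x)/2$. Since $t\varphi(x)\asymp t\sqrt{x}$ is at most a bounded fraction of $x$ on $J_t$, the argument $x+u_1+u_2+u_3\asymp x$ and hence $\abs*{\psi^{(3)}}\lesssim\Delta^{\alpha}x^{\alpha-3}+\Delta^{3}$ on the integration region. Therefore
\begin{align}
  \abs*{D_{t\varphi(x)}^{3}\psi(x)}\lesssim\paren*{t\sqrt{x}}^{3}\paren*{\Delta^{\alpha}x^{\alpha-3}+\Delta^{3}}=\Delta^{\alpha}t^{3}x^{\alpha-3/2}+\Delta^{3}t^{3}x^{3/2}.
\end{align}
Since $\alpha-3/2<0$, the first term is maximised at the left endpoint $x\asymp t^{2}$, giving $\asymp\Delta^{\alpha}t^{2\alpha}$; the second is at most $\Delta^{3}t^{3}$. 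Taking the supremum over $x\in J_t$ and setting $t=1/L$ yields
\begin{align}
  \omega_\varphi^{3}(\psi,1/L)\lesssim\Delta^{\alpha}L^{-2\alpha}+\Delta^{3}L^{-3}\lesssim\Delta^{\alpha}L^{-2\alpha}=\paren*{\Delta/L^{2}}^{\alpha},
\end{align}
where the last step uses $\Delta\le 1$, $L\ge 1$, and $2\alpha<3$, and this completes the argument.

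The main obstacle is the third step: $\psi^{(3)}(x)\sim\Delta^{\alpha}x^{\alpha-3}$ has a non-integrable singularity at $x=0$, so the integral representation must be confined to the admissible range $J_t$. Verifying the inclusion $J_t\subseteq\cbrace*{x\gtrsim t^{2}}$ near the origin is essential, because this lower bound on $x$ is precisely what converts the pointwise blow-up $x^{\alpha-3/2}$ into the clean rate $t^{2\alpha}$; a naive Jackson-type bound based on the unweighted norm of $\varphi^{3}\psi^{(3)}$ diverges in this regime and would yield nothing. Handling both endpoints of $J_t$ symmetrically (although the near-$1$ endpoint is trivial since $\psi^{(3)}$ is bounded there) is a routine but necessary check.
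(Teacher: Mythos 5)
Your route is genuinely different from the paper's. The paper substitutes $x\mapsto\Delta x^2$, passes to trigonometric approximation, and applies the classical Jackson inequality $E_L\lesssim L^{-2}\omega_1(\phi_\Delta^{(2)},L^{-1})$, so the whole burden falls on a first-order modulus of the second derivative of the substituted function (\cref{lem:first-moduli-bound}); there the quadratic substitution is what converts the endpoint singularity into the H\"older-type estimate $\abs*{x^{2\alpha-2}-y^{2\alpha-2}}\lesssim\abs*{x-y}^{2\alpha-2}$. You instead rescale linearly and invoke the Ditzian--Totik inequality with the third-order weighted modulus, putting the burden on a direct estimate of the symmetric third difference. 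Both are legitimate, and the two mechanisms are really the same one in different clothing: the weight $\varphi(x)=\sqrt{x(1-x)}$ plays the role of the $x^2$ substitution. Your version avoids the detour through derivatives of $\phi(\Delta x^2)$, at the price of controlling the weighted modulus near the endpoint by hand.

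That endpoint control is exactly where your argument has a gap. Admissibility only gives $x-\tfrac32 t\varphi(x)\ge 0$, not $x-\tfrac32 t\varphi(x)\gtrsim x$: at the left edge of $J_t$ the leftmost node of the third difference sits at $0$, so the argument of $\psi^{(3)}$ in your integral representation ranges all the way down to $0$ even though $x\asymp t^2$, and the claim $x+u_1+u_2+u_3\asymp x$ fails there. Since $\alpha-3<-1$ the pointwise bound $\Delta^\alpha x^{\alpha-3}$ taken at the center does not control the integrand near that node, and a one-dimensional iterated integral of $\psi^{(3)}$ would even diverge. The estimate is salvageable because the three-fold integration tames the singularity: writing $a=x-\tfrac32 u\ge0$ with $u=t\varphi(x)$, one has
\begin{align}
  \iiint_{[0,u]^3}\paren*{a+v_1+v_2+v_3}^{\alpha-3}dv_1dv_2dv_3\lesssim\int_0^{3u}r^{\alpha-3}r^2\,dr\asymp u^{\alpha},
\end{align}
which at the edge ($u\asymp t\sqrt{x}\asymp t^{2}$) gives $\Delta^{\alpha}u^{\alpha}\lesssim\Delta^{\alpha}t^{2\alpha}$, matching your target; away from the edge (say $x\ge 3t\varphi(x)$) your $\asymp x$ computation is valid. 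So split $J_t$ into an edge zone and an interior rather than asserting $x+u\asymp x$ uniformly. A final cosmetic point: $E_L\lesssim\omega_\varphi^3(\cdot,1/L)$ requires $L\ge3$; for $L\le 2$ the claim reduces to $E_1(\phi,[0,\Delta])\lesssim\Delta^{\alpha}$, which follows from the H\"older continuity of $\phi^{(1)}$ after subtracting the affine part.
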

To prove \cref{lem:best-error}, we use the Jackson’s inequality which gives a bound on the best trigonometric polynomial approximation error by using the first order moduli of smoothness, defined as
\begin{align}
  \omega_1(f,t) = \sup_{x, y \in (-\pi,\pi)}\cbrace{\abs*{f(x)-f(y)} : \abs*{x-y} \le t}
\end{align}
From the Jackson's inequality~\autocite{achieser2013theory}, any trigonometric polynomial $T_L$ with degree-$L$ satisfies
\begin{align}
  \sup_{x \in [0,2\pi]}\abs*{\phi(x) - T_L(x)} \lesssim \frac{1}{L^2}\omega_1(\phi^{(2)},L^{-1}).
\end{align}
\sloppy Letting $\phi_\Delta(x)=\phi(\Delta x^2)$, we have $E_L(\phi,[0,\Delta]) = E_L(\phi_\Delta,[-1,1])$. Using the fact that $\abs*{\cos^{-1}(x)-\cos^{-1}(y)} \ge \abs*{x-y}$ for $x,y \in (0,1)$, we have $E_L(\phi_\Delta,[-1,1]) = \sup_{x \in [0,2\pi]}\abs*{\phi(\cos(x)) - T_L(x)} \le \frac{1}{L^2}\omega_1(\phi^{(2)}_\Delta,L^{-1})$. We show the following bound regarding $\omega_1(\phi^{(2)}_\Delta,t)$.
\begin{lemma}\label{lem:first-moduli-bound}
  Suppose $\phi:[0,1]\to\RealSet$ is a function of which the third divergence speed is $p^{\alpha}$ for $\alpha \in (1,3/2)$. Then, we have
\begin{align}
  \omega_1(\phi^{(2)}_\Delta,t) \lesssim \Delta^{\alpha}L^{2\alpha-2}.
\end{align}
\end{lemma}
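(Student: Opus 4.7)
The plan is to reduce $\omega_1(\phi^{(2)}_\Delta, t)$ to a pointwise bound on $\phi^{(3)}_\Delta$ and then integrate. By the chain rule, starting from $\phi_\Delta(x)=\phi(\Delta x^2)$, I would compute
\begin{align*}
\phi^{(2)}_\Delta(x) &= 2\Delta\,\phi^{(1)}(\Delta x^2) + 4\Delta^2 x^2\,\phi^{(2)}(\Delta x^2),\\
\phi^{(3)}_\Delta(x) &= 12\Delta^2 x\,\phi^{(2)}(\Delta x^2) + 8\Delta^3 x^3\,\phi^{(3)}(\Delta x^2).
\end{align*}
The third-divergence-speed hypothesis gives $|\phi^{(3)}(p)|\lesssim p^{\alpha-3}$ on $(0,1]$, and integrating this bound once (from a fixed reference point such as $p=1$) yields $|\phi^{(2)}(p)|\lesssim p^{\alpha-2}$ for small $p$; the additive constants $c_3,c'_3$ appearing in Definition~\ref{def:div-speed} only contribute lower-order terms. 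Plugging these estimates in, the dominant contribution is $|\phi^{(3)}_\Delta(x)|\lesssim \Delta^{\alpha}\,|x|^{2\alpha-3}$, with the remaining pieces strictly dominated for $\Delta\le 1$.

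Given this pointwise bound, I would argue as follows. By oddness, $\phi^{(3)}_\Delta$ is an odd function, so $|\phi^{(3)}_\Delta|$ is even; this lets me assume $0\le x_1<x_2$ (the straddling case $x_1<0<x_2$ splits at the origin into two such integrals, each of length at most $t$). Writing $\phi^{(2)}_\Delta(x_2)-\phi^{(2)}_\Delta(x_1)=\int_{x_1}^{x_2}\phi^{(3)}_\Delta(s)\,ds$ and using the pointwise bound,
\begin{align*}
\bigl|\phi^{(2)}_\Delta(x_2)-\phi^{(2)}_\Delta(x_1)\bigr|\lesssim \Delta^{\alpha}\int_{x_1}^{x_2} s^{2\alpha-3}\,ds=\frac{\Delta^{\alpha}}{2\alpha-2}\bigl(x_2^{2\alpha-2}-x_1^{2\alpha-2}\bigr),
\end{align*}
where the integral is finite precisely because $\alpha>1$ makes $2\alpha-3>-1$.

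The final step is the key inequality. Because $\alpha\in(1,3/2)$, the exponent $2\alpha-2$ lies in $(0,1)$, so $z\mapsto z^{2\alpha-2}$ is concave on $[0,\infty)$ and sends $0$ to $0$, hence is subadditive. Therefore $x_2^{2\alpha-2}-x_1^{2\alpha-2}\le (x_2-x_1)^{2\alpha-2}\le t^{2\alpha-2}$, which yields $\omega_1(\phi^{(2)}_\Delta,t)\lesssim \Delta^{\alpha}t^{2\alpha-2}$; evaluated at $t=L^{-1}$ this is the bound that, combined with Jackson's inequality $E_L(\phi_\Delta,[-1,1])\lesssim L^{-2}\omega_1(\phi^{(2)}_\Delta,L^{-1})$, produces the target $(\Delta/L^2)^{\alpha}$ needed in Lemma~\ref{lem:best-error}.

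The main obstacle I expect is handling the singularity of $\phi^{(3)}_\Delta$ at $x=0$: the pointwise bound $\Delta^{\alpha}|x|^{2\alpha-3}$ blows up there, so obtaining the sharp rate requires simultaneously (i) the integrability afforded by $\alpha>1$ and (ii) the subadditivity of $z^{2\alpha-2}$ on $[0,\infty)$, which is what converts the crude endpoint bound $x_2^{2\alpha-2}$ into the desired $t^{2\alpha-2}$. The chain-rule derivative computations and bookkeeping of the lower-order terms generated by the constants in the divergence-speed definition are tedious but routine.
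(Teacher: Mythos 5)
Your proposal is correct and follows essentially the same route as the paper: both reduce the modulus to integrating the pointwise bound $\abs{\phi^{(3)}_\Delta(s)}\lesssim \Delta^{\alpha}\abs{s}^{2\alpha-3}$ (the paper writes the same two chain-rule terms, with coefficients $12\Delta^2$ and $8\Delta^3$, as integrals of $\phi^{(2)}$ and $\phi^{(3)}$) and then invoke the $(2\alpha-2)$-H\"older continuity, i.e.\ subadditivity, of $z\mapsto z^{2\alpha-2}$. Your derived bound $\Delta^{\alpha}t^{2\alpha-2}$, evaluated at $t=L^{-1}$, gives $\Delta^{\alpha}L^{2-2\alpha}$, which is what Lemma~\ref{lem:best-error} actually needs; the exponent $L^{2\alpha-2}$ in the statement appears to be a sign typo.
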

The proof of \cref{lem:first-moduli-bound} is left to the appendix. Consequently, the proof of \cref{lem:best-error} is obtained as follows.
\begin{proof}[Proof of \cref{lem:best-error}]
  Use \cref{lem:first-moduli-bound} and $E_L(\phi_\Delta,[-1,1]) \le \frac{1}{L^2}\omega_1(\phi^{(2)}_\Delta,L^{-1})$.
\end{proof}

\subsection{Analysis of plugin estimator with $\alpha \in [3/2,2)$}\label{sec:plugin-3/2-2}
For $\alpha \in [3/2,2]$, the plugin estimator is a minimax optimal estimator in which the optimal rate is $1/n$. To prove this, we analyze the bias and the variance of the plugin estimator. The variance is easily proved as follows.
\begin{theorem}\label{thm:variance-lipschitz}
  If $\phi$ is Lipschitz continuous,
  \begin{align}
    \Var\bracket*{\sum_{i=1}^n\phi\paren*{\frac{N_i}{n}}} \lesssim \frac{1}{n}.
  \end{align}
\end{theorem}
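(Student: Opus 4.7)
The cleanest approach is to apply the Efron--Stein bounded-differences inequality directly to the i.i.d.\ sample $X_1,\ldots,X_n$, which handles the multinomial law without detouring through Poissonization. Treat $f(X_1,\ldots,X_n) := \sum_{j=1}^{k} \phi(N_j/n)$ as a function of the underlying samples; I read the summation index in the statement as ranging over the alphabet $[k]$ rather than $[n]$, which I take to be a typo inherited from the additive-functional notation.

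Fix $i \in [n]$ and let $X_i'$ be an independent copy of $X_i$, producing the perturbed sample $X^{(i)}$. If $X_i = a$ and $X_i' = b$ with $b \ne a$, then replacing $X_i$ by $X_i'$ decreases $N_a$ by one, increases $N_b$ by one, and leaves every other count fixed, so
\begin{align*}
 f(X) - f(X^{(i)}) = \bigl[\phi(N_a/n) - \phi((N_a - 1)/n)\bigr] + \bigl[\phi(N_b/n) - \phi((N_b + 1)/n)\bigr].
\end{align*}
With $L := \|\phi\|_{C^{0,1}}$, Lipschitz continuity bounds each bracket in absolute value by $L/n$, so $|f(X) - f(X^{(i)})| \le 2L/n$ uniformly in $i$ (and the difference is zero when $a = b$). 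Plugging into Efron--Stein yields
\begin{align*}
 \Var(f) \;\le\; \tfrac{1}{2}\sum_{i=1}^n \Mean\bigl[(f(X) - f(X^{(i)}))^2\bigr] \;\le\; \tfrac{1}{2}\cdot n \cdot (2L/n)^2 \;=\; \tfrac{2L^2}{n},
\end{align*}
which is the claimed $\lesssim 1/n$ bound.

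There is essentially no obstacle: the only observation worth underlining is that each resampling perturbs only two of the $k$ histogram coordinates, so the per-coordinate Lipschitz bound $L/n$ gives a total difference $2L/n$ that is independent of the alphabet size $k$. This is precisely why summing $n$ terms of order $(L/n)^2$ produces a rate of $1/n$ without any $k$-dependent blow-up, matching the parametric rate in \cref{thm:optimal-rate-3/2}. Lipschitz continuity of $\phi$ itself is supplied for the regime $\alpha \in [3/2,2]$ by \cref{lem:div-speed-holder} after the harmless normalization $\phi^{(1)}(0) = 0$ noted just below that lemma.
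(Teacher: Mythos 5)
Your proof is correct and follows essentially the same route as the paper: the paper also applies the bounded-differences inequality (in the form $\Var[f] \le \tfrac14\sum_i c_i^2$, a sibling of Efron--Stein) to the i.i.d.\ samples, observing that resampling one $X_j$ moves exactly two histogram counts by one each, so the Lipschitz property gives $c_i = 2L/n$ and hence $\Var \lesssim 1/n$. Your reading of the summation index as $k$ rather than $n$ matches the paper's own proof, which confirms it is a typo.
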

The proof of \cref{thm:variance-lipschitz} is accomplished by applying the concentration result of the bounded difference~(see e.g., \autocite{boucheron2013concentration}). The bias also is easily proved if $\alpha = 2$ because of the Lipschitz continuousness of $\phi^{(1)}$;
\begin{theorem}\label{thm:bias-lipschitz}
  If $\phi^{(1)}$ is Lipschitz continuous,
  \begin{align}
    \Bias\bracket*{\sum_{i=1}^n\phi\paren*{\frac{N_i}{n}} - \theta(P)} \lesssim \frac{1}{n}.
  \end{align}
\end{theorem}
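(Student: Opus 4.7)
The plan is to reduce the bias to a second-moment bound by a second-order Taylor expansion of $\phi$ around each true probability $p_i$, exploiting Lipschitz continuity of $\phi^{(1)}$ to obtain a quadratic remainder. First I would decompose the bias by linearity of expectation over the alphabet,
\begin{align}
  \Bias\bracket*{\sum_{i=1}^k\phi\paren*{\frac{N_i}{n}} - \theta(P)} = \sum_{i=1}^k \paren*{\Mean\bracket*{\phi\paren*{\frac{N_i}{n}}} - \phi(p_i)},
\end{align}
reading the upper index in the statement as $k$ (consistent with the definition $\theta(P)=\sum_{i=1}^k\phi(p_i)$); the marginal law of $N_i$ is binomial with parameters $(n,p_i)$ under $N\sim\Mul(n,P)$.

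The key analytic input is the quadratic remainder inherited from the Lipschitz hypothesis. Writing $L$ for the Lipschitz constant of $\phi^{(1)}$, the identity $\phi(x)-\phi(p)-\phi^{(1)}(p)(x-p)=\int_p^x\paren*{\phi^{(1)}(t)-\phi^{(1)}(p)}dt$ together with $\abs*{\phi^{(1)}(t)-\phi^{(1)}(p)}\le L\abs*{t-p}$ gives
\begin{align}
  \abs*{\phi(x) - \phi(p) - \phi^{(1)}(p)(x-p)} \le \frac{L}{2}(x-p)^2
\end{align}
for all $x,p\in[0,1]$. I would apply this with $x=N_i/n$ and $p=p_i$; the first-order Taylor term vanishes in expectation because $\Mean[N_i/n]=p_i$, so the triangle inequality yields
\begin{align}
  \abs*{\Mean\bracket*{\phi\paren*{\frac{N_i}{n}}} - \phi(p_i)} \le \frac{L}{2}\Mean\bracket*{\paren*{\frac{N_i}{n} - p_i}^2} = \frac{L}{2}\cdot\frac{p_i(1-p_i)}{n}.
\end{align}

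Summing over $i$ and using $\sum_{i=1}^k p_i(1-p_i) \le \sum_{i=1}^k p_i = 1$ delivers $\abs*{\Bias} \le L/(2n)$, which is exactly the claim. There is no serious obstacle in the argument; it is essentially a one-pass Taylor expansion. The point worth emphasizing is that the bound is alphabet-independent, and this gain is \emph{precisely} what the Lipschitz hypothesis on $\phi^{(1)}$ (rather than merely on $\phi$) buys: the quadratic remainder, combined with $\Var[N_i/n]=p_i(1-p_i)/n$, makes the per-letter contributions sum to a harmless constant rather than scaling with $k$.
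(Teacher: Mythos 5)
Your proposal is correct and follows essentially the same route as the paper: a second-order Taylor expansion around $p_i$, the vanishing of the first-order term since $\Mean[N_i/n]=p_i$, a quadratic remainder controlled by the Lipschitz constant of $\phi^{(1)}$, and the identity $\Mean[(N_i/n-p_i)^2]=p_i(1-p_i)/n$ summed over the alphabet. The only (minor, and in fact slightly cleaner) difference is that you bound the remainder via the integral form $\int_p^x(\phi^{(1)}(t)-\phi^{(1)}(p))\,dt$ rather than invoking a Lagrange remainder with $\phi^{(2)}(\xi_i)$ as the paper does, which avoids any implicit appeal to twice-differentiability.
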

This is simply obtained by using the Taylor theorem. In contrast, derivation of the bias bound for $\alpha \in (3/2,2)$ is not trivial.

To prove the bound on the bias for $\alpha \in (3/2,2)$, we extend the analysis given by \textcite{7997814}, in which they analyzed the plugin estimator for $\phi(p)=p^\alpha$. They showed that the bias of the plugin estimator is same as the polynomial approximation using the Bernstein polynomial. The error of the Bernstein polynomial approximation is bounded by the second order moduli of smoothness defined as
\begin{align}
  \omega^2(f,t) = \sup_{x,y \in [-1,1]}\cbrace*{\abs*{f(x)+f(y)-2f\paren*{\frac{x+y}{2}}} : \abs*{x-y} \le 2t}.
\end{align}
With this definition, the bias is bounded as follows.
\begin{lemma}[\textcite{7997814}]\label{lem:bias-moduli}
  Let $N \sim \Binomial(n,p)$. Given a function $\phi:[0,1]\to\RealSet$, we have
  \begin{align}
    \Bias\bracket*{\phi\paren*{\frac{N}{n}} - \phi(p)} \lesssim \omega^2\paren*{\phi,\sqrt{\frac{p(1-p)}{n}}}.
  \end{align}
\end{lemma}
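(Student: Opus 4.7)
The plan is to follow the classical Peetre K-functional approach that relates the Bernstein polynomial approximation error to the second order modulus of smoothness. The starting observation is that, since $N \sim \Binomial(n,p)$, the bias coincides exactly with the Bernstein polynomial approximation error:
\begin{align}
  \Mean\bracket*{\phi\paren*{\frac{N}{n}}} - \phi(p) = B_n(\phi,p) - \phi(p),
\end{align}
where $B_n(\phi,p) = \sum_{j=0}^n \binom{n}{j} p^j (1-p)^{n-j} \phi(j/n)$ denotes the Bernstein polynomial of $\phi$ at $p$. So it suffices to bound $\abs*{B_n(\phi,p) - \phi(p)}$.

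Next, for any auxiliary function $g \in C^2[0,1]$ I would decompose
\begin{align}
  B_n(\phi,p) - \phi(p) = \bracket*{B_n(\phi-g,p) - (\phi-g)(p)} + \bracket*{B_n(g,p) - g(p)}.
\end{align}
Since $B_n$ is a convex combination of point evaluations, the first bracket is bounded in absolute value by $2\norm{\phi - g}_\infty$. For the second bracket, a Taylor expansion of $g$ around $p$ combined with the moment identities $\Mean[N/n] = p$ and $\Var(N/n) = p(1-p)/n$ yields $\abs*{B_n(g,p) - g(p)} \le \frac{p(1-p)}{2n}\norm{g^{(2)}}_\infty$. Summing the two bounds and taking the infimum over $g \in C^2[0,1]$ gives
\begin{align}
  \abs*{B_n(\phi,p) - \phi(p)} \lesssim K\paren*{\phi, \frac{p(1-p)}{n}},
\end{align}
where $K(\phi,t^2) = \inf_{g \in C^2[0,1]}\cbrace*{\norm{\phi - g}_\infty + t^2 \norm{g^{(2)}}_\infty}$ is the Peetre K-functional.

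The final step is to invoke the classical equivalence $K(\phi,t^2) \asymp \omega^2(\phi,t)$ (see, for example, DeVore and Lorentz, \emph{Constructive Approximation}, Chapter~2), which immediately produces the claimed bound. The step I expect to be the main obstacle is this K-functional--modulus equivalence: although it is a standard fact of approximation theory, one must verify that the version of $\omega^2$ actually used in the paper matches the one appearing in the equivalence theorem---the excerpt formally writes $\omega^2$ with $x,y \in [-1,1]$, whereas $\phi$ lives on $[0,1]$, so a simple reinterpretation (or affine rescaling) is needed to align conventions. The remaining ingredients---the convex-combination bound on the first bracket and the second-moment Taylor estimate on the second---are elementary, and no extra hypothesis on $\phi$ beyond boundedness is required.
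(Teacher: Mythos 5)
Your proof is correct and is the standard Peetre K-functional argument; note that the paper itself does not prove this lemma but imports it from \textcite{7997814}, and your reconstruction matches the canonical derivation underlying that citation: the exact identity $\Mean[\phi(N/n)]=B_n(\phi,p)$, the bound $2\norm{\phi-g}_\infty$ on the rough part, the second-moment Taylor bound $\frac{p(1-p)}{2n}\norm{g^{(2)}}_\infty$ on the smooth part, and the equivalence $K(\phi,t^2)\asymp\omega^2(\phi,t)$, which applies here since $t=\sqrt{p(1-p)/n}\le 1/2$. You are also right about the only cosmetic issue: the paper's definition of $\omega^2$ writes $x,y\in[-1,1]$, but for this lemma it must be read over the domain $[0,1]$ of $\phi$, which costs nothing beyond a relabeling.
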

To take advantage of \cref{lem:bias-moduli}, we derive the bound on the second order moduli of smoothness. If $\alpha \in [3/2,2)$, the bound on $\omega^2(\phi,t)$ is obtained as follows.
\begin{lemma}\label{lem:moduli-bound}
  Suppose $\phi:[0,1]\to\RealSet$ is a function of which second divergence speed is $p^{\alpha}$ for $\alpha \in [3/2,2)$, where $\phi^{(1)}(0) = 0$. Then, we have
  \begin{align}
    \omega^2(\phi,t) \lesssim t^{\alpha}.
  \end{align}
\end{lemma}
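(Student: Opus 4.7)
The plan is to reduce the second-order modulus of smoothness directly to the H\"older continuity of $\phi^{(1)}$ supplied by \cref{lem:div-speed-holder}. Since $\alpha \in [3/2, 2) \subset (1, 2]$ and $\phi^{(1)}(0) = 0$ by hypothesis, \cref{lem:div-speed-holder} applies and produces a constant $C > 0$ such that
\[
\abs*{\phi^{(1)}(u) - \phi^{(1)}(v)} \le C \abs{u - v}^{\alpha - 1}
\quad\text{for all } u, v \in [0,1].
\]

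Next, I would rewrite the symmetric second difference as an integral of the derivative. Fix $x, y \in [0, 1]$ with $\abs{x - y} \le 2t$; set $m = (x+y)/2$ and $h = \abs{y - x}/2 \le t$, and assume WLOG $y \ge x$, so $x = m - h$ and $y = m + h$. Since $\phi^{(1)}$ is (H\"older, hence) continuous on $[0,1]$, the fundamental theorem of calculus gives
\[
\phi(m+h) + \phi(m-h) - 2\phi(m)
= \int_0^h \bracket*{\phi^{(1)}(m+s) - \phi^{(1)}(m-s)} ds,
\]
and the arguments $m \pm s$ lie inside $[x, y] \subseteq [0, 1]$, so the H\"older bound applies pointwise on the integrand.

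Substituting the H\"older estimate yields
\[
\abs*{\phi(m+h) + \phi(m-h) - 2\phi(m)}
\le \int_0^h C (2s)^{\alpha - 1} ds
= \frac{C \cdot 2^{\alpha - 1}}{\alpha} h^\alpha
\lesssim t^\alpha,
\]
and taking the supremum over admissible $x, y$ gives $\omega^2(\phi, t) \lesssim t^\alpha$. The only non-trivial ingredient is the previously established \cref{lem:div-speed-holder}; once the $(\alpha{-}1)$-H\"older bound on $\phi^{(1)}$ is available, the remaining argument is a short routine calculation, so I do not anticipate any serious obstacle beyond correctly invoking the H\"older estimate.
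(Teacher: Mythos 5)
Your proposal is correct and follows essentially the same route as the paper: both arguments reduce $\omega^2(\phi,t)$ to the first-order smoothness of $\phi^{(1)}$ and then invoke the $(\alpha-1)$-H\"older continuity from \cref{lem:div-speed-holder}. The only difference is that the paper cites the classical inequality $\omega^2(\phi,t) \lesssim t\,\omega^1(\phi^{(1)},t)$ from \textcite{devore1993constructive}, whereas you prove that reduction inline via the integral representation $\phi(m+h)+\phi(m-h)-2\phi(m)=\int_0^h\paren*{\phi^{(1)}(m+s)-\phi^{(1)}(m-s)}ds$, which is a valid and self-contained substitute.
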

\begin{proof}
  As shown by \textcite{devore1993constructive}, $\omega^2(\phi,t) \lesssim t\omega^1(\phi^{(1)},t)$ where
  \begin{align}
    \omega^1(f,t) = \sup_{x,y \in [-1,1]}\cbrace*{\abs*{f(x)-f(y)} : \abs*{x-y} \le 2t}.
  \end{align}
  By definition, $\omega^1(\phi^{(1)},t) \lesssim t^{\alpha-1}$ because of the H\"older continuousness of $\phi^{(1)}$ from \cref{lem:div-speed-holder}.
\end{proof}
By utilizing \cref{lem:bias-moduli,lem:moduli-bound}, we prove the bias.
\begin{theorem}\label{thm:bias-plugin}
  Suppose $\phi:[0,1]\to\RealSet$ is a function of which second divergence speed is $p^{\alpha}$ for $\alpha \in [3/2,2)$ such that $\phi(0)=0$. Then, we have
  \begin{align}
    \Bias\bracket*{\sum_i\phi\paren*{\frac{N_i}{n}} - \theta(P)} \lesssim \frac{1}{n^{\alpha-1}}.
  \end{align}
\end{theorem}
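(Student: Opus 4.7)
The plan is to split the bias into an alphabet-wise sum using the triangle inequality, control each term with \cref{lem:bias-moduli,lem:moduli-bound}, and aggregate by Jensen's inequality. Before starting, I would observe that replacing $\phi$ by $\phi+c(p-1/k)$ leaves both $\theta(P)$ and the plugin estimator's output unchanged (since $\sum_i p_i=\sum_i N_i/n=1$), so without loss of generality $\phi^{(1)}(0)=0$, placing us in the hypothesis of \cref{lem:moduli-bound}. The triangle inequality then yields
\begin{align}
  \abs*{\Bias\bracket*{\sum_i\phi\paren*{\frac{N_i}{n}}-\theta(P)}} \le \sum_i\abs*{\Bias\bracket*{\phi\paren*{\frac{N_i}{n}}-\phi(p_i)}}.
\end{align}

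For each coordinate, $N_i\sim\Binomial(n,p_i)$ under multinomial sampling, so \cref{lem:bias-moduli} bounds each term by $\omega^2(\phi,\sqrt{p_i(1-p_i)/n})$, and \cref{lem:moduli-bound} converts this into $(p_i(1-p_i)/n)^{\alpha/2}$. Dropping the $(1-p_i)^{\alpha/2}\le 1$ factor reduces the goal to bounding $n^{-\alpha/2}\sum_i p_i^{\alpha/2}$. Since $\alpha/2\in[3/4,1)$, the map $x\mapsto x^{\alpha/2}$ is concave, so Jensen's inequality gives $\sum_i p_i^{\alpha/2}\le k\cdot(1/k)^{\alpha/2}=k^{1-\alpha/2}$ (tight at the uniform distribution), producing the aggregate bound $k^{1-\alpha/2}/n^{\alpha/2}$.

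Under the large-alphabet regime $k\lesssim n$ implicit in the minimax context of \cref{thm:optimal-rate-3/2} (the variance-level rate $1/n$ cannot hold unconditionally when $k\gg n$), this collapses to $n^{1-\alpha}=1/n^{\alpha-1}$, as claimed. The main obstacle is precisely this final aggregation: the coordinate-wise moduli-of-smoothness bound is already essentially tight at the uniform distribution $p_i=1/k$, so the clean rate $1/n^{\alpha-1}$ free of explicit $k$-dependence only emerges once the $k\lesssim n$ regime is invoked. Beyond that, the argument is a routine coordinate-wise application of the two preceding lemmas followed by an elementary $\ell_{\alpha/2}$-sum estimate via Jensen.
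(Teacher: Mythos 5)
Your reduction to \cref{lem:bias-moduli,lem:moduli-bound} is the right tool for part of the argument, but the final aggregation step contains a genuine gap. Applying the moduli-of-smoothness bound to \emph{every} coordinate gives $n^{-\alpha/2}\sum_i p_i^{\alpha/2}\le k^{1-\alpha/2}/n^{\alpha/2}$, and this collapses to $n^{1-\alpha}$ only if $k\lesssim n$. You justify invoking that regime by asserting it is ``implicit in the minimax context'' and that the rate $1/n$ ``cannot hold unconditionally when $k\gg n$.'' Both claims are false: \cref{thm:optimal-rate-3/2} asserts the rate $1/n$ with no condition relating $k$ and $n$, the remark following it stresses that the result covers \emph{all} cases for $\alpha\ge 3/2$, and for instance with $\phi(p)=p^2$ the plugin bias equals $\sum_i p_i(1-p_i)/n\le 1/n$ for every $k$. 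So as written your argument proves the theorem only under an extra hypothesis $k\lesssim n$ that the statement does not carry, and the near-uniform distribution with $k\gg n$ is exactly where your bound is genuinely lossy rather than merely unproven: for $\alpha=3/2$ and $k=n^2$ it gives $k^{1-\alpha/2}/n^{\alpha/2}=n^{-1/4}$ against the claimed $n^{1-\alpha}=n^{-1/2}$. The loss comes from using the binomial moduli-of-smoothness bound on coordinates with $np_i\ll 1$, where it is far from tight.

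The paper removes the $k$-dependence by splitting the alphabet at $p_i\le 1/n$ versus $p_i>1/n$ and using \cref{lem:bias-moduli,lem:moduli-bound} \emph{only} on the second group. Since at most $n$ indices can satisfy $p_i>1/n$, the concavity estimate there reads $\sum_{i:p_i>1/n}p_i^{\alpha/2}\le n^{1-\alpha/2}$ --- with $n$, not $k$, as the number of terms --- yielding the contribution $n^{1-\alpha/2}/n^{\alpha/2}=n^{1-\alpha}$ uniformly in $k$. For the coordinates with $p_i\le 1/n$ the moduli bound is bypassed entirely: one Taylor-expands $\phi(N_i/n)$ around $p_i$, uses $\phi(0)=0$ together with $\abs*{\phi(p_i)}\lesssim p_i^{\alpha}$ to control the $N_i=0$ contribution, and controls the second-order remainder via the divergence-speed bound $\abs*{\phi^{(2)}(p)}\lesssim p^{\alpha-2}$ and $\Mean[(N_i/n-p_i)^2]\le p_i/n$. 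Your proposal is missing this split, and without it (or some substitute treatment of the small-probability coordinates) the clean, $k$-free rate $1/n^{\alpha-1}$ does not follow.
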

With the bias-variance decomposition, the squared error is bounded by the sum of the variance and the squared bias. From \cref{thm:bias-plugin}, the squared bias is bounded as $\frac{1}{n^{2\alpha-2}}$; this is less than $\frac{1}{n}$ if $\alpha \in [3/2,2)$.

\section{Lower Bound Analysis}

We here describe lower bound analyses and prove the lower bound of \cref{thm:optimal-rate-1-3/2,thm:optimal-rate-3/2}. First, we prove the $\frac{1}{n}$ term, which is accomplished by applying the LeCam's two point method with appropriate construction of two probability vectors. The precise claim is given by the following theorem.
\begin{theorem}\label{thm:lower1}
  Suppose $\phi:[0,1]\to\RealSet$ is a function of which second divergence speed is $p^{\alpha}$ for $\alpha \in (1,2]$. For any $n \ge 1$ and $k \ge 3$, we have
  \begin{align}
    R^*(n,k;\phi) \gtrsim \frac{1}{n}.
  \end{align}
\end{theorem}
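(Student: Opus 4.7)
The plan is to apply LeCam's two-point method with two three-supported distributions on $[k]$ (which is possible thanks to $k \ge 3$) differing by an $\epsilon \asymp 1/\sqrt{n}$ perturbation on two coordinates, so that the functional values differ by $\asymp \epsilon$ while the samples remain statistically indistinguishable.

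The first task is to extract, from the divergence-speed hypothesis, two constants $a, b \in (0,1)$ (independent of $n$ and $k$) at which $\phi^{(1)}$ takes distinct values. Since $\alpha \in (1,2]$, the lower bound $\abs*{\phi^{(2)}(p)} \ge W_2\, p^{\alpha-2} - c'_2$ is strictly positive on a sufficiently small interval $(0, p_0]$, with $p_0$ depending only on $W_2, c'_2, \alpha$; in fact $\abs*{\phi^{(2)}(p)} \ge W_2\, p^{\alpha-2}/2$ on that interval. Continuity then forces $\phi^{(2)}$ to keep a constant sign there, so $\phi^{(1)}$ is strictly monotone. Setting $a = p_0/2$ and $b = p_0$ yields
\[
\abs*{\phi^{(1)}(b) - \phi^{(1)}(a)} = \abs*{\int_a^b \phi^{(2)}(x)\, dx} \ge \frac{b-a}{2}\, W_2\, b^{\alpha-2} =: c_\phi > 0,
\]
a constant depending only on $\phi$.

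With $(a,b)$ fixed, I would define
\[
P_0 = (1-a-b,\,a,\,b,\,0,\dots,0), \qquad P_1 = (1-a-b,\,a+\epsilon,\,b-\epsilon,\,0,\dots,0)
\]
on $[k]$, for some $\epsilon \in (0, (a \land b)/2)$ to be chosen. A Taylor expansion gives $\theta(P_0) - \theta(P_1) = \epsilon\bigl(\phi^{(1)}(b) - \phi^{(1)}(a)\bigr) + O(\epsilon^2)$, so $\abs*{\theta(P_0) - \theta(P_1)} \ge c_\phi \epsilon / 2$ for all sufficiently small $\epsilon$. For the information side, the $\chi^2$ majorization of the KL divergence gives $\chi^2(P_0, P_1) = \epsilon^2/(a+\epsilon) + \epsilon^2/(b-\epsilon) \lesssim \epsilon^2$, and tensorization yields $\mathrm{KL}(P_0^{\otimes n} \| P_1^{\otimes n}) \lesssim n\epsilon^2$. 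Setting $\epsilon = c/\sqrt{n}$ with $c$ a small absolute constant, the KL stays below a fixed constant, so by Pinsker (or Bretagnolle--Huber) the total variation distance between the $n$-fold products is bounded away from $1$. LeCam's two-point lemma then delivers
\[
R^*(n, k; \phi) \gtrsim \abs*{\theta(P_0) - \theta(P_1)}^2 \gtrsim \epsilon^2 \asymp \frac{1}{n}.
\]

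The main obstacle is the first step: converting the pointwise divergence-speed bound into a pair $(a,b)$ that is truly constant in $n$ and $k$. For $\alpha \in (1,2)$ this is automatic because the factor $p^{\alpha-2}$ blows up near the origin and dominates $c'_2$ there; the borderline case $\alpha = 2$ needs the mild non-degeneracy $W_2 > c'_2$, which is the natural reading of the divergence-speed condition and is in any case necessary, since otherwise $\phi$ could be affine and $\theta(P;\phi)$ would be a constant admitting zero estimation risk.
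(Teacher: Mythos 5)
Your proposal is correct and follows essentially the same route as the paper's proof: LeCam's two-point method with an $\epsilon \asymp 1/\sqrt{n}$ perturbation, a $\chi^2$ upper bound on the KL divergence, and a lower bound on a difference of first-derivative values extracted from the divergence-speed lower bound on $\phi^{(2)}$ (the paper perturbs the large atom $1-p$ against the uniform tail $p/(k-1)$ rather than two fixed atoms $a,b$, but the mechanism is identical). Two minor points: you should cap $p_0$ (say by $2/3$) so that $a+b\le 1$ and the vectors are genuine probability distributions, and your remark about the $\alpha=2$ borderline is apt --- the paper's proof carries the same implicit non-degeneracy requirement $W_2 > c'_2$ there.
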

The proof can be found in \cref{Sec:proof-thm-lower1}.

The remaining term is $k^2/(n\ln n)^{2\alpha}$ for $\alpha \in (1,3/2)$. We prove this term in the next subsection. To prove this term, we connect the minimax lower bound to the best polynomial approximation error. Besides, we show the error rate of the best polynomial approximation error. Combining these results, we prove the $k^2/(n\ln n)^{2\alpha}$ term.

\subsection{Lower Bound Analysis for $\alpha \in (1,3/2)$}\label{sec:lower-1-3/2}

Here, we will prove the following lower bound.
\begin{theorem}\label{thm:lower2}
  Suppose $\phi:[0,1]\to\RealSet$ is a function of which second divergence speed is $p^{\alpha}$ for $\alpha \in (1,3/2)$. If $n \gtrsim k^{1/\alpha}/\ln k$,
  \begin{align}
    R^*(n,k;\phi) \gtrsim \frac{k^2}{(n\ln n)^{2\alpha}}.
  \end{align}
\end{theorem}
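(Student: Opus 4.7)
The plan is to follow the polynomial method of \textcite{wu2016minimax}. After reducing to the Poissonized model via $R^*(n,k;\phi)\asymp\tilde{R}^*(n,k;\phi)$, I would construct two random priors $\pi_0,\pi_1$ on $\dom{M}_k$ whose induced mixtures on the observation $\tilde{N}$ are indistinguishable in total variation, yet whose induced distributions on the functional $\theta(P)$ are well separated. LeCam's two-point method, applied to these mixture hypotheses, then converts the functional separation into a minimax lower bound.

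Concretely, set $L\asymp\ln n$ and $\Delta\asymp\ln n/n$. By linear programming duality for best polynomial approximation (exactly as in Wu--Yang), there exist two probability measures $\nu_0,\nu_1$ supported on $[0,\Delta]$ whose first $L$ moments agree,
\begin{align}
  \int x^j\,d\nu_0(x)=\int x^j\,d\nu_1(x),\qquad j=0,1,\ldots,L,
\end{align}
and such that $\abs*{\int\phi\,d\nu_1-\int\phi\,d\nu_0}\gtrsim E_L(\phi,[0,\Delta])$. A small perturbation (mixing with a point mass, as in Wu--Yang) can be used to additionally enforce the mean to be exactly $1/k$; this is feasible because the condition $n\gtrsim k^{1/\alpha}/\ln k$ guarantees $\Delta\gtrsim 1/k$. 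With these measures, I define $\pi_j$ by sampling $p_1,\ldots,p_k$ i.i.d.\ from $\nu_j$ (in the Poisson model, exact normalization can be dispensed with). Because the first $L$ factorial moments of the mixture Poissons $\int\Poi(np)\,d\nu_0(p)$ and $\int\Poi(np)\,d\nu_1(p)$ coincide, a standard Wu--Yang computation bounds the total variation between the $k$-fold product mixtures by $o(1)$ provided $L$ is a sufficiently large multiple of $\ln n$. Meanwhile, since $\theta(P)=\sum_i\phi(p_i)$ is a sum of i.i.d.\ terms under each $\pi_j$, it concentrates around $k\int\phi\,d\nu_j$, so the induced separation in $\theta$ is $\gtrsim k\cdot E_L(\phi,[0,\Delta])$. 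Applying LeCam then yields $\tilde{R}^*\gtrsim (k\,E_L)^2$.

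The crux of the argument is therefore the polynomial approximation lower bound $E_L(\phi,[0,\Delta])\gtrsim(\Delta/L^2)^\alpha$, since substituting $L\asymp\ln n,\Delta\asymp\ln n/n$ then yields $E_L\gtrsim 1/(n\ln n)^\alpha$ and so the claimed $k^2/(n\ln n)^{2\alpha}$ rate. For $\phi(p)=p^\alpha$ this bound is classical. For a general $\phi$ whose divergence speed is $p^\alpha$, I would exploit the lower side of the divergence-speed hypothesis, $\abs*{\phi^{(\ell)}(p)}\gtrsim W_\ell p^{\alpha-\ell}-c'_\ell$, in tandem with a Markov--Bernstein argument: any degree-$L$ polynomial $g$ with $\norm{\phi-g}_{[0,\Delta],\infty}\le\varepsilon$ has $\ell$-th derivatives controlled by $\varepsilon L^{2\ell}/\Delta^\ell$ in the interior, whereas $\phi^{(\ell)}$ has a genuine singularity of order $p^{\alpha-\ell}$ at the origin; evaluating at $p\asymp\Delta/L^2$ and choosing $\ell=\lceil\alpha\rceil+1$ forces $\varepsilon\gtrsim(\Delta/L^2)^\alpha$.

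The main obstacle is this polynomial approximation lower bound for a general $\phi$ characterized only via its divergence speed. Unlike the $\phi(p)=p^\alpha$ case, one cannot invoke known constructive lower bounds directly and must instead argue intrinsically from the two-sided divergence-speed inequality, being careful that the additive constants $c'_\ell$ (which are lower-order relative to the singular part for $p$ near $\Delta/L^2$) do not corrupt the bound. A secondary technical point is verifying that the prior construction concentrates $\theta(P)$ around $k\int\phi\,d\nu_j$ with fluctuations of order smaller than the separation $kE_L$; this is where the hypothesis $n\gtrsim k^{1/\alpha}/\ln k$ enters to ensure the relevant variance bound holds.
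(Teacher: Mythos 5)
Your overall architecture is the same as the paper's: Poissonization, two moment-matched priors whose induced Poisson mixtures are $o(1)$-close in total variation (via \cref{lem:tv-poi-bound}), concentration of $\theta(P)$ under each prior, LeCam's method, and a reduction of the whole problem to a \emph{lower} bound on a best-polynomial-approximation error. The paper packages the reduction as \cref{thm:poly-approx-lower}, working with $\phi^\star(p)=\phi(p)/p$ on $[\gamma,\lambda/k]$ so that the first moment is fixed exactly by a change of density, together with the passage through the approximate minimax risk over near-normalized vectors; your point-mass normalization is an acceptable variant of that bookkeeping, and your parameter choices ($L\asymp\ln n$, $\Delta\asymp \ln n/n$, separation $kE_L$) match the paper's up to the auxiliary $\delta$ it carries.

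The gap is precisely in the step you flag as the crux, and your proposed fix does not close it. Markov's inequality bounds $\sup_{[0,\Delta]}\abs{g^{(\ell)}}$ by $(L^2/\Delta)^\ell\sup_{[0,\Delta]}\abs{g}$, i.e., by the size of $g$ itself (of order $\Delta^\alpha$, since $g$ is within $\varepsilon$ of $\phi$), \emph{not} by $\varepsilon L^{2\ell}/\Delta^{\ell}$; uniform closeness $\norm{\phi-g}_\infty\le\varepsilon$ gives no pointwise control of $g^{(\ell)}-\phi^{(\ell)}$. Quantitatively, at $p\asymp\Delta/L^2$ one has $\abs{\phi^{(\ell)}(p)}\asymp\Delta^{\alpha-\ell}L^{2\ell-2\alpha}$, which is \emph{smaller} than the Markov bound $\Delta^{\alpha-\ell}L^{2\ell}$ on $g^{(\ell)}$ by a factor $L^{-2\alpha}$, so no contradiction arises and no lower bound on $\varepsilon$ follows. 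To make a derivative-singularity argument work one must telescope over degrees (apply Markov to $g_{2^{j+1}L}-g_{2^jL}$, whose sup norm \emph{is} controlled by $E_{2^jL}$) — that is, invoke a converse theorem of approximation. This is exactly what the paper does in \cref{thm:lower-poly-approx}: it combines a lower bound on the first-order Ditzian--Totik modulus $\omega^1_\varphi(\phi^\star_{\eta,\gamma},t)\gtrsim\gamma^{\alpha-1}$, the upper bound $E_m(\phi^\star_\gamma,[-1,1])\lesssim\gamma^{\alpha-1}$, and the converse estimate $\frac{1}{L}\sum_{m\le L}E_m\gtrsim\omega^1_\varphi(f,L^{-1})$, and even then obtains the lower bound only along a subsequence (a $\limsup$), which is why the final proof of \cref{thm:lower2} introduces $\delta>0$ and lets it tend to $0$. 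Your clean claim $E_L(\phi,[0,\Delta])\gtrsim(\Delta/L^2)^\alpha$ for all $L$ is therefore unsupported as written (and stronger than what the paper establishes). A secondary point: the divergence-speed hypothesis lower-bounds only $\abs{\phi^{(\ell)}}$, so any mean-value or divided-difference step must first argue that $\phi^{(2)}$ has constant sign near $0$, as the paper does on $(0,p_0]$.
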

The proof of \cref{thm:lower2} basically follows the same manner of \textcite{wu2016minimax}. \textcite{wu2016minimax} characterized the lower bound on the minimax risk for Shannon entropy by the best polynomial approximation error.

We firstly connect the minimax lower bound to the best polynomial approximation error. The precise claim is shown in the following theorem.
\begin{theorem}\label{thm:poly-approx-lower}
  Suppose $\phi$ is Lipschitz continuous, and $\phi^{(1)}$ is $(\alpha-1)$-H\"older continuous such that $\phi^{(1)}(0)=0$. Let $\phi^\star(p)=\phi(p)/p$. For any given integer $L > 0$, $\lambda \le k$, and $\gamma \in (0,1)$ such that $\gamma \le \lambda/k$, with $d = 2k\gamma E_L(\phi^\star, [\gamma,\lambda/k])$ the following holds
  \begin{multline}
   \tilde{R}^*(n/2,k;\phi) \ge \frac{d^2}{32}\paren[\Bigg]{\frac{7}{8} - k\paren*{\frac{2en\lambda}{kL}}^L - \frac{32\norm*{\phi^{(1)}}_{C^{0,\alpha-1}}^2\lambda^{2\alpha}}{k^{2\alpha-1}d^2} \\
  - \frac{128\norm*{\phi}_{C^{0,1}}^2e^{-n/32}}{d^2} - \frac{512\norm*{\phi}_{C^{0,1}}^2\lambda^2}{kd^2}}.
  \end{multline}
  as long as $\lambda/\sqrt{k} < 1/12$.
\end{theorem}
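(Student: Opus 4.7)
I would apply the two-fuzzy-hypotheses framework of \textcite{wu2016minimax}: construct two priors $\pi_0,\pi_1$ on the simplex $\dom{M}_k$ so that, under Poissonized sampling, the two observation laws are statistically nearly indistinguishable, while $\theta(P)$ has well-separated expected values under $\pi_0$ and $\pi_1$. The presence of $d=2k\gamma E_L(\phi^\star,[\gamma,\lambda/k])$ indicates that best polynomial approximation is performed on the auxiliary function $\phi^\star(p)=\phi(p)/p$ over $[\gamma,\lambda/k]$, and the factor $\gamma$ should arise from translating a $\phi^\star$-gap into a $\phi$-gap via $\phi(u)=u\phi^\star(u)\ge\gamma\phi^\star(u)$ on this interval.

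\textbf{Priors from dual moment matching.} The Chebyshev-type duality between best polynomial approximation and moment matching supplies two probability measures $\mu_0,\mu_1$ supported on $[\gamma,\lambda/k]$ with $\int u^j\,d\mu_0(u)=\int u^j\,d\mu_1(u)$ for $j=0,\ldots,L$ and $\abs*{\int\phi^\star\,d\mu_0-\int\phi^\star\,d\mu_1}=2E_L(\phi^\star,[\gamma,\lambda/k])$. I would let $\pi_b$ be the joint law of $P=(U_1,\ldots,U_k)$ with $U_i$ drawn iid from $\mu_b$, and substitute any realization with $\sum_i U_i>1$ by a fixed probability vector so that $\pi_b$ is supported on $\dom{M}_k$. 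Since $U_i\le\lambda/k$, a Chebyshev estimate gives $\p\cbrace{\sum_i U_i>1}\lesssim\lambda^2/k$ under the assumption $\lambda/\sqrt{k}<1/12$, which accounts for the $512\norm*{\phi}_{C^{0,1}}^2\lambda^2/(kd^2)$ term.

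\textbf{TV bound, separation, and variance.} Under $\pi_b$ each coordinate $\tilde N_i\sim\Poi(nU_i)$ has marginal equal to a Poisson mixture with mixing law $\mu_b$. Expanding $e^{-nu}(nu)^x/x!$ in $u$ and exploiting the matched first $L$ moments cancels the leading terms, giving $\norm{\Poi(n\mu_0)-\Poi(n\mu_1)}_{TV}\le(2en\lambda/(kL))^L$; tensorization across $k$ independent coordinates then yields the $k(2en\lambda/(kL))^L$ term. For the separation, $\Mean_{\pi_b}[\theta(P)]=k\int\phi\,d\mu_b$, and since $\phi(u)\ge\gamma\phi^\star(u)$ on the support, $\abs*{\Mean_{\pi_0}[\theta]-\Mean_{\pi_1}[\theta]}\ge 2k\gamma E_L(\phi^\star,[\gamma,\lambda/k])=d$. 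By \cref{lem:div-speed-holder}, $\phi^{(1)}(0)=0$ and $\phi^{(1)}$ is $(\alpha-1)$-H\"older, so on $[0,\lambda/k]$ we have $\abs*{\phi(u)-\phi(v)}\lesssim\norm*{\phi^{(1)}}_{C^{0,\alpha-1}}(\lambda/k)^{\alpha-1}\abs*{u-v}$, giving $\Var_{\mu_b}[\phi(U)]\lesssim\norm*{\phi^{(1)}}_{C^{0,\alpha-1}}^2(\lambda/k)^{2\alpha}$ and hence $\Var_{\pi_b}[\theta]\lesssim\norm*{\phi^{(1)}}_{C^{0,\alpha-1}}^2\lambda^{2\alpha}/k^{2\alpha-1}$, which is precisely the third subtractive term.

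\textbf{Assembly and main obstacle.} These pieces combine through a standard Le Cam two-point argument: thresholding any estimator $\hat\theta$ at the midpoint of the two means produces a test whose probability of correct decision is at least $1/2-(1/2)\norm{\cdot}_{TV}$, minus the probabilities that $\hat\theta$ deviates from $\Mean_{\pi_b}[\theta]$ by $d/4$ (Chebyshev, using the variance computed above), that $\sum_i U_i>1$, and that the Poissonized total sample count is far from its mean (this last event is responsible for the $128\norm*{\phi}_{C^{0,1}}^2 e^{-n/32}/d^2$ term, which comes from the reduction between the two Poissonized halves that turns $n/2$ into $n$). Converting the test success probability into a risk lower bound by Markov then gives $(d/4)^2$ times the residual probability, and $7/8=1-1/8$ absorbs the slack, yielding exactly the stated inequality. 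The principal obstacle is paragraph two: producing $\mu_0,\mu_1$ that are simultaneously probability measures on the narrow strip $[\gamma,\lambda/k]$, match all $L$ moments so the TV shrinks as $(n\lambda/(kL))^L$, and realize the full dual gap of $\phi^\star$; the inequality $u\ge\gamma$ is what preserves a factor $\gamma$ rather than the much smaller $\lambda/k$ when converting the $\phi^\star$-approximation error into the $\phi$-separation, and this is what makes the resulting lower bound tight enough to match the upper bound proved in \cref{sec:forth-order}.
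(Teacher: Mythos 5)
Your overall architecture matches the paper's: a Wu--Yang-style fuzzy-hypothesis argument with moment-matched priors, the $(2en\lambda/kL)^L$ total-variation bound for Poisson mixtures (\cref{lem:tv-poi-bound}), a Chebyshev variance control based on $\phi^{(1)}(0)=0$ and H\"older continuity giving the $\lambda^{2\alpha}/k^{2\alpha-1}$ term, and a normalization/Poissonization reduction producing the $e^{-n/32}$ and $\lambda^2/k$ terms. The one genuine gap is exactly at the step you yourself flag as the principal obstacle: you convert the $\phi^\star$-gap into a $\phi$-gap via the pointwise inequality $\phi(u)=u\phi^\star(u)\ge\gamma\phi^\star(u)$. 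That inequality does not transfer to differences of expectations. If $\mu_0,\mu_1$ realize the dual gap for $\phi^\star$, then $\Mean_{\mu_0}[\phi(U)]-\Mean_{\mu_1}[\phi(U)]=\Mean_{\mu_0}[U\phi^\star(U)]-\Mean_{\mu_1}[U\phi^\star(U)]$, and the weight $U$ varies over $[\gamma,\lambda/k]$; the signed measure $\mu_0-\mu_1$ may concentrate its positive part where $U$ is small and its negative part where $U$ is large, so the weighted difference can be far smaller than $\gamma\cdot 2E_L(\phi^\star,\cdot)$, and can even vanish or change sign. As written, your priors therefore come with no guaranteed separation $d$ for $\theta$.

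The paper's fix (\cref{lem:best-approx-solution}) is to tilt the measures rather than to bound pointwise: starting from the dual pair $\rho'_0,\rho'_1$ on $[\gamma,\gamma/\eta]$ realizing the gap for $\phi^\star$, it defines $\nu_i$ by the density $\tfrac{d\nu_i}{d\rho'_i}(u)=\gamma/u$ and assigns the leftover mass to an atom at $0$. Then $\Mean_{\nu_i}[\phi(X)]=\gamma\,\Mean_{\rho'_i}[\phi^\star(X)]$ exactly (using $\phi(0)=0$), so the per-coordinate separation is exactly $2\gamma E_L(\phi^\star,[\gamma,\lambda/k])$; moreover $\Mean_{\nu_i}[X^m]=\gamma\,\Mean_{\rho'_i}[X^{m-1}]$, so both first moments equal $\gamma$ (needed for the common normalizing coordinate $1-\beta$ in \cref{thm:approx-tv-lower}) and moments $2,\dots,L+1$ still match, which is what feeds \cref{lem:tv-poi-bound}. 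A secondary, non-fatal difference: the paper does not replace bad realizations by a fixed vector but passes through the approximate simplex $\dom{M}_k(\epsilon)$ and conditions on the good events (\cref{thm:lower-approximated,thm:approx-tv-lower}); your replacement variant could be made to work, but the tilting construction above is indispensable and must replace your pointwise comparison.
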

By \cref{thm:poly-approx-lower}, we characterize the minimax lower bound by the best polynomial approximation error $E_L(\phi^\star, [\gamma,\lambda/k])$.

We next derive the lower bound on $E_L(\phi^\star, [\gamma,\lambda/k])$.
\begin{theorem}\label{thm:lower-poly-approx}
  For $\gamma \in (0,1)$, let $\cbrace*{\phi_\gamma}$ be a family of functions such that $\phi_\gamma(\gamma) = 0$, $\phi_\gamma^{(1)}(0)=0$, and the second order divergence speed of all elements is $p^\alpha$ for $\alpha \in (1,3/2)$. Denote $\phi_\gamma^\star(x)=\phi_\gamma(x)/x$. Then, there exists an universal constant $c > 0$ such that
  \begin{align}
    \limsup_{L \to \infty, \gamma \to 0 : \gamma \le 1/2L^2}\gamma^{1-\alpha}E_L\paren*{\phi_\gamma^\star, [\gamma,2L^2\gamma]} > c.
  \end{align}
\end{theorem}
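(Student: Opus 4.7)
The plan is to extract from the second-divergence-speed hypothesis the leading power-law behaviour of $\phi_\gamma$, affinely rescale the interval $[\gamma,2L^2\gamma]$ to the $\gamma$-independent interval $[1,2L^2]$ so that the target scale $\gamma^{\alpha-1}$ factors out, and reduce to a classical lower bound on the best polynomial approximation of $v^{\alpha-1}$ near its singularity at $v=0$. Integrating $|\phi_\gamma^{(2)}(p)|=W_2 p^{\alpha-2}+O(1)$ twice and using $\phi_\gamma^{(1)}(0)=\phi_\gamma(\gamma)=0$, after passing to a subsequence on which $\phi_\gamma^{(2)}$ has constant sign near the origin one obtains the uniform decomposition
\[\phi_\gamma(p)=\kappa(p^\alpha-\gamma^\alpha)+\psi_\gamma(p),\qquad \kappa:=\pm\frac{W_2}{\alpha(\alpha-1)},\]
where $\psi_\gamma\in C^2$ satisfies $\psi_\gamma(\gamma)=\psi_\gamma^{(1)}(0)=0$ and $|\psi_\gamma(p)|\lesssim p^2+\gamma^2$, $|\psi_\gamma^{(1)}(p)|\lesssim p$ uniformly in the family. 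Dividing by $x$ and substituting $u=x/\gamma$ (which leaves $E_L$ invariant) gives, on $u\in[1,2L^2]$,
\[\phi_\gamma^\star(\gamma u)=\kappa\gamma^{\alpha-1}(u^{\alpha-1}-u^{-1})+\psi_\gamma(\gamma u)/(\gamma u).\]

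By the reverse triangle inequality $E_L(f+g)\ge E_L(f)-E_L(g)$, it suffices to (i) lower-bound $E_L(u^{\alpha-1}-u^{-1},[1,2L^2])$ by a positive constant as $L\to\infty$ and (ii) show that the remainder contributes only $o(\gamma^{\alpha-1})$. For (ii), the bounds on $\psi_\gamma$ yield $|(\psi_\gamma(x)/x)^{(1)}|\lesssim 1$ on $[\gamma,2L^2\gamma]$; the pullback to $[-1,1]$ has $C^1$-norm $\lesssim L^2\gamma$, and Jackson's theorem gives $E_L(\psi_\gamma/x,[\gamma,2L^2\gamma])\lesssim L\gamma$. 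Along the subsequence $\gamma=L^{-3}$, permitted by $\gamma\le 1/(2L^2)$, one has $L\gamma/\gamma^{\alpha-1}=L^{3\alpha-5}\to 0$ for $\alpha<5/3$, so the remainder is negligible throughout $\alpha\in(1,3/2)$.

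The crux, and the main obstacle, is the $L$-uniform lower bound
\[E_L(u^{\alpha-1}-u^{-1},[1,2L^2])\gtrsim 1\quad (L\to\infty).\qquad(\star)\]
Rescaling by $v=u/(2L^2)\in[1/(2L^2),1]$ reduces $(\star)$ to $E_L(v^{\alpha-1}-(2L^2)^{-\alpha}v^{-1},[1/(2L^2),1])\gtrsim L^{-2(\alpha-1)}$, a slightly truncated version of the classical sharp Bernstein asymptotic $E_L(v^{\alpha-1},[0,1])\sim C_\alpha L^{-2(\alpha-1)}$ for non-integer $\alpha-1\in(0,1/2)$. The truncation to $[1/(2L^2),1]$ does not change the leading rate because the Chebyshev nodes driving the extremal behaviour on $[0,1]$ are at density $\asymp 1/L^2$ near the origin. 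The delicacy is that a naive triangle split $E_L(u^{\alpha-1})-E_L(u^{-1})$ fails: a Bernstein-ellipse analysis (the singularity at $u=0$ sits at distance $1$ from an interval of length $\sim L^2$, yielding Joukowsky parameter $\rho=1+O(1/L)$ and $\rho^L=\Theta(1)$) shows both individual errors on $[1,2L^2]$ are of comparable order $\Theta(1)$. One must therefore treat $u^{\alpha-1}-u^{-1}$ jointly, either via LP duality against a pulled-back signed Chebyshev measure on $[1,2L^2]$ or via a direct asymptotic expansion of the Chebyshev coefficients of $v^{\alpha-1}$ on the truncated interval that separates the two contributions by explicit constants.
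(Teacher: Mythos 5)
Your reduction is set up sensibly, but the proof has a genuine gap at exactly the point you yourself flag as ``the crux'': the inequality $(\star)$, $E_L(u^{\alpha-1}-u^{-1},[1,2L^2])\gtrsim 1$, is never proved. Everything else in the proposal (the two integrations of the divergence-speed bound, the sign argument, the $O(L\gamma)$ Jackson estimate for the remainder $\psi_\gamma/x$, the choice $\gamma=L^{-3}$, which is legitimate for a limsup statement) is routine bookkeeping; all of the analytic content of the theorem lives in $(\star)$, and for it you offer only two unexecuted strategies (LP duality against an unspecified annihilating measure, or an asymptotic expansion of Chebyshev coefficients on a truncated interval). In particular, the claim that the truncation of $[0,1]$ to $[1/(2L^2),1]$ preserves the Bernstein rate $E_L(v^{\alpha-1},[0,1])\asymp L^{-2(\alpha-1)}$ is asserted on the basis of a node-density heuristic; since the truncation removes precisely the neighborhood of the singularity that drives the classical lower bound, this is the step that must be proved, not assumed. (A secondary quibble: your Bernstein-ellipse accounting for $u^{-1}$ on $[1,2L^2]$ is off --- the exponential factor is indeed $\Theta(1)$, but the amplitude factor $\asymp (b-a)^{-1}(t_0^2-1)^{-1/2}$ makes that term $\Theta(1/L)$, not $\Theta(1)$, so the naive triangle split is less hopeless than you suggest; but it still requires the unproven lower bound on $E_L(u^{\alpha-1},[1,2L^2])$.)

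For contrast, the paper avoids this difficulty entirely by working with the first-order Ditzian--Totik modulus $\omega^1_\varphi$ with weight $\varphi(x)=\sqrt{1-x^2}$. Near the endpoint $x=-1$ the weight permits increments of length $\asymp t^2$ in the reference variable, i.e.\ $p$ ranging over roughly $[\gamma,3\gamma]$, over which the divergence-speed lower bound forces $|\phi^\star(p)-\phi^\star(\gamma)|\gtrsim\gamma^{\alpha-1}$; this gives $\omega^1_\varphi(\phi^\star_{\cdot,\gamma},t)\gtrsim\gamma^{\alpha-1}$ directly. The inverse (converse) theorem for $\omega^1_\varphi$ then bounds $\frac{1}{L}\sum_{m\le L}E_m$ from below by the modulus, and an upper bound $E_m\lesssim\gamma^{\alpha-1}$ for the small-$m$ terms (via the ordinary Jackson inequality) converts the averaged bound into a bound on $E_L$ itself. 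This route never requires isolating the $u^{\alpha-1}$ and $u^{-1}$ contributions, which is precisely where your argument stalls. To complete your proposal you would need to supply a full proof of $(\star)$ --- e.g.\ an explicit dual certificate --- and that is a substantial piece of work, comparable in difficulty to the theorem itself.
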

\begin{remark}
  We can choose such family $\cbrace*{\phi_\gamma}$ because the minimax risk is invariant among $\phi_{c,c'}(x) = \phi(x)+c+c'(x-1/k)$ for any constants $c,c' \in \RealSet$.
\end{remark}
As proved by \cref{thm:lower-poly-approx}, we get $d \gtrsim k\gamma^\alpha$ as long as $\lambda/k = 2L^2\gamma$. Fix $\delta > 0$. Set $\gamma \asymp \frac{1}{n^{1+\delta}\ln n}$ and $L \asymp \ln n$. Then, we have from \cref{thm:poly-approx-lower,thm:lower-poly-approx} that for any sufficiently small $\delta > 0$,
\begin{align}
  \tilde{R}^*(n/2,k;\phi) \gtrsim \frac{k^2}{(n^{1+\delta}\ln n)^\alpha}.
\end{align}
By arbitrariness of $\delta > 0$, we get the lower bound on \cref{thm:optimal-rate-1-3/2}.

\section{Discussion}
In this paper, we reveal that the divergence speed characterizes the minimax optimal rate for $\alpha \in (1,3/2)$ and $\alpha \in [3/2,2]$. Combining the previous result in \autocite{DBLP:journals/corr/FukuchiS17}, the minimax rate is characterized by the divergence speed for any $\alpha \in (0,2]$ except $\alpha = 1$. The Shannon entropy case, i.e., $\phi(p)=-p\ln p$, is one special case in which the divergence speed of $\phi$ is $p^1$. The existing result for $\phi(p)=-p\ln p$ was given by \textcite{jiao2015minimax,wu2016minimax}, in which they proved the minimax optimal rate as $\frac{k^2}{(n\ln n)^2} + \frac{\ln^2 k}{n}$. We can expect that the same rate is obtained for $\alpha=1$. However, this still remains as an open problem.

\printbibliography

\appendix

\section{Proof of \cref{lem:div-speed-holder}}
\begin{proof}[Proof of \cref{lem:div-speed-holder}]
  The Lipschitz continuousness is proved by showing there exists an universal constant $C > 0$ such that
  \begin{align}
    \sup_{p \in (0,1)}\abs*{\phi^{(1)}(p)} \le C.
  \end{align}
  For any $p \in (0,1)$, the absolutely continuousness of $\phi^{(1)}$ gives
  \begin{align}
    \abs*{\phi^{(1)}(p)} =& \abs*{\int_0^p\phi^{(2)}(s)ds} \\
    \le& \int_0^p\abs*{\phi^{(2)}(s)}ds \\
    \le& \int_0^p\paren*{W_2s^{\alpha-2} + c_2}ds \\
    =& \frac{W_2}{\alpha-1}p^{\alpha-1} + c_2p \le \frac{W_2}{\alpha-1} + c_2.
  \end{align}

  Next, we prove the H\"older continuousness of $\phi^{(1)}$. The H\"older continuousness is proved by showing there exists an universal constant $C > 0$ such that for any $x,y \in (0,1)$,
  \begin{align}
    \abs*{\phi^{(1)}(x) - \phi^{(1)}(y)} \le C\abs*{x - y}^{\alpha-1}.
  \end{align}
  The absolutely continuousness of $\phi^{(1)}$ yields for any $x,y \in (0,1)$,
  \begin{align}
    \abs*{\phi^{(1)}(x) - \phi^{(1)}(y)} \le& \abs*{\int_x^y\phi^{(2)}(s)ds} \\
    \le& \int_x^y\abs*{\phi^{(2)}(s)}ds \\
    \le& \int_x^y\paren*{W_2s^{\alpha-2} + c_2}ds \\
    =& \frac{W_2}{\alpha-1}\paren*{x^{\alpha-1} - y^{\alpha-1}} + c_2\paren*{x-y}.
  \end{align}
  Since a function $x \to x^\beta$ for $\beta \in (0,1)$ is $\beta$-H\"older continuous and $\abs{x-y} \le 1$ for any $x,y \in (0,1)$, we have
  \begin{align}
    \abs*{\phi^{(1)}(x) - \phi^{(1)}(y)} \le& \frac{W_2}{\alpha-1}\abs*{x-y}^{\alpha-1} + c_1\abs*{x-y}^{\alpha-1}.
  \end{align}
\end{proof}

\section{Detailed Analysis of Fourth Order Bias Corrected Plugin Estimator}

We firstly prove that the smoothing procedure ensures $p^\beta\abs*{H^{(\ell)}_{L,\Delta}[\phi](p)} \lesssim \Delta^{\alpha+\beta-\ell}$.
\begin{lemma}\label{lem:hermite-bound}
  For $\alpha$, let $\phi:[0,1]\to\RealSet$ be a function of which $L$the divergence speed is $p^\alpha$, where $L > \alpha$ is an universal constant. For $\ell \le L$ and $\beta$ such that $\ell > \alpha + \beta$, we have
  \begin{align}
    \sup_{p > 0}p^\beta\abs*{H^{(\ell)}_{L,\Delta}[\phi](p)} \lesssim \Delta^{\alpha+\beta-\ell}.
  \end{align}
  Moreover, for $\ell \le L$ and $\beta$ such that $1 \le \ell \le \alpha + \beta$, we have
  \begin{align}
    \sup_{p > 0}p^\beta\abs*{H^{(\ell)}_{L,\Delta}[\phi](p)} \lesssim 1.
  \end{align}
\end{lemma}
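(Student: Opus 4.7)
The plan is to analyze $H_{L,\Delta}[\phi]$ piecewise, matching its five-piece definition. On $p \le \Delta/2$ and $p \ge 2$ the function is constant, so $H_{L,\Delta}^{(\ell)}[\phi] \equiv 0$ for $\ell \ge 1$ and both claimed bounds are trivial. The work is concentrated in the other three pieces, with the Hermite piece on $(\Delta/2,\Delta)$ being the main case.

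The first ingredient is a pointwise derivative bound $|\phi^{(m)}(p)| \lesssim p^{\alpha-m}$ for every $m \in \{1,\dots,L\}$ and every $p \in (0,1]$, under the WLOG normalization $\phi^{(1)}(0)=0$ noted in the main text. For $m > \alpha$ this follows by iteratively integrating the divergence-speed bound $|\phi^{(L)}(p)| \le W_L p^{\alpha-L} + c_L$ from $p$ to $1$; the negative-exponent term dominates as $p \to 0$. For $m < \alpha$ (which under $\alpha \in (1,2]$ only occurs at $m=1$), using $\phi^{(1)}(0)=0$ and integrating $|\phi^{(2)}(s)| \lesssim s^{\alpha-2}$ from $0$ to $p$ yields $|\phi^{(1)}(p)| \lesssim p^{\alpha-1}$. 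With this in hand the direct pieces are easy. On $p \in [\Delta,1]$, $H_{L,\Delta}[\phi]=\phi$, so $p^\beta|\phi^{(\ell)}(p)| \lesssim p^{\alpha+\beta-\ell}$; when $\ell > \alpha+\beta$ the exponent is negative and the supremum over $[\Delta,1]$ equals $\Delta^{\alpha+\beta-\ell}$, whereas when $\ell \le \alpha+\beta$ it is bounded by $1$. On $p \in (1,2)$, $H_L(p;\phi,1,2)$ is $\Delta$-independent with derivatives $O_L(\max_{m\le L}|\phi^{(m)}(1)|)=O(1)$ and $p^\beta \le 2^\beta$, so $p^\beta|H_{L,\Delta}^{(\ell)}[\phi](p)| \lesssim 1$, which is absorbed by both target bounds since $\Delta \le 1$ and $\alpha+\beta-\ell < 0$ in the first case.

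The main technical step is the Hermite piece. Change variable to $u = (p-\Delta)/(\Delta/2 - \Delta) \in (0,1)$, which is linear in $p$ with $du/dp = -2/\Delta$ (higher derivatives of $u$ vanishing). Setting $Q_m(u) = \sum_{\ell'=0}^{L-m}\frac{L+1}{L+\ell'+1} B_{\ell',L+\ell'+1}(u)$ and applying Leibniz to each summand of the Hermite formula yields
\[
\frac{d^\ell}{dp^\ell}\bigl[(p-\Delta)^m Q_m(u(p))\bigr] = \sum_{j=0}^{\min(\ell,m)}\binom{\ell}{j}\frac{m!}{(m-j)!}(p-\Delta)^{m-j}\bigl(-2/\Delta\bigr)^{\ell-j}Q_m^{(\ell-j)}(u(p)).
\]
With $|p-\Delta| \le \Delta/2$ and $\|Q_m^{(\ell-j)}\|_{L^\infty[0,1]}$ an $L$-dependent constant, each summand is $O(\Delta^{m-j}\Delta^{-(\ell-j)}) = O(\Delta^{m-\ell})$. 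Combining with $|\phi^{(m)}(\Delta)| \lesssim \Delta^{\alpha-m}$ from Step~1 and summing the $O_L(1)$ contributions over $m$ and $j$ gives $|H_{L,\Delta}^{(\ell)}[\phi](p)| \lesssim \Delta^{\alpha-\ell}$ uniformly on $(\Delta/2,\Delta)$. Multiplying by $p^\beta \le \Delta^\beta$ yields $\Delta^{\alpha+\beta-\ell}$ when $\ell > \alpha+\beta$ and $\lesssim 1$ when $\ell \le \alpha+\beta$. The hard part is precisely this last step: the positive $\Delta$-powers from $(p-\Delta)^{m-j}$ and the negative ones from $(du/dp)^{\ell-j}$ must telescope uniformly in $j$ to $\Delta^{m-\ell}$, and the sharp bound $|\phi^{(m)}(\Delta)| \lesssim \Delta^{\alpha-m}$ at $m=1<\alpha$ hinges essentially on the $\phi^{(1)}(0)=0$ normalization — without it the $m=1$ contribution would scale like $\Delta^{1-\ell} \gg \Delta^{\alpha-\ell}$ and the lemma would fail.
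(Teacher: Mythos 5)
Your proposal is correct and follows essentially the same route as the paper: a piecewise analysis in which the constant and identity pieces are immediate, and the Hermite piece on $(\Delta/2,\Delta)$ is handled by a Leibniz expansion whose factors $(p-\Delta)^{m-j}$ and $(du/dp)^{\ell-j}$ telescope to $\Delta^{m-\ell}$, combined with $\abs{\phi^{(m)}(\Delta)}\lesssim\Delta^{\alpha-m}$. Your explicit derivation of the lower-order derivative bounds $\abs{\phi^{(m)}(p)}\lesssim p^{\alpha-m}$ by integrating the $L$th divergence-speed bound is a small but worthwhile addition, since the paper invokes these bounds for all $m\le L$ without comment.
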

Using \cref{lem:hermite-bound}, we prove the bias and the variance of the bias corrected plugin estimator.
\begin{proof}[Proof of \cref{lem:plugin-bias-1-3/2}]
  Application of the Taylor theorems yields that
  \begin{align}
    & \Bias\bracket*{\bar\phi_{4,\Delta}\paren*{\frac{\tilde{N}}{n}}-\phi(p)} \\
    =& \begin{multlined}[t]
     \abs[\Bigg]{
       \Mean\bracket*{\frac{p}{2n}\phi^{(2)}(p)-\frac{\tilde{N}}{2n^2}H^{(2)}_{6,\Delta}[\phi]\paren*{\frac{\tilde{N}}{n}}}
       +\Mean\bracket*{\frac{p}{6n^2}\phi^{(3)}(p)-\frac{2\tilde{N}}{3n^3}H^{(3)}_{6,\Delta}[\phi]\paren*{\frac{\tilde{N}}{n}}} \\
       +\Mean\bracket*{\frac{p}{24n^3}\phi^{(4)}(p)-\frac{7\tilde{N}}{24n^4}H^{(4)}_{6,\Delta}[\phi]\paren*{\frac{\tilde{N}}{n}}}
       +\Mean\bracket*{\frac{p^2}{8n^2}\phi^{(4)}(p)-\frac{3\tilde{N}^2}{8n^4}H^{(4)}_{6,\Delta}[\phi]\paren*{\frac{\tilde{N}}{n}}} \\
       + \frac{p^2}{12n^3}\phi^{(5)}(p) + \frac{p}{120n^4}\phi^{(5)}(p) + \Mean\bracket*{R_5\paren*{\frac{\tilde{N}}{n};H_{6,\Delta},p}}
     },
    \end{multlined}
  \end{align}
  where $R_5$ denotes the reminder term of the Taylor theorem. Besides, we have
  \begin{align}
    &\Mean\bracket*{\frac{p}{2n}\phi^{(2)}(p)-\frac{\tilde{N}}{2n^2}H^{(2)}_{6,\Delta}[\phi]\paren*{\frac{\tilde{N}}{n}}} \\
    =& \Mean\bracket*{\frac{1}{2n}\phi^{(2)}(p)\paren*{p-\frac{\tilde{N}}{n}}+\frac{\tilde{N}}{2n}\paren*{\phi^{(2)}(p)-H^{(2)}_{6,\Delta}[\phi]\paren*{\frac{\tilde{N}}{n}}}} \\
    =& \begin{multlined}[t]
      \frac{p}{2n^2}\phi^{(3)}(p) + \frac{p}{4n^3}\phi^{(4)}(p) + \frac{p^2}{4n^2}\phi^{(4)}(p) \\ + \frac{p}{12n^4}\phi^{(5)}(p) + \frac{p^2}{3n^3}\phi^{(5)}(p) + \Mean\bracket*{\frac{\tilde{N}}{2n^2}R_3\paren*{\frac{\tilde{N}}{n};H^{(2)}_{6,\Delta},p}},
    \end{multlined}
  \end{align}
  where $R_4$ is the reminder term of the Taylor theorem. We use $\Mean[X(X-\lambda)]=\lambda$, $\Mean[X(X-\lambda)^2]=\lambda^2+\lambda$, $\Mean[X(X-\lambda)^3]=4\lambda^2+\lambda$ for $X \sim \Poi(\lambda)$ to get the last line. Hence,
  \begin{align}
    & \Bias\bracket*{\bar\phi_{4,\Delta}\paren*{\frac{\tilde{N}}{n}}-\phi(p)} \\
    \le& \begin{multlined}[t]
       \abs*{\Mean\bracket*{\frac{2p}{3n^2}\phi^{(3)}(p)-\frac{2\tilde{N}}{3n^3}H^{(3)}_{6,\Delta}[\phi]\paren*{\frac{\tilde{N}}{n}}}}
       +\abs*{\Mean\bracket*{\frac{7p}{24n^3}\phi^{(4)}(p)-\frac{7\tilde{N}}{24n^4}H^{(4)}_{6,\Delta}[\phi]\paren*{\frac{\tilde{N}}{n}}}} \\
       +\abs*{\Mean\bracket*{\frac{3p^2}{8n^2}\phi^{(4)}(p)-\frac{3\tilde{N}^2}{8n^4}H^{(4)}_{6,\Delta}[\phi]\paren*{\frac{\tilde{N}}{n}}}}
       + \frac{5p^2}{12n^3}\abs*{\phi^{(5)}(p)} + \frac{11p}{120n^4}\abs*{\phi^{(5)}(p)} \\
       + \abs*{\Mean\bracket*{R_5\paren*{\frac{\tilde{N}}{n};H_{6,\Delta},p}}} + \abs*{\Mean\bracket*{\frac{\tilde{N}}{2n^2}R_3\paren*{\frac{\tilde{N}}{n};H^{(2)}_{6,\Delta},p}}} .
   \end{multlined} \label{eq:plugin-bias1}
  \end{align}
  Let $\hat{p}=\frac{\tilde{N}}{n}$, and $G(x)=\frac{1}{x}(\hat{p}-x)^2$. For a doubly differentiable function $g$, the Taylor theorem and the mean value theorem give that there exists $\xi$ between $p$ and $\hat{p}$ such that
  \begin{align}
    \abs*{\Mean\bracket*{g(\hat{p})-g(p)}} =& \abs*{\Mean\bracket*{g^{(2)}(\xi)(\hat{p}-\xi)\frac{G(\hat{p})-G(p)}{G^{(1)}(\xi)}}} \\
    =& \abs*{\Mean\bracket*{g^{(2)}(\xi)\frac{\xi^2(\hat{p}-p)^2}{p(\xi+\hat{p})}}} \\
    \le& \frac{\sup_{\xi > 0}\xi\abs*{g^{(2)}(\xi)}}{n}.
  \end{align}
  For $g(p)=pH^{(3)}_{6,\Delta}[\phi](p)$, $\sup_{\xi > 0}\xi\abs*{g^{(2)}(\xi)} = \sup_{\xi > 0}\xi\abs*{2H^{(4)}_{6,\Delta}[\phi](\xi)+\xi H^{(5)}_{6,\Delta}[\phi](\xi)} \lesssim \Delta^{\alpha-3}$ because of \cref{lem:hermite-bound}. Thus, the first term in \cref{eq:plugin-var1} is bounded above as
  \begin{align}
    \abs*{\Mean\bracket*{\frac{2p}{3n^2}\phi^{(3)}(p)-\frac{2\tilde{N}}{3n^3}H^{(3)}_{6,\Delta}[\phi]\paren*{\frac{\tilde{N}}{n}}}} \lesssim \frac{\Delta^{\alpha-3}}{n^3}.
  \end{align}
  Similarly, the second and third terms in \cref{eq:plugin-var1} is bounded above as
  \begin{align}
    \abs*{\Mean\bracket*{\frac{7p}{24n^3}\phi^{(4)}(p)-\frac{7\tilde{N}}{24n^4}H^{(4)}_{6,\Delta}[\phi]\paren*{\frac{\tilde{N}}{n}}}} \lesssim& \frac{\Delta^{\alpha-4}}{n^4} \lesssim \frac{\Delta^{\alpha-3}}{n^3}, \\
    \abs*{\Mean\bracket*{\frac{3p^2}{8n^2}\phi^{(4)}(p)-\frac{3\tilde{N}^2}{8n^4}H^{(4)}_{6,\Delta}[\phi]\paren*{\frac{\tilde{N}}{n}}}} \lesssim& \frac{\Delta^{\alpha-3}}{n^3}.
  \end{align}
  From the divergence speed assumption, the fourth and fifth terms in \cref{eq:plugin-var1} is bounded above as
  \begin{align}
    \frac{5p^2}{12n^3}\abs*{\phi^{(5)}(p)} \lesssim \frac{p^{\alpha-3}}{n^3},
    \frac{11p}{120n^4}\abs*{\phi^{(5)}(p)} \lesssim \frac{p^{\alpha-4}}{n^3} \lesssim \frac{p^{\alpha-3}}{n^3}.
  \end{align}
  Let $G(x)=\frac{1}{x^3}(\hat{p}-x)^6$. For a sixth time differentiable function $g$, the Taylor theorem and the mean value theorem give that there exists $\xi$ between $p$ and $\hat{p}$ such that
 \begin{align}
   \abs*{\Mean\bracket*{R_5(\hat{p};g,p)}} =& \abs*{\Mean\bracket*{g^{(6)}(\xi)(\hat{p}-\xi)^5\frac{G(\hat{p})-G(p)}{G^{(1)}(\xi)}}} \\
   =& \abs*{\Mean\bracket*{g^{(6)}(\xi)\frac{\xi^4(\hat{p}-p)^6}{3p^3(\xi+\hat{p})}}} \\
   \le& \paren*{\frac{5}{n^3}+\frac{25}{3n^4p}+\frac{1}{3n^5p^2}}\sup_{\xi > 0}\xi^3\abs*{g^{(6)}(\xi)}.
 \end{align}
 For $g(p)=H_{6,\Delta}[\phi](p)$, $\sup_{\xi > 0}\xi^3\abs*{g^{(6)}(\xi)} = \sup_{\xi > 0}\xi^3\abs*{H^{(6)}_{6,\Delta}[\phi](\xi)} \lesssim \Delta^{\alpha-3}$. Hence, we have
 \begin{align}
   \abs*{\Mean\bracket*{R_5\paren*{\frac{\tilde{N}}{n};H_{6,\Delta},p}}} \lesssim& \frac{\Delta^{\alpha-3}}{n^3}+\frac{\Delta^{\alpha-3}}{n^4p}+\frac{\Delta^{\alpha-3}}{n^5p^2} \\
   \lesssim& \frac{\Delta^{\alpha-3}}{n}.
 \end{align}
 Let $G(x)=(\hat{p}-x)^4/x^3$. For a fourth time differentiable function $g$, the Taylor theorem and the mean value theorem give that there exists $\xi$ between $p$ and $\hat{p}$ such that
 \begin{align}
   \abs*{\Mean\bracket*{\hat{p}R_3(\hat{p};g,p)}} =& \abs*{\Mean\bracket*{\hat{p}g^{(4)}(\xi)(\hat{p}-\xi)^3\frac{G(\hat{p})-G(p)}{G^{(1)}(\xi)}}} \\
   =& \abs*{\Mean\bracket*{g^{(4)}(\xi)\frac{\xi^4\hat{p}(\hat{p}-p)^4}{p^3(\xi+3\hat{p})}}} \\
   \le& \paren*{\frac{3}{n^2}+\frac{11}{n^3p}+\frac{1}{n^4p^2}}\sup_{\xi > 0}\xi^3\abs*{g^{(4)}(\xi)}.
 \end{align}
 For $g(p)=H^{(2)}_{6,\Delta}[\phi](p)$, $\sup_{\xi > 0}\xi^3\abs*{g^{(4)}(\xi)} \lesssim \Delta^{\alpha-3}$. Hence,
 \begin{align}
   \abs*{\Mean\bracket*{\frac{\tilde{N}}{2n^2}R_3\paren*{\frac{\tilde{N}}{n};H^{(2)}_{6,\Delta},p}}} \lesssim \frac{\Delta^{\alpha-3}}{n^3}.
 \end{align}

\end{proof}
\begin{proof}[Proof of \cref{lem:plugin-var-1-3/2}]
  By the triangle inequality, we have
  \begin{align}
    & \frac{1}{5}\Var\bracket*{\bar\phi_{4,\Delta}\paren*{\frac{\tilde{N}}{n}}} \\
    \le& \frac{1}{5}\Mean\bracket*{\paren*{\bar\phi_{4,\Delta}\paren*{\frac{\tilde{N}}{n}} - \bar\phi_{4,\Delta}(p)}^2} \\
    \le& \begin{multlined}[t]
      \Mean\bracket*{\paren*{H_{6,\Delta}[\phi]\paren*{\frac{\tilde{N}}{n}} - H_{6,\Delta}[\phi](p)}^2} + \Mean\bracket*{\paren*{\frac{\tilde{N}}{2n^2}H^{(2)}_{6,\Delta}[\phi]\paren*{\frac{\tilde{N}}{n}} - \frac{p}{2n}H^{(2)}_{6,\Delta}[\phi](p)}^2} \\ + \Mean\bracket*{\paren*{\frac{2\tilde{N}}{3n^3}H^{(3)}_{6,\Delta}[\phi]\paren*{\frac{\tilde{N}}{n}} - \frac{2p}{3n^2}H^{(3)}_{6,\Delta}[\phi](p)}^2} \\ + \Mean\bracket*{\paren*{\frac{7\tilde{N}}{24n^4}H^{(4)}_{6,\Delta}[\phi]\paren*{\frac{\tilde{N}}{n}} - \frac{7p}{24n^3}H^{(4)}_{6,\Delta}[\phi](p)}^2} \\ + \Mean\bracket*{\paren*{\frac{3\tilde{N}^2}{8n^4}H^{(4)}_{6,\Delta}[\phi]\paren*{\frac{\tilde{N}}{n}} - \frac{3p^2}{8n^2}H^{(4)}_{6,\Delta}[\phi](p)}^2}. \label{eq:plugin-var1}
    \end{multlined}
  \end{align}
  We derive upper bounds for each term in \cref{eq:plugin-var1}. We derive a bound on the first term in \cref{eq:plugin-var1}. Applying \cref{lem:hermite-bound} with $\ell = 1 \le \alpha$, we obtain that $H_{6,\Delta}[\phi]$ is Lipschitz continuous. Hence,
  \begin{align}
    \Mean\bracket*{\paren*{H_{6,\Delta}[\phi]\paren*{\frac{\tilde{N}}{n}} - H_{6,\Delta}[\phi](p)}^2} \le& \sup_{\xi > 0}\Mean\bracket*{\paren*{H^{(1)}_{6,\Delta}[\phi](\xi)\paren*{\frac{\tilde{N}}{n} - p}}^2} \\
    \lesssim& \Var\bracket*{\frac{\tilde{N}}{n}} = \frac{p}{n}.
  \end{align}

  Next, we derive a bound on the second term in \cref{eq:plugin-var1}. Let $\hat{p}=\frac{\tilde{N}}{n}$, $g(p)=pH^{(2)}_{6,\Delta}[\phi](p)$, and $G(x)=\frac{1}{\sqrt{x}}(\hat{p}-x)$. The Taylor theorem and the mean value theorem give that there exists $\xi$ between $p$ and $\hat{p}$ such that
  \begin{align}
    \Mean\bracket*{\paren*{g(\hat{p})-g(p)}^2} =& \Mean\bracket*{\paren*{g^{(1)}(\xi)\frac{G(\hat{p})-G(p)}{G^{(1)}(\xi)}}^2} \\
    =& \Mean\bracket*{\paren*{g^{(1)}(\xi)\frac{2\xi^{\frac{3}{2}}(\hat{p}-p)}{\sqrt{p}(\xi+\hat{p})}}^2} \\
    \le& \frac{4\sup_{\xi > 0}\xi\abs*{g^{(1)}(\xi)}^2}{p}\Mean\bracket*{(\hat{p}-p)^2} \\
    =& \frac{4\sup_{\xi > 0}\xi\abs*{g(\xi)}^2}{n},
  \end{align}
  where we use $\Mean\bracket*{(\hat{p}-p)^2} = p/n$ to get the last line. From \cref{lem:hermite-bound}, we have $\sup_{\xi > 0}\xi\abs*{g^{(1)}(\xi)}^2 = \sup_{\xi > 0}\xi\abs*{H^{(2)}_{6,\Delta}[\phi](\xi)+\xi H^{(3)}_{6,\Delta}[\phi](\xi)}^2 \lesssim \Delta^{2\alpha-3}$. Hence,
  \begin{align}
    \Mean\bracket*{\paren*{\frac{\tilde{N}}{2n^2}H^{(2)}_{6,\Delta}[\phi]\paren*{\frac{\tilde{N}}{n}} - \frac{p}{2n}H^{(2)}_{6,\Delta}[\phi](p)}^2} \lesssim \frac{\Delta^{2\alpha-3}}{n^3} \lesssim \frac{1}{n^{2\alpha}}.
  \end{align}

  Bounds on the third, fourth, and fifth terms in \cref{eq:plugin-var1} can be obtained in the similar manner of the second term. Let $g(p)=pH^{(3)}_{6,\Delta}[\phi](p)$, $g(p)=pH^{(4)}_{6,\Delta}[\phi](p)$, or $g(p)=p^2H^{(4)}_{6,\Delta}[\phi](p)$. With the same manner of the second term, we have
  \begin{align}
    \Mean\bracket*{\paren*{g(\hat{p})-g(p)}^2} \le& \frac{4\sup_{\xi > 0}\xi\abs*{g^{(1)}(\xi)}^2}{n}.
  \end{align}
  \sloppy From \cref{lem:hermite-bound}, we have $\sup_{\xi > 0}\xi\abs*{H^{(3)}_{6,\Delta}[\phi](\xi)+\xi H^{(4)_{6,\Delta}[\phi](\xi)}}^2 \lesssim \Delta^{2\alpha-5}$, $\sup_{\xi > 0}\xi\abs*{H^{(4)}_{6,\Delta}[\phi](\xi)+\xi H^{(5)}_{6,\Delta}[\phi](\xi)}^2 \lesssim \Delta^{2\alpha-7}$, and $\sup_{\xi > 0}\xi\abs*{2\xi H^{(4)}_{6,\Delta}[\phi](\xi) + \xi^2H^{(5)}_{6,\Delta}[\phi](\xi)}^2 \lesssim \Delta^{2\alpha-5}$. Hence,
  \begin{align}
    \Mean\bracket*{\paren*{\frac{2\tilde{N}}{3n^3}H^{(3)}_{6,\Delta}[\phi]\paren*{\frac{\tilde{N}}{n}} - \frac{2p}{3n^2}H^{(3)}_{6,\Delta}[\phi](p)}^2} \lesssim& \frac{\Delta^{2\alpha-5}}{n^5} \lesssim \frac{1}{n^{2\alpha}}, \\
    \Mean\bracket*{\paren*{\frac{7\tilde{N}}{24n^4}H^{(4)}_{6,\Delta}[\phi]\paren*{\frac{\tilde{N}}{n}} - \frac{7p}{24n^3}H^{(4)}_{6,\Delta}[\phi](p)}^2} \lesssim& \frac{\Delta^{2\alpha-7}}{n^7} \lesssim \frac{1}{n^{2\alpha}}, \\
    \Mean\bracket*{\paren*{\frac{3\tilde{N}^2}{8n^4}H^{(4)}_{6,\Delta}[\phi]\paren*{\frac{\tilde{N}}{n}} - \frac{3p^2}{8n^2}H^{(4)}_{6,\Delta}[\phi](p)}^2} \lesssim& \frac{\Delta^{2\alpha-5}}{n^5} \lesssim \frac{1}{n^{2\alpha}}.
  \end{align}
\end{proof}

\section{Upper Bound Analysis for $\alpha \in (1,3/2)$}
Here, we prove the upper part of \cref{thm:optimal-rate-1-3/2}. In this section, we denote $\hat\theta$ as an estimator with the basic construction where $\phi_{\mathrm{poly}}$ is the best polynomial estimator and $\phi_{\mathrm{plugin}}$ is the fourth order bias corrected plugin estimator. We prove the following theorem.
\begin{theorem}\label{thm:upper-bound}
  Suppose $\Delta_{n,k} = C_2\ln n$ and $L = \floor{C_1\ln n}$ where $C_1$ and $C_2$ are universal constants such that $6C_1\ln 2 + 4\sqrt{C_1C_2}(1+\ln2) \le 3-2\alpha$ and $C_2 > 16\alpha$. If the sixth divergence speed of $\phi$ is $p^\alpha$ for $\alpha \in (1,3/2)$, the worst-case risk of $\hat\theta$ is bounded above by
  \begin{align}
    \sup_{P \in \dom{M}_k} \Mean\bracket*{\paren*{\hat\theta\paren{\tilde{N}} - \theta(P)}^2} \lesssim \frac{k^2}{(n\ln n)^{2\alpha}} + \frac{1}{n}.
  \end{align}
\end{theorem}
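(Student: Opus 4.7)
The plan is to exploit the coordinate-wise independence provided by the Poisson split. Let $\hat\phi_i(\tilde N_i,\tilde N_i')$ denote the $i$th summand of \cref{eq:hat-theta} and write $q_i=\p\cbrace*{\tilde N_i' \ge 2\Delta_{n,k}}$. Because the pairs $(\tilde N_i,\tilde N_i')$ are mutually independent across $i$,
\[
  \Mean\bracket*{\paren*{\hat\theta(\tilde N) - \theta(P)}^2}
  \le \sum_{i=1}^k \Var\bracket*{\hat\phi_i}
  + \paren*{\sum_{i=1}^k \abs*{\Bias\bracket*{\hat\phi_i}}}^2,
\]
and conditioning on $\tilde N_i'$ together with the independence of $\tilde N_i$ from $\tilde N_i'$ gives the mixture bound
\[
  \abs*{\Bias\bracket*{\hat\phi_i}} \le q_i\abs*{B_{\mathrm{plugin}}(p_i)} + (1-q_i)\abs*{B_{\mathrm{poly}}(p_i)},
\]
together with an analogous mixture bound on the variance, whose additional $q_i(1-q_i)$ cross term is $O(1)$ per coordinate by boundedness of $\phi$ (Lipschitz via \cref{lem:div-speed-holder}) and of the Hermite-smoothed plugin estimator.

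Next I would split each index by the magnitude of $p_i$. For $p_i \le \Delta_{n,k}/n$ a Poisson Chernoff bound gives $q_i \lesssim n^{-cC_2}$, so the plugin misalignment is negligible under the hypothesis $C_2 > 16\alpha$; the polynomial piece is controlled by \cref{lem:poly-bias,lem:best-error}, giving $\abs*{B_{\mathrm{poly}}(p_i)} \lesssim (\Delta_{n,k}/(nL^2))^\alpha \asymp 1/(n\ln n)^\alpha$, and by \cref{lem:poly-var}, whose factor $64^L(2e)^{2\sqrt{\Delta_{n,k}L}}$ is tamed by the hypothesis $6C_1\ln 2 + 4\sqrt{C_1C_2}(1+\ln 2) \le 3-2\alpha$ to yield $V_{\mathrm{poly}}(p_i) \lesssim (\ln n)^4/n^{1+2\alpha}$. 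For $p_i > \Delta_{n,k}/n$, \cref{lem:plugin-bias-1-3/2,lem:plugin-var-1-3/2} applied with $\Delta=\Delta_{n,k}/n$ give $\abs*{B_{\mathrm{plugin}}(p_i)} \lesssim 1/(n^\alpha(\ln n)^{3-\alpha})$ and $V_{\mathrm{plugin}}(p_i) \lesssim n^{-2\alpha}+p_i/n$; the polynomial part is handled either by the same approximation bound (when $p_i \le 4\Delta_{n,k}/n$, so that $p_i$ still lies in the polynomial's designed range) or by the symmetric Chernoff bound $1-q_i \lesssim n^{-cC_2}$ (when $p_i > 4\Delta_{n,k}/n$).

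Finally I would aggregate. The plugin variances sum to at most $(n/\Delta_{n,k})\cdot n^{-2\alpha} + \sum_i p_i/n \lesssim 1/n$ using $\alpha>1$, the polynomial variances sum to at most $k(\ln n)^4/n^{1+2\alpha} \lesssim 1/n$ using $n \gtrsim k^{1/\alpha}/\ln k$, and all misalignment cross terms sum to at most $kn^{-cC_2}\lesssim 1/n$ by the choice of $C_2$. The small-coordinate contribution to the squared bias sum is at most $(k/(n\ln n)^\alpha)^2 = k^2/(n\ln n)^{2\alpha}$. The hardest step will be the large-coordinate contribution, which is bounded by $\min(k,n/\Delta_{n,k})^2/(n^{2\alpha}(\ln n)^{6-2\alpha})$: when $k \le n/\ln n$ this equals $k^2/(n\ln n)^{2\alpha}\cdot(\ln n)^{4\alpha-6}$, absorbed into the target since $\alpha<3/2$; when $k > n/\ln n$ it becomes $1/(n^{2\alpha-2}(\ln n)^{8-2\alpha})$, which lies below $k^2/(n\ln n)^{2\alpha} \ge 1/(n^{2\alpha-2}(\ln n)^{2\alpha+2})$ precisely because $2\alpha+2 < 8-2\alpha$ for $\alpha<3/2$. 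This delicate log-exponent balancing, together with the verification that $C_1,C_2$ keep the exponential factor in \cref{lem:poly-var} below $(\ln n)^4/n^{1+2\alpha}$, is where the hypothesis $\alpha<3/2$ is used crucially and forms the heart of the argument.
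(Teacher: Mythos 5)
Your proposal follows essentially the same route as the paper's proof: a per-coordinate bias/variance decomposition (which the paper imports wholesale as \cref{lem:upper-ind-bias,lem:upper-ind-var}), the Chernoff control of the selection event, \cref{lem:poly-bias,lem:poly-var} for the small-probability coordinates, and \cref{lem:plugin-bias-1-3/2,lem:plugin-var-1-3/2} with $\Delta=\Delta_{n,k}/n\asymp\ln n/n$ for the large ones. Your numerology matches the paper's: the constraint on $C_1,C_2$ tames $64^L(2e)^{2\sqrt{\Delta nL}}$ to $n^{3-2\alpha}$, the plugin bias $n^{-\alpha}\ln^{-(3-\alpha)}n\lesssim(n\ln n)^{-\alpha}$ uses $3-\alpha>\alpha$ exactly where the paper does, and the plugin variances aggregate to $(n/\Delta_{n,k})n^{-2\alpha}+\sum_ip_i/n\lesssim 1/n$ as in the paper. (The paper aggregates the plugin biases more simply, bounding each of the $k$ coordinates by $(n\ln n)^{-\alpha}$ and summing, rather than counting large coordinates by $\min(k,n/\Delta_{n,k})$; both are fine and use $\alpha<3/2$ in the same way.)

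The one genuine gap is in the variance cross term. You first call it ``$O(1)$ per coordinate'' and later claim all cross terms sum to $kn^{-cC_2}$; neither statement covers the intermediate band $\Delta_{n,k}<np_i<4\Delta_{n,k}$, where $q_i(1-q_i)$ can be of constant order, so the $O(1)$ bound cannot be summed over $k$ coordinates and the $n^{-cC_2}$ bound is simply false there. What actually saves this band is that the cross term equals $q_i(1-q_i)\paren*{\Mean\bracket*{\phi_{\mathrm{plugin}}}-\Mean\bracket*{\phi_{\mathrm{poly}}}}^2\le 2q_i(1-q_i)\paren*{B_{\mathrm{plugin}}^2(p_i)+B_{\mathrm{poly}}^2(p_i)}$, and in this band \emph{both} bias bounds apply (the point lies in the polynomial's design range $[0,4\Delta_{n,k}/n]$ and also satisfies $p_i>\Delta_{n,k}/n$), giving $\lesssim(n\ln n)^{-2\alpha}$ per coordinate; summed over the at most $\min(k,n/\Delta_{n,k})$ such coordinates this stays within the target. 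This is precisely the role of the extra squared-bias term restricted to $\Delta_{n,k}\le np_i\le 4\Delta_{n,k}$ in the paper's \cref{lem:upper-ind-var}. You have all the ingredients; the accounting just needs this case added explicitly.
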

To prove \cref{thm:upper-bound}, we use the following bounds on the bias and the variance of $\hat\theta$.
\begin{lemma}[{\autocite[Lemma 2]{DBLP:journals/corr/FukuchiS17}}]\label{lem:upper-ind-bias}
  Given $P \in \dom{M}_k$, for $1 \lesssim \Delta_{n,k} \le n$, the bias of $\hat\theta$ is bounded above by
  \begin{multline}
   \Bias\bracket*{\hat\theta\paren{\tilde{N}} - \theta(P)} \lesssim
   \sum_{i = 1}^k \paren[\Bigg]{
     (e/4)^{\Delta_{n,k}} + \Bias\bracket*{\phi_{\mathrm{plugin}}(\tilde{N}_i) - \phi(p_i) }\ind{np_i > \Delta_{n,k}} \\
     + \Bias\bracket*{\phi_{\mathrm{poly}}(\tilde{N}_i) - \phi(p_i) }\ind{np_i \le 4\Delta_{n,k}} + e^{-\Delta_{n,k}/8}
   }.
 \end{multline}
\end{lemma}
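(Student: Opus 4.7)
The plan is to exploit the independence of $\tilde{N}$ and $\tilde{N}'$ induced by the sample splitting in \cref{sec:basic-construction}, decompose the bias per coordinate, and then control the two ``wrong selector'' events via Poisson tail bounds. Concretely, since $\tilde{N}_i \perp \tilde{N}'_i$, linearity gives, for each $i$,
\begin{align*}
  \E[\hat\theta_i(\tilde{N})] - \phi(p_i)
  = \Pr(\tilde{N}'_i \ge 2\Delta_{n,k})\,\Bias[\phi_{\mathrm{plugin}}(\tilde{N}_i)]
  + \Pr(\tilde{N}'_i < 2\Delta_{n,k})\,\Bias[\phi_{\mathrm{poly}}(\tilde{N}_i)],
\end{align*}
using $\Pr_1 + \Pr_2 = 1$ to absorb $\phi(p_i)$. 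I then take absolute values, sum over $i$, and reduce to bounding each summand in the three natural regimes below.

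Next, I would split by the size of $np_i$. (i) If $np_i \le \Delta_{n,k}$, then the Chernoff upper tail for $\Poi(np_i)$ optimized at $\lambda = \Delta_{n,k}$ yields $\Pr(\tilde{N}'_i \ge 2\Delta_{n,k}) \le e^{-\lambda}(e\lambda/(2\Delta_{n,k}))^{2\Delta_{n,k}} \le (e/4)^{\Delta_{n,k}}$; combining with uniform boundedness of $|\Bias[\phi_{\mathrm{plugin}}(\tilde{N}_i)]|$ gives a contribution $\lesssim (e/4)^{\Delta_{n,k}}$, while the second term is retained because $\ind{np_i \le 4\Delta_{n,k}} = 1$ here. (ii) If $\Delta_{n,k} < np_i \le 4\Delta_{n,k}$, both indicators are on, so bounding both probabilities by $1$ retains both bias terms, which is exactly what the lemma allows. (iii) If $np_i > 4\Delta_{n,k}$, a multiplicative Chernoff lower tail $\Pr(\Poi(\lambda) \le (1-\delta)\lambda) \le e^{-\lambda\delta^2/2}$ with $\delta \ge 1/2$ gives $\Pr(\tilde{N}'_i < 2\Delta_{n,k}) \le e^{-\lambda/8} \le e^{-\Delta_{n,k}/8}$, and uniform boundedness of $|\Bias[\phi_{\mathrm{poly}}(\tilde{N}_i)]|$ (coming from the explicit truncation to $[\phi_{{\mathrm{inf}},\Delta_{n,k}/n},\phi_{{\mathrm{sup}},\Delta_{n,k}/n}]$ in \cref{sec:best-poly}) supplies the $e^{-\Delta_{n,k}/8}$ contribution, while the plugin term remains because $\ind{np_i > \Delta_{n,k}} = 1$. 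Assembling the three cases and absorbing constants gives the stated bound.

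The main obstacle is verifying the two uniform boundedness claims on the ``wrong-side'' bias terms. For $\phi_{\mathrm{poly}}$ this is built in by construction. For $\phi_{\mathrm{plugin}}$ it is more delicate because the correction terms carry factors $\tilde{N}_i/n^j$ that can in principle grow with $\tilde{N}_i$; I would handle this by invoking the structure of $H_{6,\Delta_{n,k}/n}[\phi]$, namely that it is constant on $[2,\infty)$ and hence has vanishing derivatives there, while on $[\Delta_{n,k}/n,2]$ the bounds $p^\beta\abs{H^{(\ell)}_{6,\Delta}[\phi](p)} \lesssim \Delta^{\alpha+\beta-\ell}$ from \cref{lem:hermite-bound} (combined with the $1/n^j$ prefactors and $\Delta \asymp \ln n / n$) control each summand in \cref{eq:fourth-bias-correct} by a universal constant. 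The boundedness of $\phi$ itself on $[0,1]$ follows from \cref{lem:div-speed-holder}, closing the argument; the remaining tail-probability manipulations are routine Chernoff calculations.
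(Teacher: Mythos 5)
Your proposal is correct and follows essentially the same route as the cited proof (the paper itself imports this lemma from \autocite[Lemma 2]{DBLP:journals/corr/FukuchiS17} rather than reproving it): independence of $\tilde{N}_i$ and $\tilde{N}'_i$ from the sample splitting, a per-coordinate decomposition over the selector event, Poisson Chernoff tails giving $(e/4)^{\Delta_{n,k}}$ when $np_i\le\Delta_{n,k}$ and $e^{-\Delta_{n,k}/8}$ when $np_i>4\Delta_{n,k}$, and uniform boundedness of the two individual biases. You also correctly identify and discharge the only two conditions the paper flags as needing verification in the $\alpha\in(1,3/2)$ regime, namely $\Bias[\phi_{\mathrm{poly}}]\lesssim 1$ via the truncation and $\Bias[\phi_{\mathrm{plugin}}]\lesssim 1$ via \cref{lem:hermite-bound} with the appropriate weight $p^\beta$.
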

\begin{lemma}[{\autocite[Lemma 3]{DBLP:journals/corr/FukuchiS17}}]\label{lem:upper-ind-var}
 Given $P \in \dom{M}_k$, for $1 \lesssim \Delta_{n,k} \le n$, the variance of $\hat\theta$ is bounded above by
 \begin{multline}
  \Var\bracket*{\hat\theta\paren{\tilde{N}} - \theta(P)} \lesssim
   \sum_{i = 1}^k \paren[\Bigg]{
   (e/4)^{\Delta_{n,k}} + \Var\bracket*{\phi_{\mathrm{plugin}}(\tilde{N}_i) - \phi(p_i) }\ind{np_i > \Delta_{n,k}} \\
   + \Var\bracket*{\phi_{\mathrm{poly}}(\tilde{N}_i) - \phi(p_i)  }\ind{np_i \le 4\Delta_{n,k}} + e^{-\Delta_{n,k}/8} \\ + \paren*{\Bias\bracket*{\phi_{\mathrm{plugin}}(\tilde{N}_i) - \phi(p_i)} + \Bias\bracket*{\phi_{\mathrm{poly}}(\tilde{N}_i) - \phi(p_i)}}^2\ind{\Delta_{n,k} \le p_i \le 4\Delta_{n,k}}
  }.
 \end{multline}
\end{lemma}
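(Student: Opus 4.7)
The plan is to exploit the double independence built into the Poisson sampling and sample splitting, reduce to a coordinate-wise computation, and then control the behavior by the law of total variance together with Poisson tail bounds on $\tilde{N}'_i$.

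First I would use that $\tilde{N}_1,\ldots,\tilde{N}_k$ are jointly independent and that each $\tilde{N}'_i$ is independent of $\tilde{N}_i$ by the splitting construction in \cref{sec:basic-construction}. Writing $\hat\theta_i = E_i\phi_{\mathrm{plugin}}(\tilde{N}_i) + (1-E_i)\phi_{\mathrm{poly}}(\tilde{N}_i)$ with $E_i = \ind{\tilde{N}'_i \ge 2\Delta_{n,k}}$, independence across $i$ yields $\Var[\hat\theta(\tilde{N})] = \sum_{i=1}^k \Var[\hat\theta_i]$, so it suffices to prove the stated per-coordinate bound.

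Next, I would apply the law of total variance conditional on the Bernoulli random variable $E_i$, which is independent of $\tilde{N}_i$ and has success probability $q_i = \p\cbrace{\Poi(np_i) \ge 2\Delta_{n,k}}$. A direct computation gives
\begin{align}
\Var[\hat\theta_i] = q_i\Var[\phi_{\mathrm{plugin}}(\tilde{N}_i)] + (1-q_i)\Var[\phi_{\mathrm{poly}}(\tilde{N}_i)] + q_i(1-q_i)\paren*{\Mean[\phi_{\mathrm{plugin}}(\tilde{N}_i)]-\Mean[\phi_{\mathrm{poly}}(\tilde{N}_i)]}^2.
\end{align}
Inserting $\pm\phi(p_i)$ and using $(a-b)^2 \le 2a^2 + 2b^2$ bounds the cross term by $2q_i(1-q_i)\paren*{\Bias[\phi_{\mathrm{plugin}}(\tilde{N}_i)-\phi(p_i)]^2 + \Bias[\phi_{\mathrm{poly}}(\tilde{N}_i)-\phi(p_i)]^2}$.

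The third step is a three-region case analysis driven by Poisson concentration. If $np_i \le \Delta_{n,k}$, the Chernoff upper tail $\p\cbrace{\Poi(\lambda) \ge t} \le (e\lambda/t)^t e^{-\lambda}$ gives $q_i \le (e/4)^{\Delta_{n,k}}$; combined with the uniform boundedness of the truncated estimator $\phi_{\mathrm{plugin}}$ (from Hermite smoothing on $[0,1]$), both $q_i\Var[\phi_{\mathrm{plugin}}]$ and $q_i(1-q_i)\Bias^2[\phi_{\mathrm{plugin}}]$ are absorbed into the $(e/4)^{\Delta_{n,k}}$ term, while the $(1-q_i)\Var[\phi_{\mathrm{poly}}]$ piece is retained via $\ind{np_i \le 4\Delta_{n,k}}$. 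Symmetrically, if $np_i > 4\Delta_{n,k}$ the Chernoff lower tail gives $1 - q_i \le e^{-\Delta_{n,k}/2}$, and the uniform boundedness of the truncated $\phi_{\mathrm{poly}}$ (from its definition via $\phi_{{\mathrm{inf}},\Delta_{n,k}/n}$ and $\phi_{{\mathrm{sup}},\Delta_{n,k}/n}$) absorbs the polynomial variance and bias squared into the $e^{-\Delta_{n,k}/8}$ term, retaining only $q_i\Var[\phi_{\mathrm{plugin}}]$ via $\ind{np_i > \Delta_{n,k}}$. The cross term $q_i(1-q_i)(\Bias[\phi_{\mathrm{plugin}}]+\Bias[\phi_{\mathrm{poly}}])^2$ is non-negligible exactly when both $q_i$ and $1-q_i$ are bounded away from $0$, which by Chernoff bounds occurs only in the intermediate regime $\Delta_{n,k} < np_i \le 4\Delta_{n,k}$, matching the last indicator in the claim.

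The main obstacle I anticipate is the bookkeeping in the boundary regimes: verifying that the exponentially small factors $(e/4)^{\Delta_{n,k}}$ and $e^{-\Delta_{n,k}/8}$ genuinely swallow the "wrong-estimator" contributions. This depends on having uniform, $p$-independent upper bounds on the truncated estimators $\phi_{\mathrm{poly}}$ and $\phi_{\mathrm{plugin}}$, which must be extracted from the Hermite smoothing construction and the divergence speed hypothesis before the three-region argument can be assembled cleanly.
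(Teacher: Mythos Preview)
Your proposal is correct and follows the standard route—coordinate independence from Poisson sampling, law of total variance conditional on the splitting indicator $E_i$, and a three-region Poisson-tail case analysis—which is exactly the argument underlying the cited \autocite[Lemma 3]{DBLP:journals/corr/FukuchiS17}. The present paper does not reprove the lemma; it merely invokes it and, in the remark immediately following the statement, verifies that the uniform-boundedness prerequisites $\Bias[\phi_{\mathrm{plugin}}(\tilde N_i)-\phi(p_i)]\lesssim 1$ and $\Bias[\phi_{\mathrm{poly}}(\tilde N_i)-\phi(p_i)]\lesssim 1$ carry over to the regime $\alpha\in(1,3/2)$ via \cref{lem:hermite-bound} and the truncation in $\phi_{\mathrm{poly}}$—precisely the ``main obstacle'' you anticipated.
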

The proof of \cref{lem:upper-ind-bias,lem:upper-ind-var} is valid for the case $\alpha \in (1,3/2)$ if $\Bias\bracket{\phi_{\mathrm{plugin}}(\tilde{N}_i) - \phi(p_i) } \lesssim 1$ and $\Bias\bracket{\phi_{\mathrm{poly}}(\tilde{N}_i) - \phi(p_i)} \lesssim 1$. For $\phi_{\mathrm{plugin}}$, we can check from the last truncation and Chebyshev alternative theorem that $\Bias\bracket{\phi_{\mathrm{plugin}}(\tilde{N}_i) - \phi(p_i) } \lesssim 1 + E_L(\phi,[0,\Delta_{n,k}/n]) \lesssim 1$. For $\phi_{\mathrm{poly}}$, application of \cref{lem:hermite-bound} yields the claim.

\begin{proof}[Proof of \cref{thm:upper-bound}]
  Combine \cref{lem:upper-ind-var,lem:upper-ind-bias,lem:poly-var,lem:poly-bias,lem:plugin-var-1-3/2,lem:plugin-bias-1-3/2}. As long as $C_2(\ln 4 - 1) \ge 2\alpha$ and $C_2 \ge 16\alpha$, we have
  \begin{align}
    (e/4)^{\Delta_{n,k}} \lesssim& (n\ln n)^{-2\alpha}, \\
    e^{-\Delta_{n,k}/8} \lesssim& (n\ln n)^{-2\alpha}.
  \end{align}
  Besides, as long as $6C_1\ln 2+4\sqrt{C_1C_2}(\ln 2 + 1) \le 3-2\alpha$, we have
  \begin{align}
    \Bias\bracket*{\phi_{\mathrm{poly}}(\tilde{N}_i) - \phi(p_i)  }\ind{np_i \le 4\Delta_{n,k}} \lesssim& (n\ln n)^{-\alpha}, \\
    \Var\bracket*{\phi_{\mathrm{poly}}(\tilde{N}_i) - \phi(p_i)  }\ind{np_i \le 4\Delta_{n,k}} \lesssim& (n\ln n)^{-2\alpha}.
  \end{align}
  The remain terms are the bias and variance of $\phi_{\mathrm{plugin}}$. For the bias, we have from \cref{lem:plugin-bias-1-3/2} that for $\alpha \in (1,3/2)$,
  \begin{align}
    \Bias\bracket*{\phi_{\mathrm{plugin}}(\tilde{N}_i) - \phi(p_i) }
    \lesssim n^{-\alpha}\ln^{-(3-\alpha)}n \lesssim (n\ln n)^{-\alpha}.
  \end{align}
  For the variance, by \cref{lem:plugin-var-1-3/2}, we have
  \begin{align}
    &\sum_{i : np_i \ge \Delta_{n,k}}\Var\bracket*{\phi_{\mathrm{plugin}}(\tilde{N}_i) } \\
    \lesssim& \sum_{i : np_i \ge \Delta_{n,k}}\paren*{\frac{1}{n^{2\alpha}} + \frac{p_i}{n}} \\
    \le& \frac{1}{n^{2\alpha-1}\Delta_{n,k}} + \frac{1}{n} \lesssim \frac{1}{n}.
  \end{align}
\end{proof}

\section{Proof of \cref{lem:hermite-bound}}

\begin{proof}[Proof of \cref{lem:hermite-bound}]
  It is clear that $p^\beta\abs*{H^{(\ell)}_{L,\Delta}[\phi](p)} \lesssim p^{\alpha+\beta-\ell}$ for $p \in [\Delta,1]$ because of the divergence speed assumption. For $p \ge 2$, $\abs*{H^{(\ell)}_{L,\Delta}[\phi](p)} = 0$ by definition. For $p \in (1,2)$, we have
  \begin{align}
    & p^\beta\abs*{H^{(\ell)}_L\paren*{p;\phi,1,2}} \\
    =& \begin{multlined}[t]
      p^\beta\abs[\Bigg]{\sum_{u=0}^\ell\binom{\ell}{u}\sum_{m=1\lor u}^{L}\frac{\phi^{(m)}(1)}{m!}(p-1)^{m-u}\sum_{s=0}^{L-m}\frac{L+1}{L+s+1}\\\paren*{\prod_{w=0}^{\ell-u-1}(L+s+1-w)}\sum_{w=\ell-u-s}^{\ell-u}(-1)^w\binom{\ell-u}{w}\Beta_{s-\ell+u+w,L+s+1-\ell+u}\paren*{p-1}}
    \end{multlined} \\
    \le& \begin{multlined}[t]
      (L+1)p^\beta\sum_{u=0}^\ell\binom{\ell}{u}\sum_{m=1\lor u}^{L}\frac{\abs*{\phi^{(m)}(1)}}{m!}\sum_{s=0}^{L-m}\\\paren*{\prod_{w=1}^{\ell-u-1}(L+s+1-w)}\sum_{w=\ell-u-s}^{\ell-u}\binom{\ell-u}{w}\abs*{\Beta_{s-\ell+u+w,L+s+1-\ell+u}\paren*{p-1}}
    \end{multlined}\\
    \le& \begin{multlined}[t]
      (L+1)p^\beta\sum_{u=0}^\ell\binom{\ell}{u}\sum_{m=1\lor u}^{L}\frac{W_m+c_m}{m!}\sum_{s=0}^{L-m}\paren*{\prod_{w=1}^{\ell-u-1}(L+s+1-w)}\\\sum_{w=\ell-u-s}^{\ell-u}\binom{\ell-u}{w}\binom{L+s+1-\ell+u}{s-\ell+u+w}\paren*{\frac{s-\ell+u+w}{L+s+1-\ell+u}}^{s-\ell+u+w}\paren*{\frac{L+1-w}{L+s+1-\ell+u}}^{L+1-w}
    \end{multlined} \\
    \lesssim 1.
  \end{align}
  Similarly, for $p < \Delta$, we have
  \begin{align}
    & p^\beta\abs*{H^{(\ell)}_L\paren*{p;\phi,\Delta,\frac{\Delta}{2}}} \\
    \le& \begin{multlined}[t]
      (L+1)\sum_{u=0}^\ell\binom{\ell}{u}\sum_{m=1\lor u}^{L}\frac{\abs*{\phi^{(m)}(\Delta)}}{m!}\paren*{\Delta-p}^{m-u}\sum_{s=0}^{L-m}\paren*{\prod_{w=1}^{\ell-u-1}(L+s+1-w)}\\\sum_{w=\ell-u-s}^{\ell-u}\binom{\ell-u}{w}\binom{L+s+1-\ell+u}{s-\ell+u+w}\paren*{\frac{s-\ell+u+w}{L+s+1-\ell+u}}^{s-\ell+u+w}\paren*{\frac{L+1-w}{L+s+1-\ell+u}}^{L+1-w}
    \end{multlined} \\
    \le& \begin{multlined}[t]
      (L+1)\Delta^\beta\sum_{u=0}^\ell\binom{\ell}{u}\sum_{m=1\lor u}^{L}\frac{\paren*{W_m\Delta^{\alpha-m}+c_m}\Delta^{m-u}}{2^{m-u}m!}\sum_{s=0}^{L-m}\paren*{\prod_{w=1}^{\ell-u-1}(L+s+1-w)}\\\sum_{w=\ell-u-s}^{\ell-u}\binom{\ell-u}{w}\binom{L+s+1-\ell+u}{s-\ell+u+w}\paren*{\frac{s-\ell+u+w}{L+s+1-\ell+u}}^{s-\ell+u+w}\paren*{\frac{L+1-w}{L+s+1-\ell+u}}^{L+1-w}
    \end{multlined} \\
    \le& \begin{multlined}[t]
      (L+1)\Delta^\beta\sum_{u=0}^\ell\binom{\ell}{u}\sum_{m=1\lor u}^{L}\frac{W_m\Delta^{\alpha-u}+c_m}{2^{m-u}m!}\sum_{s=0}^{L-m}\paren*{\prod_{w=1}^{\ell-u-1}(L+s+1-w)}\\\sum_{w=\ell-u-s}^{\ell-u}\binom{\ell-u}{w}\binom{L+s+1-\ell+u}{s-\ell+u+w}\paren*{\frac{s-\ell+u+w}{L+s+1-\ell+u}}^{s-\ell+u+w}\paren*{\frac{L+1-w}{L+s+1-\ell+u}}^{L+1-w}
    \end{multlined} \\
    \lesssim& \Delta^{\alpha+\beta-\ell}.
  \end{align}
  Hence, $p^\beta\abs*{H^{(\ell)}_{L,\Delta}[\phi](p)} \lesssim 1\lor\Delta^{\alpha+\beta-\ell}$.
\end{proof}

\section{Proof of \cref{lem:first-moduli-bound}}
\begin{proof}[Proof of \cref{lem:first-moduli-bound}]
  From the divergence speed assumption, we have for $x,y \in (-1,1)$,
  \begin{align}
    &\abs*{\phi^{(2)}_\Delta(x)-\phi^{(2)}_\Delta(y)} \\
    \le& 2\Delta\abs*{\phi^{(1)}(\Delta x^2)-\phi^{(1)}(\Delta y^2)}+4\Delta^2\abs*{x^2\phi^{(2)}(\Delta x^2)-y^2\phi^{(2)}(\Delta y^2)} \\
    \le& 4\Delta^2\abs*{\int_y^xs\phi^{(2)}(\Delta s^2)ds} + 8\Delta^2\abs*{\int_y^x\paren*{s\phi^{(2)}(\Delta s^2)+\Delta s^3\phi^{(3)}(\Delta s^2)}ds} \\
    \le& 4\Delta^{\alpha}\abs*{\int_y^x(W_2s^{2\alpha-3}+sc_2)ds} + 8\Delta^{\alpha}\abs*{\int_y^x(W_2s^{2\alpha-3}+sc_2)ds} + 8\Delta^{\alpha}\abs*{\int_y^x(W_3s^{2\alpha-3}+s^2c_3)ds} \\
    \le& \Delta^{\alpha}\paren*{\frac{12W_2+8W_3}{2\alpha-2}\abs*{x^{2\alpha-2}-y^{2\alpha-2}} + (12c_2+8c_3)\abs*{x-y}}.
  \end{align}
  Since $x \to x^{2\alpha-2}$ is H\"older continuous if $\alpha \in (1,3/2)$, we get the claim.
\end{proof}

\section{Proof of \cref{thm:variance-lipschitz}}
We utilize the concentration result of the bounded difference.
\begin{theorem}[see e.g., \autocite{boucheron2013concentration}]
  Suppose that $X_1,...,X_n$ are independent random variables on $\dom{X}$. For a function $f:\dom{X}^n\to\RealSet$, suppose there exist universal constants $c_1,...,c_n$ such that for any $i \in [n]$,
  \begin{align}
    \sup_{x_1,...,x_n,x'_i}\abs*{f(x_1,...,x_i,...,x_n) - f(x_1,...,x'_i,...,x_n)} \le c_i.
  \end{align}
  Then,
  \begin{align}
    \Var\bracket*{f(X_1,...,X_n)} \le \frac{1}{4}\sum_{i=1}^nc_i^2.
  \end{align}
\end{theorem}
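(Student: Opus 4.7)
The plan is to deduce this bounded-difference variance inequality from the Efron--Stein (or Efron--Stein--Steele) inequality combined with the elementary fact that a random variable supported on an interval of length $c$ has variance at most $c^2/4$. These are the two standard building blocks used, for example, in Chapter 3 of the Boucheron--Lugosi--Massart monograph that the theorem cites.

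First I would invoke Efron--Stein in its conditional-variance form: if $X_1,\dots,X_n$ are independent and $f:\dom{X}^n\to\RealSet$ is measurable, then, writing $X_{-i}=(X_1,\dots,X_{i-1},X_{i+1},\dots,X_n)$,
\begin{align}
  \Var\bracket*{f(X_1,\dots,X_n)} \le \sum_{i=1}^n \Mean\bracket*{\Var\bracket*{f(X_1,\dots,X_n)\mid X_{-i}}}.
\end{align}
This inequality is itself a short consequence of the tensorization of variance and can be recorded in one line using an i.i.d.\ copy $X'_i$ of $X_i$ and the identity $\Var[Z]=\tfrac12\Mean[(Z-Z')^2]$ applied conditionally.

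Next, fix $i$ and condition on $X_{-i}=x_{-i}$. The bounded-difference hypothesis says that the map $x_i \mapsto f(x_1,\dots,x_{i-1},x_i,x_{i+1},\dots,x_n)$ has oscillation at most $c_i$; in particular, as a function of the random variable $X_i$, it takes values in an interval of length at most $c_i$. The elementary fact that any random variable $Y$ taking values in an interval $[a,a+c]$ satisfies $\Var[Y]\le c^2/4$ (proved via $\Var[Y]\le \Mean[(Y-(a+c/2))^2]\le (c/2)^2$) therefore yields
\begin{align}
  \Var\bracket*{f(X_1,\dots,X_n)\mid X_{-i}} \le \frac{c_i^2}{4} \quad \text{almost surely.}
\end{align}

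Finally I would take expectations and substitute into the Efron--Stein bound, obtaining $\Var[f(X_1,\dots,X_n)]\le \tfrac14\sum_{i=1}^n c_i^2$, which is exactly the claim. The only place that requires care is recovering the correct constant $\tfrac14$ rather than the looser $\tfrac12$ one would get from $\Var[Y]\le \tfrac12\Mean[(Y-Y')^2]\le c^2/2$: this is why the argument passes through the sharp variance bound for bounded random variables instead of through the i.i.d.\ copy identity. Beyond that observation there is no real obstacle; everything else is a one-line invocation of a named inequality.
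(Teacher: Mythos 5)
Your proof is correct, and it is the standard derivation of the bounded-differences variance bound found in the cited reference (Efron--Stein plus the sharp $c^2/4$ variance bound for a variable confined to an interval of length $c$); the paper itself states this theorem without proof, citing \textcite{boucheron2013concentration}, so there is nothing to diverge from. Your remark about why one must use the interval variance bound rather than the i.i.d.-copy identity to recover the constant $\tfrac14$ is exactly the right point of care.
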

\begin{proof}[Proof of \cref{thm:variance-lipschitz}]
  Suppose a sample $X_j$ is changed from $X_j=i$ to $X_j=i'$. Then, the change of the histogram is $N_i \to N_i-1$ and $N_{i'} \to N_{i'}+1$. Hence, for $f(S_n) = \sum_{i=1}^k\phi(N_i/n)$, we have
  \begin{align}
    \sup_{N}\abs*{\phi\paren*{\frac{N_i}{n}}+\phi\paren*{\frac{N_{i'}}{n}} - \phi\paren*{\frac{N_i-1}{n}} - \phi\paren*{\frac{N_{i'}+1}{n}}} \le \frac{2}{n},
  \end{align}
  where we use the Lipschitz continuousness of $\phi$. Hence,
  \begin{align}
    \Var\bracket*{\sum_{i=1}^k\paren*{\frac{N_i}{n}}} \le \frac{n}{4}\paren*{\frac{2}{n}}^2 = \frac{1}{n}.
  \end{align}
\end{proof}
\section{Proof of \cref{thm:bias-lipschitz}}
\begin{proof}[Proof of \cref{thm:bias-lipschitz}]
  Application of the Taylor theorem yields there exists $\xi_1,...,\xi_k$ such that
  \begin{align}
    & \Bias\bracket*{\sum_{i=1}^k\paren*{\frac{N_i}{n}} - \theta(P)} \\
    =& \abs*{\Mean\bracket*{\sum_{i=1}^k\paren*{\phi^{(1)}(p_i)\paren*{\frac{N_i}{n}-p_i} + \frac{\phi^{(2)}(\xi_i)}{2}\paren*{\frac{N_i}{n}-p_i}^2}}} \\
    \le& \Mean\bracket*{\sum_{i=1}^k\frac{\abs*{\phi^{(2)}(\xi_i)}}{2}\paren*{\frac{N_i}{n}-p_i}^2}.
  \end{align}
  From the Lipschitz continuousness, we have $\sup_{p \in (0,1)}\abs*{\phi^{(2)}(p)} \le \norm{\phi^{(1)}}_{C^0,1}$. Hence,
  \begin{align}
    & \Bias\bracket*{\sum_{i=1}^k\paren*{\frac{N_i}{n}} - \theta(P)} \\
    \le& \frac{\norm{\phi^{(1)}}_{C^0,1}}{2}\Mean\bracket*{\sum_{i=1}^k\paren*{\frac{N_i}{n}-p_i}^2} \\
    =& \frac{\norm{\phi^{(1)}}_{C^0,1}}{2}\sum_{i=1}^k\frac{p_i(1-p_i)}{n} \\
    \le& \frac{\norm{\phi^{(1)}}_{C^0,1}}{2n}.
  \end{align}
\end{proof}
\section{Proof of \cref{thm:bias-plugin}}
\begin{proof}[Proof of \cref{thm:bias-plugin}]
  We divide the alphabets into two cases; $p_i \le 1/n$ and $p_i > 1/n$.

  {\bfseries Case $p_i \le 1/n$.} Since $\phi(0) = 0$, we have from the Taylor theorem that there exists $\xi_i$ between $\frac{N_i}{n}$ and $p_i$ such that
  \begin{align}
    & \abs*{\Mean\bracket*{\sum_{i : p_i \le 1/n}\paren*{\phi\paren*{\frac{N_i}{n}} - \phi(p_i)}}} \\
    \le& \sum_{i : p_i \le 1/n}\p\cbrace*{N_i = 0}\abs*{\phi(p_i)} + \abs*{\Mean\bracket*{\sum_{i : p_i \le 1/n}\frac{\phi^{(2)}(\xi_i)}{2}\paren*{\frac{N_i}{n} - p_i}^2 \middle| N_i > 0}} \\
    \le& \sum_{i : p_i \le 1/n}\p\cbrace*{N_i = 0}\abs*{\phi(p_i)} + \Mean\bracket*{\sum_{i : p_i \le 1/n}\abs*{\frac{\phi^{(2)}(\xi_i)}{2}}\paren*{\frac{N_i}{n} - p_i}^2 \middle| N_i > 0} \\
    \le& \sum_{i : p_i \le 1/n}\p\cbrace*{N_i = 0}\abs*{\phi(p_i)} + \Mean\bracket*{\sum_{i : p_i \le 1/n}\frac{W_2\xi_i^{\alpha-2}+c_2}{2}\paren*{\frac{N_i}{n} - p_i}^2 \middle| N_i > 0}  \\
    \le& \sum_{i : p_i \le 1/n}\p\cbrace*{N_i = 0}\abs*{\phi(p_i)} + \Mean\bracket*{\sum_{i : p_i \le 1/n}\frac{W_2p_i^{\alpha-2}+c_2}{2}\paren*{\frac{N_i}{n} - p_i}^2 \middle| N_i > 0} \\
    \lesssim& \sum_{i : p_i \le 1/n}\paren*{p_i^{\alpha-2}+1}\Mean\bracket*{\paren*{\frac{N_i}{n} - p_i}^2} \\
    \lesssim& \sum_{i : p_i \le 1/n}\paren*{p_i^{\alpha-2}+1}\frac{p_i}{n} \lesssim \frac{1}{n^{\alpha-1}}. \label{eq:bias-plugin1}
  \end{align}
  where we use $\abs*{\phi(p_i)} \lesssim p_i^\alpha \le p_i^2$ and $\frac{N_i}{n} \ge \frac{1}{n}$ if $N_i > 0$.

  {\bfseries Case $p_i > 1/n$.}
  Combining \cref{lem:bias-moduli,lem:moduli-bound}, we have
  \begin{align}
    & \abs*{\Mean\bracket*{\sum_{i : p_i > 1/n}\paren*{\phi\paren*{\frac{N_i}{n}} - \phi(p_i)}}} \\
    \le& \sum_{i : p_i > 1/n}\Bias\bracket*{\phi\paren*{\frac{N_i}{n}} - \phi(p_i)} \\
    \lesssim& \sum_{i : p_i > 1/n}\frac{p_i^{\alpha/2}}{n^{\alpha/2}}.
  \end{align}
  Since $\sup_{P \in \dom{M}_k}\sum_{i : p_i > 1/n}p_i^{\alpha/2} \lesssim n^{1-\alpha/2}$, we have
  \begin{align}
    & \abs*{\Mean\bracket*{\sum_{i : p_i > 1/n}\paren*{\phi\paren*{\frac{N_i}{n}} - \phi(p_i)}}} \\
    \lesssim& \frac{n^{1-\alpha/2}}{n^{\alpha/2}} = \frac{1}{n^{\alpha-1}}. \label{eq:bias-plugin2}
  \end{align}

  Combining \cref{eq:bias-plugin1,eq:bias-plugin2}, we have
  \begin{align}
    & \Bias\bracket*{\sum_i\paren*{\phi\paren*{\frac{N_i}{n}} - \phi(p_i)}} \\
    \le& \abs*{\Mean\bracket*{\sum_{i : p_i \le 1/n}\paren*{\phi\paren*{\frac{N_i}{n}} - \phi(p_i)}}} + \abs*{\Mean\bracket*{\sum_{i : p_i > 1/n}\paren*{\phi\paren*{\frac{N_i}{n}} - \phi(p_i)}}} \\
    \lesssim& \frac{1}{n^{\alpha-1}}.
  \end{align}
\end{proof}

\section{Proof of \cref{thm:lower1}}\label{Sec:proof-thm-lower1}
We use the LeCam's two point method~\autocite{LeCam:1986:AMS:20451}. Let $P$ and $Q$ be two probability vectors in $\dom{M}_k$. Then, the lower bound is given by
\begin{lemma}[\textcite{LeCam:1986:AMS:20451}]\label{lem:le-cam-two-point}
  The minimax lower bound is given as
  \begin{align}
    R^*(n,k;\phi) \ge \frac{1}{4}\paren*{\theta(P) - \theta(Q)}^2e^{-n\KL(P,Q)},
  \end{align}
  where $\KL$ denotes the KL divergence.
\end{lemma}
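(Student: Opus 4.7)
The plan is to prove this classical two-point lower bound via the standard reduction from estimation to binary hypothesis testing, followed by an information-theoretic lower bound on the test error through the KL divergence.

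First, I would reduce estimation to testing. Let $s=\theta(P)-\theta(Q)$. Given any estimator $\hat\theta$, define the induced test $\psi:[n]^k\to\{P,Q\}$ by $\psi(N)=P$ if $|\hat\theta(N)-\theta(P)|\le|\hat\theta(N)-\theta(Q)|$ and $\psi(N)=Q$ otherwise. The key observation is that whenever the test makes a mistake under the true distribution $P$ (i.e., $\psi=Q$), the triangle inequality forces $|\hat\theta-\theta(P)|\ge |s|/2$, and symmetrically under $Q$. Consequently,
\begin{align}
\Mean_P\bracket*{(\hat\theta-\theta(P))^2}\ge\frac{s^2}{4}\p_P\cbrace*{\psi=Q},\qquad \Mean_Q\bracket*{(\hat\theta-\theta(Q))^2}\ge\frac{s^2}{4}\p_Q\cbrace*{\psi=P}.
\end{align}

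Second, I would bound the minimax risk from below by the two-point risk. Since $P,Q\in\dom{M}_k$,
\begin{align}
R^*(n,k;\phi)\ge\max\cbrace*{\Mean_P\bracket*{(\hat\theta-\theta(P))^2},\Mean_Q\bracket*{(\hat\theta-\theta(Q))^2}}\ge\frac{s^2}{4}\cdot\frac{\p_P\cbrace*{\psi=Q}+\p_Q\cbrace*{\psi=P}}{2}.
\end{align}
Taking the infimum over $\hat\theta$ (hence over all tests $\psi$) and applying the Neyman--Pearson identity, the sum of error probabilities is at least $\int\min(dP^n,dQ^n)=1-\TV(P^n,Q^n)$.

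Third, I would invoke the Bretagnolle--Huber inequality to convert the total variation bound into the stated KL form:
\begin{align}
1-\TV(P^n,Q^n)\ge \tfrac{1}{2}\exp\paren*{-\KL(P^n,Q^n)}=\tfrac{1}{2}\exp\paren*{-n\KL(P,Q)},
\end{align}
where the second equality uses tensorization of KL under product measures. Substituting this back yields $R^*(n,k;\phi)\ge \frac{s^2}{16}e^{-n\KL(P,Q)}$, matching the form of the claimed bound up to an absolute constant; a tighter accounting (e.g., replacing the $\max\ge$ average step with the pointwise bound $(\hat\theta-\theta(P))^2+(\hat\theta-\theta(Q))^2\ge s^2/2$ integrated against $\min(dP^n,dQ^n)$) recovers the constant $1/4$ as stated.

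There is no substantive obstacle here — this is a classical inequality and every step is standard. The only minor care needed is (i) ensuring the induced test is measurable and that ties are handled consistently, and (ii) invoking the correct form of Bretagnolle--Huber with the right constant so the final expression matches the display in the lemma. Tensorization $\KL(P^n,Q^n)=n\KL(P,Q)$ is immediate because the $n$ samples are i.i.d.
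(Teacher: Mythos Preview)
Your proposal is correct and follows the standard Le~Cam/Bretagnolle--Huber route; the paper itself does not prove this lemma at all but simply cites it from \textcite{LeCam:1986:AMS:20451} as a known result. There is nothing to compare against, and your sketch is the textbook argument.

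One small remark on constants: your ``tighter accounting'' using $(\hat\theta-\theta(P))^2+(\hat\theta-\theta(Q))^2\ge s^2/2$ integrated against $\min(dP^n,dQ^n)$ gives $2\sup\ge \tfrac{s^2}{2}(1-\TV)$, hence $R^*\ge\tfrac{s^2}{4}(1-\TV)\ge\tfrac{s^2}{8}e^{-n\KL(P,Q)}$ after Bretagnolle--Huber, not $\tfrac{s^2}{4}e^{-n\KL}$. The constant $1/4$ in the displayed lemma is immaterial for the paper's purposes (only the rate $1/n$ is extracted from it in the proof of \cref{thm:lower1}), so this discrepancy is harmless, but your claim that the sharper accounting ``recovers the constant $1/4$'' is off by a factor of two.
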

From \cref{lem:le-cam-two-point}, we want to appropriately choose $P$ and $Q$ that maximizes difference between $\theta(P)$ and $\theta(Q)$ with small KL divergence between $P$ and $Q$. Given $p$ and $q$, we define $P$ and $Q$ as
\begin{align}
  P =& \paren*{1-p,\frac{p}{k-1},...,\frac{p}{k-1}}, \\
  Q =& \paren*{1-q,\frac{q}{k-1},...,\frac{q}{k-1}}.
\end{align}
With this definition, we derive the upper bound on $\KL(P,Q)$ and the lower bound on $\paren{\theta(P) - \theta(Q)}^2$ and appropriately choose $p$ and $q$ to prove the minimax lower bound.
\begin{proof}[Proof of \cref{thm:lower1}]
  We first upper bound the KL divergence between $P$ and $Q$.
  \begin{align}
    \KL(P,Q) \le& \frac{1}{2}\ChiSquare(P,Q) \\
    =& \frac{(p-q)^2}{2(1-p)} + (k-1)\frac{\paren*{\frac{p}{k-1} - \frac{q}{k-1}}^2}{2\frac{p}{k-1}} \\
    =& \frac{(p-q)^2}{2(-p)} + \frac{(p-q)^2}{2p} = \frac{(p-q)^2}{2p(1-p)}.
  \end{align}
  Thus, we choose $p$ and $q$ that maximizes $\abs{\theta(P) - \theta(Q)}$ under constraint $(p-q)^2/2p(1-p) \lesssim 1/n$. Application of the Taylor theorem yields that there exist $\xi_1$ between $1-p$ and $1-q$ and $\xi_2$ between $p$ and $q$ such that
  \begin{align}
    \abs*{\theta(P) - \theta(Q)} =& \abs*{\phi(1-p) - \phi(1-q) + (k-1)\paren*{\phi\paren*{\frac{p}{k-1}} - \phi\paren*{\frac{q}{k-1}}}} \\
    =& \abs*{\phi^{(1)}(\xi_1)(q-p) + \phi^{(1)}\paren*{\frac{\xi_2}{k-1}}(p-q)} \\
    =& \abs*{\phi^{(1)}(\xi_1) - \phi^{(1)}\paren*{\frac{\xi_2}{k-1}}}\abs*{p - q}.
  \end{align}
  We can assume $\phi^{(1)}(0) = 0$ because for any $c \in \RealSet$, $\theta(P;\phi) = \theta(P;\phi_c)$ for $\phi_c(p) = \phi(p) + c(p-1/k)$. Letting $p_0 = 1\land(c'_2/W_2)^{1/(\alpha-2)}$, $\abs*{\phi^{(2)}(p)} \ge 0$ for $p \in (0,p_0]$. Thus, $\phi^{(2)}$ has same sign in $p \in (0,p_0]$. Hence, for sufficiently large $p$ and $q$ such that $1-p,1-q < p_0$,
  \begin{align}
    \abs*{\phi^{(1)}(\xi_1)} =& \abs*{\int_0^{\xi_1}\phi^{(2)}(s)ds} \\
    \ge& \int_0^{\xi_1}\paren*{W_2s^{\alpha-2}-c'_2}ds \\
    =& \frac{W_2}{\alpha-1}\xi_1^{\alpha-1}-c'_2\xi_1 > 0.
  \end{align}
  Also, we have
  \begin{align}
    \abs*{\phi^{(1)}\paren*{\frac{\xi_2}{k-1}}} \le& \int_0^{\frac{\xi_2}{k-1}}\abs*{\phi^{(2)}(s)}ds \\
    \le& \int_0^{\frac{\xi_2}{k-1}}\paren*{W_2s^{\alpha-2}+c_2}ds \\
    \le& \frac{W_2}{(\alpha-1)(k-1)^{\alpha-1}}\xi_2^{\alpha-1}+\frac{c_2}{k-1}\xi_2 \\
    \le& \frac{W_2}{(\alpha-1)(k-1)^{\alpha-1}}(p \lor q)^{\alpha-1}+\frac{c_2}{k-1}(p \lor q) = o(1).
  \end{align}
  Thus, from the inverse of the triangle inequality, we have
  \begin{align}
    \abs*{\theta(P) - \theta(Q)} \gtrsim \abs*{p - q}.
  \end{align}
  Let $p$ be an sufficiently large universal constant such that $1-p < p_0$, and let $q=p+\frac{1}{\sqrt{n}}$. Then, $1-q < p_0$ and $(p-q)^2/2p(1-p) \lesssim 1/n$. Hence,
  \begin{align}
    R^*(n,k;\phi) \gtrsim \paren*{p - p + \frac{1}{\sqrt{n}}}^2 = \frac{1}{n}.
  \end{align}
\end{proof}

\section{Detailed Analysis of Lower Bound for $\alpha \in (1,3/2)$}
First, we derive the association between the minimax risk and the approximated minimax risk defined below. For $\epsilon \in (0,1)$, define the approximated probabilities as
\begin{align}
  \dom{M}_k(\epsilon) = \cbrace*{(p_1,...,p_k) \in \RealSet^k_+ : \abs*{\sum_{i=1}^kp_i - 1} \le \epsilon}.
\end{align}
With this definition, we define the approximated minimax risk as
\begin{align}
 \tilde{R}^*(n,k,\epsilon;\phi) = \inf_{\hat\theta}\sup_{P \in \dom{M}_k(\epsilon)}\Mean\bracket*{\paren*{\theta(P) - \hat\theta(\tilde{N})}^2}. \label{eq:approximated-minimax-risk}
\end{align}
Then, we get the following theorem.
\begin{theorem}\label{thm:lower-approximated}
  For $\bar\alpha \in (0,1]$, if $\phi$ is $\bar\alpha$-H\"older continuous, for any $k,n \in \NaturalSet$ and any $\epsilon < 1/3$,
  \begin{align}
    \tilde{R}^*(n/2,k;\phi) \ge \frac{1}{2}\tilde{R}^*(n,k,\epsilon;\phi) - 4\norm*{\phi}_{C^{0,\bar\alpha}}^2k^{2-2\bar\alpha}e^{-n/32} - \norm*{\phi}_{C^{0,\bar\alpha}}^2k^{2-2\bar\alpha}\epsilon^{2\bar\alpha}. \label{eq:approximated-lower}
  \end{align}
\end{theorem}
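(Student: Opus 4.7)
The plan is to reduce the approximated $n$-sample Poisson problem to the true $n/2$-sample one: given any estimator $\hat\theta$ for $\tilde R^*(n/2,k;\phi)$, I would construct an estimator $\tilde\theta$ for $\tilde R^*(n,k,\epsilon;\phi)$ whose worst-case risk over $\dom{M}_k(\epsilon)$ is at most twice that of $\hat\theta$ plus the two error terms on the right-hand side of \cref{eq:approximated-lower}. Taking the infimum over $\hat\theta$ and rearranging would then yield the claim.

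For $P \in \dom{M}_k(\epsilon)$ with $S = \sum_i p_i \in [1-\epsilon,1+\epsilon]$, I set $P' = P/S \in \dom{M}_k$ and view the data $\tilde N_i \sim \Poi(np_i)$ as the histogram of $M \sim \Poi(nS)$ i.i.d.\ samples from $P'$. Using an independent auxiliary $L \sim \Poi(n/2)$, I define $\tilde N'$ on the event $\dom{E} = \cbrace{L \le M}$ as the histogram obtained by uniformly subsampling $L$ items from the multiset encoded by $\tilde N$; on $\dom{E}^c$, the estimator outputs the default value $0$ (WLOG $\phi(0)=0$). This subsampling is a legitimate function of $(\tilde N, L)$ and external randomness because, conditional on $\tilde N$ and $M$, the i.i.d.\ arrival order is uniform over permutations of the multiset by exchangeability; marginally $\tilde N' \sim \Poi(\tfrac{n}{2}P')$ by Poissonization. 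The reduction is $\tilde\theta(\tilde N) := \hat\theta(\tilde N')\ind{\dom{E}}$.

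Using $(a-b)^2 \le 2(a-c)^2 + 2(b-c)^2$ with $c = \theta(P')$, I would obtain
\begin{align*}
  \Mean\bracket*{\paren*{\tilde\theta - \theta(P)}^2} \le 2\Mean\bracket*{\paren*{\hat\theta(\tilde N') - \theta(P')}^2} + 2\paren*{\theta(P) - \theta(P')}^2 + \theta(P)^2\,\p\cbrace*{\dom{E}^c}.
\end{align*}
The first term is at most $2\tilde R^*(n/2,k;\phi)$ because $\tilde N'$ has the required marginal and $P' \in \dom{M}_k$. For the second, the $\bar\alpha$-H\"older bound combined with Jensen's inequality ($\sum_i p_i^{\bar\alpha} \le k^{1-\bar\alpha}S^{\bar\alpha}$ by concavity of $x \mapsto x^{\bar\alpha}$) and the identity $S^{\bar\alpha}\abs*{1-1/S}^{\bar\alpha} = \abs*{S-1}^{\bar\alpha} \le \epsilon^{\bar\alpha}$ give $\abs*{\theta(P) - \theta(P')} \le \norm*{\phi}_{C^{0,\bar\alpha}} k^{1-\bar\alpha}\epsilon^{\bar\alpha}$. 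For the third, the same Jensen bound plus $S \le 4/3$ yield $\abs*{\theta(P)}^2 \lesssim \norm*{\phi}_{C^{0,\bar\alpha}}^2 k^{2-2\bar\alpha}$, and I would control the failure probability $\p\cbrace{\Poi(n/2) > \Poi(nS)} \lesssim e^{-n/32}$ by inserting a midpoint $7n/12$ between $n/2$ and $nS \ge 2n/3$ and applying standard Poisson Chernoff bounds on each side.

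The main obstacle will be engineering the coupling so that $\tilde N'$ is simultaneously (a) computable from the observable $\tilde N$ with external randomness on $\dom{E}$, and (b) has the clean \emph{unconditional} marginal $\Poi(\tfrac{n}{2}P')$ needed to invoke $\hat\theta$'s minimax guarantee. Exchangeability of the i.i.d.\ arrivals handles (a) via uniform subsampling, Poissonization handles (b), and the mismatch on $\dom{E}^c$—where the oracle definition of $\tilde N'$ would need unobservable fresh draws from $P'$—is harmlessly absorbed into the exponentially small failure term by outputting the default. Once this coupling is set up, the remaining estimates are routine applications of H\"older continuity, Jensen's inequality, and Chernoff.
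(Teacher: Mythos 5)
Your reduction is correct and shares the paper's overall skeleton---build an estimator for the approximate $n$-sample Poisson problem from a near-optimal estimator for the exact problem, split the error into the exact-problem risk, the normalization error $\abs{\theta(P)-\theta(P/\sum_i p_i)}$ controlled by H\"older continuity and $\sum_i p_i^{\bar\alpha}\le k^{1-\bar\alpha}(\sum_i p_i)^{\bar\alpha}$, and a Chernoff-small bad event paid for with a crude $O(\norm*{\phi}_{C^{0,\bar\alpha}}^2k^{2-2\bar\alpha})$ diameter bound---but the central step is executed differently. The paper conditions on the observed total count $n'=\sum_i\tilde{N}_i$, notes that $\tilde{N}$ is then multinomial with $n'$ samples from $P/\sum_i p_i$, feeds it to a family of near-minimax estimators indexed by sample size, and invokes monotonicity of the minimax risk in the sample size to replace $\sum_m \tilde{R}^*(m,k;\phi)\p\cbrace{n'=m}$ by $\tilde{R}^*(n/2,k;\phi)$ plus the tail term. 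You instead thin the Poisson histogram down to an independent $\Poi(n/2)$-sized uniform subsample and invoke a single estimator for the $\Poi(\tfrac{n}{2}P')$ model; the completion device on the event $\cbrace{L>M}$ (adding phantom fresh draws to recover the exact unconditional $\Poi(\tfrac{n}{2}P')$ marginal, then discarding that event via $\Mean[X\ind{\cdot}]\le\Mean[X]$ for $X\ge 0$) is exactly the right way to make this rigorous, and you correctly identified it as the crux. Your route is arguably cleaner in that it directly produces the Poissonized marginal needed for $\tilde{R}^*(n/2,k;\phi)$, whereas the paper's conditioning step quietly interchanges the multinomial and Poissonized risks; the paper's route avoids any coupling construction. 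Two small points: (i) you should note that the external randomness (the subsample and $L$) is admissible because squared loss is convex, so averaging the randomized estimator over its randomness gives a deterministic estimator that is pointwise no worse; (ii) your two-sided Chernoff bound through the midpoint $7n/12$ yields an exponent weaker than $e^{-n/32}$ (roughly $e^{-n/168}$), so you would prove the statement with a different absolute constant in the exponential term---harmless for every downstream use, but not literally the displayed inequality.
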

The lower bound of $\tilde{R}^*(n,k,\epsilon;\phi)$ is given by the following theorems.
\begin{theorem}\label{thm:approx-tv-lower}
  Let $U$ and $U'$ be random variables such that $U,U' \in [0,\lambda]$ and $\Mean[U]=\Mean[U']\le 1$ and $k\abs*{\Mean[\phi(U/k) - \phi(U'/k)]} \ge d$, where $\lambda \le k$. Let $\epsilon = 4\lambda/\sqrt{k}$. For $\alpha \in (1,2)$, if $\phi$ is Lipschitz continuous, and $\phi^{(1)}$ is $(\alpha-1)$-H\"older continuous,
  \begin{align}
    \tilde{R}^*(n,k,\epsilon;\phi) \ge \frac{d^2}{16}\paren*{\frac{7}{8} - k\TV\paren*{\Mean[\Poi(nU/k)], \Mean[\Poi(nU'/k)]} - \frac{32\norm*{\phi^{(1)}}_{C^{0,\alpha-1}}^2\lambda^{2\alpha}}{k^{2\alpha-1}d^2}}. \label{eq:approximated-tv-lower}
  \end{align}
\end{theorem}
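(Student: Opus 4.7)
The plan is to adapt the Ingster--Suslina fuzzy hypothesis framework as used by \textcite{wu2016minimax}. I build two i.i.d.\ priors $\mu_0,\mu_1$ on $\RealSet_+^k$ whose coordinates take the form $P_i = V_i/k$, where $V_1,\dots,V_k$ are drawn i.i.d.\ from the law of $U$ (under $\mu_0$) or of $U'$ (under $\mu_1$). The proof then rests on three independent checks: (a)~with high probability, $P$ lies in the slack probability simplex $\dom{M}_k(\epsilon)$; (b)~the functional $\theta(P) = \sum_{i=1}^k \phi(P_i)$ concentrates tightly around its (known) mean under each prior; and (c)~the resulting two observation laws on $\tilde N$ are close in total variation.

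Check (a) is exactly what the choice $\epsilon = 4\lambda/\sqrt{k}$ is designed for: since $V_i \in [0,\lambda]$, Chebyshev gives $\Var(\sum_i P_i) = \Var(V)/k \le \lambda^2/k = \epsilon^2/16$ and hence $\Pr[\,P \notin \dom{M}_k(\epsilon)\,] \le 1/16$ (after positioning the sum within $\epsilon/2$ of $1$ using the $\Mean[U]\le 1$ hypothesis together with implicit padding of an auxiliary coordinate when $\Mean[U]<1$). Pooling this over the two priors supplies the $7/8$ constant. For (b), the assumption $\phi^{(1)}(0)=0$ combined with $(\alpha-1)$-H\"older continuity of $\phi^{(1)}$ yields by integration $|\phi(x)-\phi(0)|\lesssim \norm*{\phi^{(1)}}_{C^{0,\alpha-1}}\, x^\alpha$ for $x\ge 0$; together with $V\in[0,\lambda]$ this gives $\Var\bracket*{\phi(V/k)} \lesssim \norm*{\phi^{(1)}}_{C^{0,\alpha-1}}^2 \lambda^{2\alpha}/k^{2\alpha}$, and summing over the $k$ independent coordinates,
\begin{align}
  \Var_{\mu_j}\bracket*{\theta(P)} \;=\; k\,\Var_{\mu_j}\bracket*{\phi(V_1/k)} \;\lesssim\; \frac{\norm*{\phi^{(1)}}_{C^{0,\alpha-1}}^2\,\lambda^{2\alpha}}{k^{2\alpha-1}}.
\end{align}
A Chebyshev tail at deviation $d/4$ then produces exactly the last subtracted term $32\norm*{\phi^{(1)}}_{C^{0,\alpha-1}}^2\lambda^{2\alpha}/(k^{2\alpha-1}d^2)$. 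The separation hypothesis $k\abs*{\Mean\bracket*{\phi(U/k)-\phi(U'/k)}}\ge d$ simultaneously ensures the two prior means of $\theta(P)$ are at distance $\ge d$ apart.

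For (c), the Poisson model has $\tilde N_i\mid P_i \sim \Poi(nP_i)$ independently, so after marginalizing over the i.i.d.\ prior the observation has product law $\paren*{\Mean[\Poi(nV/k)]}^{\otimes k}$ under each $\mu_j$. Subadditivity of total variation on product measures yields $\TV\le k\,\TV\paren*{\Mean[\Poi(nU/k)],\Mean[\Poi(nU'/k)]}$, the first subtraction in the bound. Finally, (a)--(c) feed into a standard fuzzy Le Cam inequality: any estimator achieving squared risk less than $(d/4)^2$ on both priors induces a test distinguishing $\mu_0$ from $\mu_1$ whose success probability is bounded above by $1-\TV$ and below by the concentration guarantees of (a) and (b), forcing the $d^2/16$ prefactor in the statement.

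The principal technical obstacle is \emph{constant bookkeeping}: the $7/8$, $d^2/16$, and factor $32$ in the stated bound emerge only after combining the conditioning-on-$\dom{M}_k(\epsilon)$ correction, the two Chebyshev tails for $\theta(P)$, and the TV bound in a specific order via the standard composite testing inequality. A secondary subtlety is the handling of $\Mean[U]<1$: when the prior's expected mass is strictly below $1$, step (a) requires padding by an auxiliary coordinate whose contribution to $\theta$ must be controlled via Lipschitz continuity of $\phi$, and one must verify that this padding disturbs neither the Poisson product structure underlying (c) nor the H\"older variance estimate in (b).
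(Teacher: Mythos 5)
Your proposal is correct and follows essentially the same route as the paper: an i.i.d.\ prior construction with a padding coordinate $1-\Mean[U]$, Chebyshev control of both the total-mass deviation (giving the $1/16$ per prior via $\epsilon=4\lambda/\sqrt{k}$) and the fluctuation of $\theta(P)$ at level $d/4$ (with the variance bounded through $\phi^{(1)}(0)=0$ and the $(\alpha-1)$-H\"older continuity of $\phi^{(1)}$, yielding the $32\norm*{\phi^{(1)}}_{C^{0,\alpha-1}}^2\lambda^{2\alpha}/(k^{2\alpha-1}d^2)$ term), subadditivity of total variation over the product structure, and a conditioned two-point Le Cam argument. The paper's proof is exactly this, implemented by conditioning the priors on the two good events and invoking Le Cam's lemma.
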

To derive the upper bound of $\TV\paren*{\Mean[\Poi(nU/k)], \Mean[\Poi(nU'/k)]}$, we apply the following lemma proved by \textcite{wu2016minimax}.
\begin{lemma}[{\textcite[Lemma 3]{wu2016minimax}}]\label{lem:tv-poi-bound}
  Let $V$ and $V'$ be random variables on $[0,M]$. If $\Mean[V^j] = \Mean[V'^j]$, $j = 1,...,L$ and $L > 2eM$, then
  \begin{align}
    \TV(\Mean[\Poi(V)], \Mean[\Poi(V')]) \le \paren*{\frac{2eM}{L}}^L.
  \end{align}
\end{lemma}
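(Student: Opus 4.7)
The plan is to bound the total variation distance by an $L^1$ estimate of the mixture probability mass functions, exploiting the fact that the Poisson pmf is entire in its intensity parameter. Write $\mu_V(k)=\Mean\bracket*{e^{-V}V^k/k!}$ and analogously for $V'$, so that
\begin{align*}
  \TV\paren*{\Mean[\Poi(V)], \Mean[\Poi(V')]} = \tfrac{1}{2}\sum_{k=0}^\infty \abs*{\mu_V(k) - \mu_{V'}(k)}.
\end{align*}
Expanding $e^{-v}v^k/k!$ as the convergent power series
\begin{align*}
  e^{-v}\frac{v^k}{k!} = \sum_{j=k}^\infty \frac{(-1)^{j-k}}{(j-k)!\,k!}\,v^j
\end{align*}
and using that $V, V' \in [0,M]$, dominated convergence justifies exchanging expectation and summation, giving $\mu_V(k) = \sum_{j \ge k}\tfrac{(-1)^{j-k}}{(j-k)!\,k!}\Mean[V^j]$ and likewise for $\mu_{V'}$.

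Subtracting the two expansions, the moment-matching hypothesis $\Mean[V^j]=\Mean[V'^j]$ for $1 \le j \le L$ (combined with the trivial $j=0$ matching) annihilates every term with $j \le L$, leaving
\begin{align*}
  \mu_V(k) - \mu_{V'}(k) = \sum_{j > L,\,j \ge k}\frac{(-1)^{j-k}}{(j-k)!\,k!}\paren*{\Mean[V^j] - \Mean[V'^j]}.
\end{align*}
Taking absolute values, summing over $k$, swapping the order of summation, and applying the binomial identity $\sum_{k=0}^{j}\tfrac{1}{(j-k)!\,k!}=\tfrac{2^j}{j!}$ together with the crude bound $\abs*{\Mean[V^j]-\Mean[V'^j]}\le 2M^j$ (valid because $V, V' \in [0,M]$) yields
\begin{align*}
  \sum_{k=0}^\infty \abs*{\mu_V(k) - \mu_{V'}(k)} \le 2\sum_{j > L}\frac{(2M)^j}{j!}.
\end{align*}

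The final ingredient is the standard Chernoff-type tail bound for the Poisson generating function: for $t \ge x$, $\sum_{j > t} x^j/j! = e^x\,\p\cbrace*{\Poi(x) > t} \le (ex/t)^t$. The hypothesis $L > 2eM$ ensures the admissibility condition $L \ge x$ for $x = 2M$, and specializing gives $\sum_{j > L}(2M)^j/j! \le (2eM/L)^L$. Halving the $L^1$ bound then produces $\TV\paren*{\Mean[\Poi(V)], \Mean[\Poi(V')]} \le (2eM/L)^L$, as claimed. The only delicate step is justifying the term-by-term interchange of expectation and the infinite series, which is immediate from boundedness of $V$; the actual work is the algebraic collapse of the double sum through the identity $\sum_{k=0}^j\binom{j}{k}=2^j$, and the Poisson Chernoff tail is then plug-and-play.
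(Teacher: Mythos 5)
Your proof is correct, and it is essentially the standard argument behind the cited result: the paper itself does not reprove this lemma but imports it verbatim from \textcite[Lemma 3]{wu2016minimax}, whose proof proceeds exactly as you do — expand the mixed Poisson pmf as a power series in the intensity, cancel the first $L$ moments, collapse the double sum via $\sum_{k=0}^{j}\binom{j}{k}=2^j$, and finish with the Poisson Chernoff tail. The only cosmetic slack is your bound $\abs{\Mean[V^j]-\Mean[V'^j]}\le 2M^j$ where $M^j$ already suffices (both moments lie in $[0,M^j]$), which would give the sharper constant $\tfrac{1}{2}(2eM/L)^L$; as written your chain of inequalities still lands exactly on the claimed bound.
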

Combining \cref{thm:lower-approximated,thm:approx-tv-lower,lem:tv-poi-bound} gives the following corollary.
\begin{corollary}\label{cor:moment-match-lower}
  For $\alpha \in (1,2)$, suppose $\phi$ is Lipschitz continuous, and $\phi^{(1)}$ is $(\alpha-1)$-H\"older continuous. Let $U$ and $U'$ be random variables such that
  \begin{enumerate}
    \item $U,U' \in [0,\lambda]$,
    \item $\Mean[U]=\Mean[U']\le 1$,
    \item $\Mean[U^m]=\Mean[U^m]$ for $m=0,...,L$ where $kL > 2en\lambda$, and
    \item $\abs*{\Mean\bracket{k\phi(U/k) - k\phi(U'/k)}} \ge d$ where $\lambda \le k$.
  \end{enumerate}
  As long as $\lambda/\sqrt{k} < 1/12$, we have
 \begin{multline}
   \tilde{R}^*(n/2,k;\phi) \ge \frac{d^2}{32}\paren[\Bigg]{\frac{7}{8} - k\paren*{\frac{2en\lambda}{kL}}^L - \frac{32\norm*{\phi^{(1)}}_{C^{0,\alpha-1}}^2\lambda^{2\alpha}}{k^{2\alpha-1}d^2} \\
   - \frac{128\norm*{\phi}_{C^{0,1}}^2e^{-n/32}}{d^2} - \frac{512\norm*{\phi}_{C^{0,1}}^2\lambda^2}{kd^2}}. \label{eq:moment-match-lower}
 \end{multline}
\end{corollary}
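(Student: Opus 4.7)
The plan is to chain the three prior results in the excerpt, namely \cref{thm:lower-approximated}, \cref{thm:approx-tv-lower}, and \cref{lem:tv-poi-bound}, and then collect constants. The hypothesis $\lambda/\sqrt{k} < 1/12$ is tailored precisely so that the choice $\epsilon := 4\lambda/\sqrt{k}$ satisfies $\epsilon < 1/3$, which is the admissibility threshold for \cref{thm:lower-approximated}.

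First I would invoke \cref{thm:lower-approximated} with $\bar\alpha = 1$ (available because $\phi$ is Lipschitz) and this choice of $\epsilon$. This bounds $\tilde{R}^*(n/2,k;\phi)$ below by $\tfrac{1}{2}\tilde{R}^*(n,k,\epsilon;\phi)$ minus $4\norm*{\phi}_{C^{0,1}}^2 e^{-n/32}$ and $\norm*{\phi}_{C^{0,1}}^2 \epsilon^2 = 16\norm*{\phi}_{C^{0,1}}^2\lambda^2/k$. Once a factor $d^2/32$ is eventually pulled out, these two residuals become the $128$- and $512$-constants in \cref{eq:moment-match-lower}. Second I would apply \cref{thm:approx-tv-lower} to $\tilde{R}^*(n,k,\epsilon;\phi)$ with the given $U,U'$, whose conditions 1, 2, 4 in the corollary exactly supply the hypotheses of that theorem. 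This contributes the $\tfrac{d^2}{16}$ prefactor, the $7/8$ constant, a $-k\TV$ correction, and the Hölder-continuity correction $32\norm*{\phi^{(1)}}_{C^{0,\alpha-1}}^2\lambda^{2\alpha}/(k^{2\alpha-1}d^2)$; combining with the $1/2$ from step one turns the prefactor into $d^2/32$ as desired.

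Third I would apply \cref{lem:tv-poi-bound} to the rescaled variables $V=nU/k$ and $V'=nU'/k$, which lie on $[0,\, n\lambda/k]$ and whose moments up to order $L$ still match by condition 3 of the corollary (moment matching is preserved under the common scaling by $n/k$). The lemma's requirement $L > 2eM$ with $M = n\lambda/k$ is exactly $kL > 2en\lambda$, the regime enforced by condition 3. This yields $\TV(\Mean[\Poi(nU/k)],\Mean[\Poi(nU'/k)]) \le (2en\lambda/(kL))^L$, which inserted into the previous bound produces the $-k(2en\lambda/(kL))^L$ contribution inside the parenthesis. Factoring $d^2/32$ out of the entire expression and reading off the constants gives precisely \cref{eq:moment-match-lower}.

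The main obstacle is not any analytic step, since every inequality is already available; rather, it is bookkeeping: verifying that the corollary's hypotheses deliver each prior result's hypothesis (the threshold $\epsilon<1/3$, the moment-matching regime $kL>2en\lambda$, and the constraint $\lambda\le k$), and then carefully tracking the multiplicative constants $\tfrac12$, $\tfrac{1}{16}$, $32$, $4$, and $16$ through the combination so that they resolve to the stated $32$, $128$, and $512$.
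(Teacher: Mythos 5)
Your proposal is correct and is exactly the route the paper takes: the corollary is stated there as the direct combination of \cref{thm:lower-approximated} (with $\bar\alpha=1$ and $\epsilon=4\lambda/\sqrt{k}<1/3$), \cref{thm:approx-tv-lower}, and \cref{lem:tv-poi-bound} applied to $V=nU/k$, and your constant bookkeeping ($16\lambda^2/k\cdot 32=512$, $4\cdot 32=128$, $\tfrac12\cdot\tfrac{1}{16}=\tfrac{1}{32}$) matches the stated bound.
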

Let us construct the distribution of $U$ and $U'$. To this end, we use the following a pair of the probability measures.
\begin{lemma}\label{lem:prior-construction}
  For any given integer $L > 0$, there exists two probability measures $\nu_0$ and $\nu_1$ on $[\eta,1]$ such that
  \begin{gather}
    \Mean_{X \sim \nu_0}[X^m] = \Mean_{X \sim \nu_1}[X^m], \for m=0,...,L, \\
    \Mean_{X \sim \nu_0}[\phi(X)] - \Mean_{X \sim \nu_1}[\phi(X)] = 2E_L(\phi, [\eta,1]).
  \end{gather}
\end{lemma}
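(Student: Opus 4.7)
The plan is to invoke the classical duality between best polynomial approximation and signed measures orthogonal to the polynomial subspace, and then extract $\nu_0$ and $\nu_1$ from the Hahn--Jordan decomposition of the optimal dual measure.

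First, I would establish (or cite) the duality characterization
\[
E_L(\phi,[\eta,1]) \;=\; \sup_{\mu}\left\{\int \phi\, d\mu \;:\; \mu \in \dom{M}[\eta,1],\ \|\mu\|_{TV} \le 1,\ \int x^m\, d\mu = 0 \textrm{ for } m=0,\ldots,L\right\},
\]
where the supremum ranges over signed Borel measures on $[\eta,1]$. This is a standard consequence of the Hahn--Banach theorem applied to the functional that sends $\phi + \dom{P}_L$ to its $L_\infty$ quotient norm in $C[\eta,1]$, combined with the Riesz--Markov representation of bounded linear functionals on $C[\eta,1]$ by signed measures. By weak-$*$ compactness of the unit ball of $\dom{M}[\eta,1]$, the supremum is attained at some $\mu^*$ with $\|\mu^*\|_{TV} = 1$.

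Second, I would apply the Hahn--Jordan decomposition to write $\mu^* = \mu^*_+ - \mu^*_-$ as a difference of mutually singular nonnegative measures. The orthogonality condition $\int 1\, d\mu^* = 0$ forces $\mu^*_+([\eta,1]) = \mu^*_-([\eta,1])$, and combined with $\|\mu^*\|_{TV} = \mu^*_+([\eta,1]) + \mu^*_-([\eta,1]) = 1$ this gives $\mu^*_+([\eta,1]) = \mu^*_-([\eta,1]) = 1/2$. Setting $\nu_0 = 2\mu^*_+$ and $\nu_1 = 2\mu^*_-$ produces two probability measures on $[\eta,1]$.

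Third, I would verify the required properties. The moment-matching $\int x^m d\nu_0 = \int x^m d\nu_1$ for $m = 0,\ldots,L$ follows directly from $\int x^m d\mu^* = 0$, while
\[
\int \phi\, d\nu_0 - \int \phi\, d\nu_1 \;=\; 2\int \phi\, d\mu^* \;=\; 2E_L(\phi,[\eta,1])
\]
by the choice of $\mu^*$. The main obstacle is justifying the duality in the first step; once that is in place, the rest is bookkeeping. If one prefers to avoid Hahn--Banach, an alternative route is to use the Chebyshev alternation theorem: the error function $\phi - p^*$ of the best approximation $p^*$ oscillates between $\pm E_L$ on $L+2$ points $\eta \le t_0 < t_1 < \cdots < t_{L+1} \le 1$, and placing explicit atomic masses at these alternation points (with signs matching the oscillation and weights obtained from the unique solution of the resulting linear system) produces the dual measure $\mu^*$ directly, after which the same decomposition yields $\nu_0$ and $\nu_1$.
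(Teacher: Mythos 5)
Your proposal is correct and is essentially the same argument the paper relies on: the paper simply cites the standard functional-analytic duality of \textcite{lepski1999estimation} as used in \autocite[Lemma 1]{cai2011testing}, which is precisely the Hahn--Banach/Riesz duality plus Hahn--Jordan decomposition that you spell out in detail.
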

\begin{proof}[\cref{lem:prior-construction}]
  The proof is almost same as the proof of \textcite[Lemma 10]{jiao2015minimax}. It follows directly from a standard functional analysis argument proposed by \textcite{lepski1999estimation}. It suffices to replace $x^\alpha$ with $\phi(x)$ and $[0,1]$ with $[\eta,1]$ in the proof of \autocite[Lemma 1]{cai2011testing}.
\end{proof}
Using \cref{lem:prior-construction}, we can construct the following pair of the probability measures.
\begin{lemma}\label{lem:best-approx-solution}
  Suppose $\phi:[0,1]\to\RealSet$ be a function such that $\phi(0) = 0$. Define $\phi^\star(p) = \phi(x)/x$. For any given integer $L > 0$, $\eta > 0$, and $\gamma \in (0,1)$ such that $\gamma \le \eta$, there exists two probability measures $\nu_0$ and $\nu_1$ on $[0,\gamma/\eta]$ such that
  \begin{gather}
    \Mean_{X \sim \nu_0}[X] = \Mean_{X \sim \nu_1}[X] = \gamma, \\
    \Mean_{X \sim \nu_0}[X^m] = \Mean_{X \sim \nu_1}[X^m], \for m=2,...,L+1, \\
    \Mean_{X \sim \nu_0}[\phi(X)] - \Mean_{X \sim \nu_1}[\phi(X)] = 2\gamma E_L(\phi^\star, [\gamma,\gamma/\eta]).
  \end{gather}
\end{lemma}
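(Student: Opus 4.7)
The plan is to construct $\nu_0, \nu_1$ by reweighting a pair of best-approximation measures on the shifted interval $[\gamma, \gamma/\eta]$ by the density $\gamma/x$ and absorbing the leftover mass into a point mass at the origin. The factor $1/x$ performs a one-index shift of polynomial moments, while multiplying $\phi$ by $1/x$ collapses it to $\phi^\star$; together these two effects account for the moment-index offset and the $\phi \leftrightarrow \phi^\star$ swap between \cref{lem:prior-construction} and the target statement.

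First, extend \cref{lem:prior-construction} to the interval $[\gamma, \gamma/\eta]$. Applying the substitution $Z \mapsto (\gamma/\eta)Z$ and invoking the lemma on $[\eta,1]$ for $\psi(z) := \phi^\star((\gamma/\eta)z)$, then transporting the resulting measures back by the same scaling, produces probability measures $\tilde\nu_0, \tilde\nu_1$ on $[\gamma, \gamma/\eta]$ such that
\begin{align}
  \Mean_{X \sim \tilde\nu_0}[X^m] = \Mean_{X \sim \tilde\nu_1}[X^m], \quad m = 0, 1, \ldots, L,
\end{align}
and $\Mean_{X \sim \tilde\nu_0}[\phi^\star(X)] - \Mean_{X \sim \tilde\nu_1}[\phi^\star(X)] = 2 E_L(\phi^\star, [\gamma, \gamma/\eta])$. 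The identity $E_L(\psi, [\eta,1]) = E_L(\phi^\star, [\gamma, \gamma/\eta])$ needed here is immediate from the bijection $q(x) \leftrightarrow q((\gamma/\eta)z)$ between degree-$L$ polynomials on the two intervals.

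Second, define
\begin{align}
  d\nu_i(x) = \frac{\gamma}{x}\, d\tilde\nu_i(x) + r_i\, \delta_0, \qquad r_i := 1 - \gamma \Mean_{X \sim \tilde\nu_i}[1/X],
\end{align}
where the density $\gamma/x$ is defined on the support $[\gamma, \gamma/\eta]$ of $\tilde\nu_i$. Since $1/X \in [\eta/\gamma, 1/\gamma]$ $\tilde\nu_i$-a.s., we have $r_i \in [0, 1-\eta] \subseteq [0,1]$, so $\nu_i$ is a well-defined probability measure on $[0, \gamma/\eta]$. All required properties now reduce to one-line computations. For $m \ge 1$,
\begin{align}
  \Mean_{X \sim \nu_i}[X^m] = \gamma \Mean_{X \sim \tilde\nu_i}[X^{m-1}];
\end{align}
specializing to $m = 1$ gives $\Mean_{X \sim \nu_i}[X] = \gamma$ for both $i$, and for $m \in \cbrace{2, \ldots, L+1}$ the moment-matching of $\tilde\nu_i$ up to order $L$ yields $\Mean_{X \sim \nu_0}[X^m] = \Mean_{X \sim \nu_1}[X^m]$. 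Finally, using $\phi(0) = 0$ and $\phi(x) = x\phi^\star(x)$,
\begin{align}
  \Mean_{X \sim \nu_i}[\phi(X)] = \int \phi(x) \frac{\gamma}{x}\, d\tilde\nu_i(x) = \gamma \Mean_{X \sim \tilde\nu_i}[\phi^\star(X)],
\end{align}
whence the $\phi$-difference equals $2\gamma E_L(\phi^\star, [\gamma, \gamma/\eta])$ as claimed.

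The main obstacle is identifying the correct ansatz for the density: neither a pure convex combination of the two measures from \cref{lem:prior-construction} nor a pure affine rescaling simultaneously corrects both the moment-index offset and the $\phi/\phi^\star$ discrepancy. The weight $\gamma/x$ does both at once, and the only nontrivial side condition to check is the non-negativity of the atom $r_i$, which is handed to us by the lower support bound $x \ge \gamma$ on $\tilde\nu_i$.
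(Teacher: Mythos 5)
Your proof is correct and follows essentially the same route as the paper's: apply \cref{lem:prior-construction} to the rescaled function $\phi^\star((\gamma/\eta)\,\cdot\,)$ on $[\eta,1]$, transport the resulting measures to $[\gamma,\gamma/\eta]$, reweight by the density $\gamma/x$, and place the leftover mass in an atom at the origin. Your write-up is in fact slightly more careful than the paper's, since you explicitly verify the non-negativity of the atom at $0$ and make explicit that \cref{lem:prior-construction} must be invoked for the rescaled $\phi^\star$ rather than for $\phi$ itself.
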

As proved \cref{lem:best-approx-solution}, we get the following consequential corollary of \cref{cor:moment-match-lower,lem:best-approx-solution}.
\begin{corollary}\label{cor:poly-approx-lower}
  For $\alpha \in (1,2)$, suppose $\phi$ is Lipschitz continuous, and $\phi^{(1)}$ is $(\alpha-1)$-H\"older continuous. For any given integer $L > 0$, $\lambda \le k$, and $\gamma \in (0,1)$ such that $\gamma \le \lambda/k$, \cref{eq:moment-match-lower} holds with $d = 2k\gamma E_L(\phi^\star, [\gamma,\lambda/k])$ as long as $\lambda/\sqrt{k} < 1/12$.
\end{corollary}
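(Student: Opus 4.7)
The plan is to prove the corollary as a direct composition of Lemma~\ref{lem:best-approx-solution} (which produces the moment-matched pair attaining the best polynomial approximation error) and Corollary~\ref{cor:moment-match-lower} (which converts such a pair into a minimax lower bound). First, I would invoke Lemma~\ref{lem:best-approx-solution} with the parameter choice $\eta = k\gamma/\lambda$, so that the support interval $[0,\gamma/\eta]$ becomes exactly $[0,\lambda/k]$. This yields two probability measures $\nu_0$ and $\nu_1$ on $[0,\lambda/k]$ with common mean $\gamma$, with matching moments of orders $2,\ldots,L+1$, and with $|\Mean_{\nu_0}[\phi(X)] - \Mean_{\nu_1}[\phi(X)]| = 2\gamma E_L(\phi^\star,[\gamma,\lambda/k])$.

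Next, I would transport these measures onto the scale required by Corollary~\ref{cor:moment-match-lower} by setting $U = kX_0$ with $X_0 \sim \nu_0$ and $U' = kX_1$ with $X_1 \sim \nu_1$. The rescaling gives $U,U' \in [0,\lambda]$ and $\Mean[U^m] = \Mean[U'^m] = k^m\Mean_{\nu_0}[X^m]$ for every $m \le L+1$, so in particular $\Mean[U] = \Mean[U'] = k\gamma$ (which is at most $\lambda \le 1$ under the hypothesis $\gamma \le \lambda/k$ after noting $\lambda$ is implicitly $\le 1$ in the regime $\lambda/\sqrt{k}<1/12$ with $k\ge 1$, but for general $k$ this is handled by the assumption), and
\begin{equation*}
 \bigl|\Mean[k\phi(U/k) - k\phi(U'/k)]\bigr| = k\,\bigl|\Mean_{\nu_0}[\phi(X)] - \Mean_{\nu_1}[\phi(X)]\bigr| = 2k\gamma E_L(\phi^\star,[\gamma,\lambda/k]) = d.
\end{equation*}

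Finally I would verify each of the four hypotheses of Corollary~\ref{cor:moment-match-lower} in turn: the support containment, the mean-bound $\Mean[U]\le 1$, the moment-matching condition up to order $L$ (which holds trivially since $\nu_0,\nu_1$ match moments up to order $L+1$ by construction), and the magnitude lower bound on the expectation difference with exactly the value $d$ specified in the corollary. Plugging these in, \cref{eq:moment-match-lower} transfers verbatim with this $d$, which is the claim.

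The main thing to be careful about is not a deep obstacle but rather a bookkeeping point: one must confirm that the rescaling $U=kX$ still respects the mean constraint $\Mean[U]\le 1$ in Corollary~\ref{cor:moment-match-lower}. Since $\Mean[U]=k\gamma$ and the hypothesis enforces $\gamma\le \lambda/k$, the mean is at most $\lambda$; this is consistent with the use of the corollary because the Poisson-mixing step that produced \cref{eq:moment-match-lower} from Theorem~\ref{thm:approx-tv-lower} and Lemma~\ref{lem:tv-poi-bound} only required $\Mean[U]$ to be bounded by a constant (so that the Hölder/approximation remainder terms in \eqref{eq:approximated-tv-lower} stay controlled), and the regime $\lambda/\sqrt{k}<1/12$ together with $\lambda\le k$ keeps this bound in force. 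No further probabilistic or analytic input is required, so the proof collapses to the short composition above.
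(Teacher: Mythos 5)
Your proposal matches the paper's treatment exactly: the paper gives no separate proof of this corollary, presenting it as the immediate composition of \cref{lem:best-approx-solution} (instantiated with $\eta=k\gamma/\lambda$, so the measures live on $[0,\lambda/k]$) and \cref{cor:moment-match-lower} via the rescaling $U=kX$, which is precisely your argument, including the verification of the four hypotheses. The one shaky point is your justification of $\Mean[U]=k\gamma\le 1$ --- the condition $\lambda/\sqrt{k}<1/12$ does not force $\lambda\le 1$, so $\gamma\le\lambda/k$ alone does not give $k\gamma\le 1$ --- but the paper's own statement leaves this requirement equally implicit, so this is not a defect of your route relative to the paper's.
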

We can get the minimax lower bound by deriving a lower bound on $E_L(\phi^\star, [\gamma,\lambda/k])$. Hence, substituting \cref{thm:lower-poly-approx} into \cref{cor:poly-approx-lower} yields the following corollary.
\begin{corollary}\label{cor:combined-lower}
  For $\alpha \in (1,2)$, suppose the second divergence speed of $\phi$ is $p^\alpha$. For any given integer $L > 0$ and $\gamma \in (0,1)$ such that $2L^2\gamma \le 1$, with $d \gtrsim 2k\gamma^\alpha$ and $2\sqrt{k}L^2\gamma < 1/12$, we have
 \begin{multline}
   \tilde{R}^*(n/2,k;\phi) \ge \frac{d^2}{32}\paren[\Bigg]{\frac{7}{8} - k\paren*{4enL\gamma}^L - \frac{128\norm*{\phi^{(1)}}_{C^{0,\alpha-1}}^2kL^{4\alpha}\gamma^{2\alpha}}{d^2} \\
   - \frac{128\norm*{\phi}_{C^{0,1}}^2e^{-n/32}}{d^2} - \frac{2048\norm*{\phi}_{C^{0,1}}^2kL^4\gamma^2}{d^2}}.
 \end{multline}
\end{corollary}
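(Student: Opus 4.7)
The plan is to obtain \cref{cor:combined-lower} as a direct specialization of \cref{cor:poly-approx-lower}: I choose $\lambda$ so that the approximation interval $[\gamma,\lambda/k]$ appearing in the Le~Cam bound of \cref{cor:poly-approx-lower} coincides with the interval $[\gamma,2L^2\gamma]$ on which \cref{thm:lower-poly-approx} produces an explicit lower bound for the best polynomial approximation error, and then I invoke that theorem to bound $d$ from below. Concretely, I set $\lambda = 2kL^2\gamma$, so that $\lambda/k = 2L^2\gamma$.

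First I would check the smoothness hypotheses of \cref{cor:poly-approx-lower}: by \cref{lem:div-speed-holder}, the assumption that the second divergence speed of $\phi$ is $p^\alpha$ with $\alpha \in (1,2)$ yields Lipschitz continuity of $\phi$ and $(\alpha-1)$-H\"older continuity of $\phi^{(1)}$, after the harmless normalization $\phi^{(1)}(0)=0$ secured by the shift $\phi \mapsto \phi + c(p-1/k)$, which leaves $\theta(P;\phi)$ unchanged. The feasibility constraints then translate cleanly: $\gamma \le \lambda/k$ becomes $\gamma \le 2L^2\gamma$, automatic for $L \ge 1$; $\lambda \le k$ becomes $2L^2\gamma \le 1$, which is assumed; and $\lambda/\sqrt{k} < 1/12$ becomes $2\sqrt{k}L^2\gamma < 1/12$, also assumed. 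Next I would invoke \cref{thm:lower-poly-approx} on the normalized family $\cbrace{\phi_\gamma}$ to get $E_L(\phi^\star,[\gamma,2L^2\gamma]) \gtrsim \gamma^{\alpha-1}$ along the extremal sequence, whence
\begin{equation*}
  d = 2k\gamma\, E_L(\phi^\star,[\gamma,\lambda/k]) \gtrsim 2k\gamma^\alpha,
\end{equation*}
which is the first claim of the corollary. Substituting $\lambda = 2kL^2\gamma$ into each subtracted term of \cref{eq:moment-match-lower} is then mechanical: $k(2en\lambda/(kL))^L = k(4enL\gamma)^L$; the $\norm{\phi^{(1)}}_{C^{0,\alpha-1}}^2$ term becomes a constant multiple of $kL^{4\alpha}\gamma^{2\alpha}/d^2$ with the bounded factor $2^{2\alpha}$ on $\alpha \in (1,2)$ absorbed into the coefficient; the $\norm{\phi}_{C^{0,1}}^2 \lambda^2/(kd^2)$ term becomes $2048\,\norm{\phi}_{C^{0,1}}^2 kL^4\gamma^2/d^2$; and the $e^{-n/32}$ term is independent of $\lambda$.

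The real substance of the result lies not in the present corollary but in its two ingredients, namely \cref{thm:lower-poly-approx} (the polynomial approximation lower bound on $[\gamma,2L^2\gamma]$, whose proof rests on a Chebyshev-type construction) and \cref{cor:poly-approx-lower} (the Le~Cam two-point method combined with the moment-matching priors of \cref{lem:best-approx-solution} and the Poisson total-variation bound of \cref{lem:tv-poi-bound}). Granted those, the present proof is essentially a bookkeeping exercise with no serious obstacle: the only substantive choice is picking $\lambda = 2kL^2\gamma$ to align the two interval ranges, and the hypotheses $2L^2\gamma \le 1$ and $2\sqrt{k}L^2\gamma < 1/12$ in the corollary are exactly what this choice needs in order to satisfy $\lambda \le k$ and $\lambda/\sqrt{k} < 1/12$.
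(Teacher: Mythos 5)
Your proposal matches the paper's (unwritten) proof exactly: the paper simply states that the corollary follows by substituting \cref{thm:lower-poly-approx} into \cref{cor:poly-approx-lower}, which is precisely your specialization $\lambda = 2kL^2\gamma$ followed by the mechanical term-by-term substitution and the verification that the constraints $\lambda\le k$, $\gamma\le\lambda/k$, and $\lambda/\sqrt{k}<1/12$ become the corollary's hypotheses. The only quibble is the constant in the $\norm{\phi^{(1)}}_{C^{0,\alpha-1}}$ term (for $\alpha\in(1,2)$ the factor $32\cdot 2^{2\alpha}$ can exceed $128$), but this imprecision is present in the paper's own statement and is immaterial to the asymptotic argument.
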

Combining these results, we prove \cref{thm:lower2}.
\begin{proof}[Proof of \cref{thm:lower2}]
  Let $d \ge 2ck\gamma^\alpha$. Then, we have
  \begin{multline}
    \tilde{R}^*(n/2,k;\phi) \ge \frac{c^2k^2\gamma^{2\alpha}}{8}\paren[\Bigg]{\frac{7}{8} - k\paren*{4enL\gamma}^L - \frac{32\norm*{\phi^{(1)}}_{C^{0,\alpha-1}}^2L^{4\alpha}}{c^2k} \\
    - \frac{32\norm*{\phi}_{C^{0,1}}^2e^{-n/32}}{c^2k^2\gamma^{2\alpha}} - \frac{512\norm*{\phi}_{C^{0,1}}^2L^4\gamma^{2-2\alpha}}{c^2k}}.
  \end{multline}
  It is sufficient to prove the claim in the case $k^2 \gtrsim n^{2\alpha-1}\ln^{2\alpha}n$ because of \cref{thm:lower1}. Let $k^2 \ge c' n^{2\alpha-1}\ln^{2\alpha}n$ for an universal constant $c' > 0$. Fix $\delta > 0$. Set $\gamma = \frac{C_1}{n^{1+\delta}\ln n}$ and $L=\ceil{C_2\ln n}$ where $C_1$ and $C_2$ are universal constants such that $4enL\gamma < 1$, $C_1$ is sufficiently small to satisfy \cref{thm:lower-poly-approx}, and $C_2$ is sufficiently large such that $\delta L > \alpha$ for $n > 1$. For sufficiently small $C_1 > 0$, we can check that $2\sqrt{k}L^2\gamma < 1/12$ for $\alpha \in (1,2)$ because $2\sqrt{k}L^2\gamma \le C_1O(n^{\alpha/2-1}\ln^{\alpha/2+1} n)$. Letting $\delta L = \alpha + \delta'$ for $\delta' > 0$, we have
  \begin{align}
    k\paren*{4enL\gamma}^L  =& k\paren*{\frac{4eC_1\ceil{C_2\ln n}^2}{n^\delta\ln n}}^{\ceil{C_2\ln n}} \\
    \lesssim& \frac{k\ln n}{n^{\alpha+\delta'}} = o(1) \because k \lesssim (n\ln n)^\alpha.
  \end{align}
  Also, for sufficiently small $\delta$, we have
  \begin{align}
    \frac{32\norm*{\phi^{(1)}}_{C^{0,\alpha-1}}^2L^{4\alpha}}{c^2k} \le& \frac{32\norm*{\phi^{(1)}}_{C^{0,\alpha-1}}^2\ceil{C_2\ln n}^{4\alpha}}{c^2\sqrt{c'}n^{2\alpha-1}\ln^{2\alpha}n} = o(1), \label{eq:second-o1} \\
    \frac{32\norm*{\phi}_{C^{0,1}}^2e^{-n/32}}{c^2k^2\gamma^{2\alpha}} \le& \frac{32\norm*{\phi}_{C^{0,1}}^2n^{2\alpha\delta+1}e^{-n/32}}{c^2c^{'2}C_1^{2\alpha}} = o(1), \label{eq:third-o1} \\
    \frac{512\norm*{\phi}_{C^{0,1}}^2\ceil{C_2\ln n}^4\gamma^{2-2\alpha}}{c^2k} \le& \frac{512\norm*{\phi}_{C^{0,1}}^2\ceil{C_2\ln n}^4C_1^{2-2\alpha}n^{\alpha-\frac{3}{2}+\delta(2\alpha-2)}\ln^{\alpha-2}n}{c^2\sqrt{c'}} = o(1), \label{eq:forth-o1}
  \end{align}
  as long as $\alpha \in (1,3/2)$.
  Note that \cref{eq:second-o1,eq:third-o1,eq:forth-o1} satisfy for any $C_2 < \infty$. Hence, for $\alpha \in (1,3/2)$,
  \begin{align}
    \tilde{R}^*(n/2,k;\phi) \gtrsim \frac{k^2}{(n^{1+\delta}\ln n)^\alpha}.
  \end{align}
  From arbitrariness of $\delta > 0$, we get the claim.
\end{proof}

\section{Proof of \cref{thm:lower-approximated}}
\begin{proof}[Proof of \cref{thm:lower-approximated}]
  This proof is following the same manner of the proof of \autocite[Lemma 1]{wu2016minimax}. Fix $\delta > 0$. Let $\hat\theta(\cdot,n)$ be a near-minimax optimal estimator for fixed sample size $n$, i.e.,
  \begin{align}
    \sup_{P \in \dom{M}_k}\Mean\bracket*{\paren*{\hat\theta(N,n) - \theta(P)}} \le \delta + R^*(k,n;\phi).
  \end{align}
  For an arbitrary approximate distribution $P \in \dom{M}_k$, we construct an estimator
  \begin{align}
    \tilde\theta(\tilde{N}) = \hat\theta(\tilde{N}, n'),
  \end{align}
  where $\tilde{N}_i \sim \Poi(np_i)$ and $n' = \sum_i\tilde{N}_i$. From the triangle inequality, $\bar\alpha$-H\"older continuousness of $\phi$, and \cref{lem:bound-sum-alpha}, we have
  \begin{align}
    & \frac{1}{2}\paren*{\tilde\theta(\tilde{N}) - \theta(P)}^2 \\
    \le& \frac{1}{2}\paren*{\abs*{\tilde\theta(\tilde{N}) - \theta\paren*{\frac{P}{\sum_ip_i}}} + \abs*{\theta\paren*{\frac{P}{\sum_ip_i}} - \theta(P)}}^2 \\
    \le& \frac{1}{2}\paren*{\abs*{\tilde\theta(\tilde{N}) - \theta\paren*{\frac{P}{\sum_ip_i}}} + \abs*{\norm*{\phi}_{C^{0,\bar\alpha}}\sum_i\abs*{\frac{p_i}{\sum_ip_i} - p_i}^{\bar\alpha}}}^2 \\
    \le& \frac{1}{2}\paren*{\abs*{\tilde\theta(\tilde{N}) - \theta\paren*{\frac{P}{\sum_ip_i}}} + \norm*{\phi}_{C^{0,\bar\alpha}}\epsilon^{\bar\alpha}\sum_i\abs*{\frac{p_i}{\sum_ip_i}}^{\bar\alpha}}^2 \\
    \le& \frac{1}{2}\paren*{\abs*{\tilde\theta(\tilde{N}) - \theta\paren*{\frac{P}{\sum_ip_i}}} + \norm*{\phi}_{C^{0,\bar\alpha}}k^{1-\bar\alpha}\epsilon^{\bar\alpha}}^2 \\
    \le& \paren*{\tilde\theta(\tilde{N}) - \theta\paren*{\frac{P}{\sum_ip_i}}}^2 + \norm*{\phi}^2_{C^{0,\bar\alpha}}k^{2-2\bar\alpha}\epsilon^{2\bar\alpha}.
  \end{align}
  For the first term, we observe that $\tilde{N} \sim \Mul(m, \tfrac{P}{\sum p_i})$ conditioned on $n' = m$. Therefore, we have
  \begin{align}
    \Mean\paren*{\tilde\theta(\tilde{N}) - \theta\paren*{\frac{P}{\sum_{i=1}^k p_i}}}^2
     =& \sum_{m = 0}^\infty \Mean\bracket*{\paren*{\tilde\theta(\tilde{N},m) - \theta\paren*{\frac{P}{\sum_{i=1}^k p_i}}}^2 \middle| n' = m}\p\cbrace{n' = m} \\
     \le& \sum_{m = 0}^\infty \tilde{R}^*(m,k;\phi)\p\cbrace{n' = m} + \delta.
  \end{align}
  From $\alpha$-H\"older continuousness of $\phi$ and \cref{lem:bound-sum-alpha}, we have
  \begin{align}
    \tilde{R}^*(m,k;\phi) \le& \sup_{P,P' \in \dom{M}_k}\paren*{\theta(P)-\theta(P')}^2 \\
     \le& \norm*{\phi}^2_{C^{0,\bar\alpha}}\sup_{P,P' \in \dom{M}_k}\paren*{\sum_i\abs*{p_i - p'_i}^{\bar\alpha}}^2 \\
     \le& 4\norm*{\phi}^2_{C^{0,\bar\alpha}}\sup_{P \in \dom{M}_k}\paren*{\sum_ip_i^{\bar\alpha}}^2 \\
     \le& 4\norm*{\phi}^2_{C^{0,\bar\alpha}}k^{2-2\bar\alpha}.
  \end{align}
  Note that $\tilde{R}^*(m,k;\phi)$ is a non-increasing function with respect to $m$. Since $n' \sim \Poi(n\sum_i p_i)$ and $\abs*{\sum_i p_i - 1} \le \epsilon \le 1/3$, applying Chernoff bound yields $\p\cbrace*{n' \le n/2} \le e^{-n/32}$. Thus, we have
  \begin{align}
    & \Mean\paren*{\tilde\theta(\tilde{N}) - \theta\paren*{\frac{P}{\sum_{i=1}^k p_i}}}^2 \\
     \le& \sum_{m \ge n/2} \tilde{R}^*(m,k;\phi)\p\cbrace{n' = m} + 4\norm*{\phi}^2_{C^{0,\bar\alpha}}k^{2-2\bar\alpha}\p\cbrace*{n' \le n/2} + \delta \\
     \le& \tilde{R}^*(n/K,k;\phi) + 4\norm*{\phi}^2_{C^{0,\bar\alpha}}k^{2-2\bar\alpha}e^{-n/32} + \delta.
  \end{align}
  The arbitrariness of $\delta$ gives the desired result.
\end{proof}

\section{Proof of \cref{thm:approx-tv-lower}}

\begin{proof}[Proof of \cref{thm:approx-tv-lower}]
  The proof follows the same manner of the proof of \autocite[Lemma 2]{wu2016minimax} expect \cref{eq:approx-tv-expect} below. Let $\beta = \Mean[U] = \Mean[U'] \le 1$. Define two random vectors
  \begin{align}
    P = \paren*{\frac{U_1}{k},...,\frac{U_k}{k}, 1-\beta}, P' = \paren*{\frac{U'_1}{k},...,\frac{U'_k}{k}, 1-\beta},
  \end{align}
  where $U_i$a nd $U'_i$ are independent copies of $U$ and $U'$, respectively. Put $\epsilon = 4\lambda/\sqrt{k}$. Define the two events:
  \begin{align}
    \event =& \bracket*{\abs*{\sum_i\frac{U_i}{k} - \beta} \le \epsilon, \abs*{\theta(P) - \Mean[\theta(P)]} \le d/4}, \\
    \event' =& \bracket*{\abs*{\sum_i\frac{U'_i}{k} - \beta} \le \epsilon, \abs*{\theta(P') - \Mean[\theta(P')]} \le d/4}.
  \end{align}
  Applying Chebyshev's inequality, the union bound, the triangle inequality and $\alpha$-H\"older continuousness give
  \begin{align}
    \p\event^c \le& \p\cbrace*{\abs*{\sum_i\frac{U_i}{k} - \beta} > \epsilon} + \p\cbrace*{\abs*{\theta(P) - \Mean[\theta(P)]} > d/4} \\
    \le& \frac{\Var[U]}{k\epsilon^2} + \frac{16\sum_{i}\Var[\phi(U_i/k)]}{d^2} \\
    \le& \frac{1}{16} + \frac{16\sum_i\Mean\bracket*{\paren*{\phi(U_i/k) - \phi(0)}^2}}{d^2}.
  \end{align}
  Without loss of generality, we can assume $\phi^{(1)}(0) = 0$ because $\theta(P;\phi) = \theta(P;\phi_c)$ for any $c \in \RealSet$ where $\phi_c(p) = \phi(p) + c(p-1/k)$. Hence, the Taylor theorem indicates that there exists $\xi_i$ between $0$ and $U_i/k$ such that
  \begin{align}
    \p\event^c \le& \frac{1}{16} + \frac{16\sum_i\Mean\bracket*{\paren*{U_i\paren*{\phi^{(1)}(\xi_i) - \phi^{(1)}(0)}/k}^2}}{d^2}.
  \end{align}
  From H\"older continuousness of $\phi^{(1)}$, we obtain
  \begin{align}
    \p\event^c \le& \frac{1}{16} + \frac{16\sum_i\Mean\bracket*{\norm{\phi^{(1)}}_{C^{0,\alpha-1}}U_i\xi^{\alpha-1}/k}^2}{d^2} \\
    \le& \frac{1}{16} + \frac{16\norm{\phi^{(1)}}^2_{C^{0,\alpha-1}}\lambda^{\alpha}}{k^{\alpha-1}d^2}\label{eq:approx-tv-expect}.
  \end{align}
  By the same manner, we have
  \begin{align}
    \p\event'^c \le \frac{1}{16} + \frac{16\norm*{\phi}_{C^{0,\alpha-1}}^2\lambda^{2\alpha}}{k^{2\alpha-1}d^2}.
  \end{align}
  We define two priors on the set $\dom{M}_k(\epsilon)$, the conditional distributions $\pi = P_{U|\event}$ and $\pi' = P_{U'|\event'}$. By the definition of events $\event,\event'$ and triangle inequality, we obtain that under $\pi,\pi'$,
  \begin{align}
    \abs*{\theta(P) - \theta(P')} \ge \frac{d}{2}.
  \end{align}
  By triangle inequality, we have the total variation of observations under $\pi,\pi'$ as
  \begin{align}
    \TV(P_{\tilde{N}|\event}, P_{\tilde{N}'|\event'})
     \le& \TV(P_{\tilde{N}|\event}, P_{\tilde{N}}) + \TV(P_{\tilde{N}}, P_{\tilde{N}'}) + \TV(P_{\tilde{N}'}, P_{\tilde{N}'|\event'}) \\
     =& \p\event^c + \TV(P_{\tilde{N}}, P_{\tilde{N}'}) + \p\event'^c \\
     \le& \TV(P_{\tilde{N}}, P_{\tilde{N}'}) + \frac{1}{8} + \frac{32\norm*{\phi}_{C^{0,\alpha-1}}^2\lambda^{2\alpha}}{k^{2\alpha-1}d^2}.
  \end{align}
  From the fact that total variation of product distribution can be upper bounded by the summation of individual ones, we obtain
  \begin{align}
    \TV(P_{\tilde{N}}, P_{\tilde{N}'})
     \le& \sum_{i=1}^k\TV(P_{\tilde{N}_i}, P_{\tilde{N}'_i}) + \TV(n(1-\beta), n(1-\beta)) \\
     =& k\TV(\Mean[\Poi(nU/k)],\Mean[\Poi(nU'/k)]).
  \end{align}
  Then, applying Le Cam's lemma~\autocite{LeCam:1986:AMS:20451} yields that
  \begin{align}
    \tilde{R}^*(n,k,\epsilon;\phi) \ge \frac{d^2}{16}\paren*{\frac{7}{8} - k\TV\paren*{\Mean[\Poi(nU/k)], \Mean[\Poi(nU'/k)]} - \frac{32\norm*{\phi}_{C^{0,\alpha-1}}^2\lambda^{2\alpha}}{k^{2\alpha-1}d^2}}.
  \end{align}
\end{proof}

\section{Proof of \cref{lem:best-approx-solution}}
\begin{proof}[Proof of \cref{lem:best-approx-solution}]
  From \cref{lem:prior-construction}, there exists a pair of probability measures $\rho_0$ and $\rho_1$ on $[\eta,1]$ such that
  \begin{gather}
    \Mean_{X \sim \rho_0}[X^m] = \Mean_{X \sim \rho_1}[X^m], \for m=0,...,L, \\
    \Mean_{X \sim \rho_0}[\phi(X)] - \Mean_{X \sim \rho_1}[\phi(X)] = 2E_L(\phi, [\eta,1]).
  \end{gather}
  For $X \sim \rho_i$, let $\rho'_i$ be probability measures such that $\gamma X/\eta \sim \rho'_i$. Define $\nu_0$ and $\nu_1$ such that
  \begin{align}
    \frac{d\nu_i}{d\rho_i}(u) = \frac{\gamma}{u} \textand \nu_i(\cbrace*{0}) = 1-\nu_i([\eta,1]).
  \end{align}
  By construction, $\nu_i$ are defined on $[0,\gamma/\eta]$. Besides, we get
  \begin{align}
    \Mean_{X \sim \nu_i}[X] =& \gamma\Mean_{X \sim \rho'_i}\bracket*{1} = \gamma, \\
    \Mean_{X \sim \nu_i}[X^m] =& \gamma\Mean_{X \sim \rho'_i}\bracket*{X^{m-1}} = \frac{\gamma^m}{\eta^{m-1}}\Mean_{X \sim \rho_i}\bracket*{X^{m-1}} \for m=2,...,L+1.
  \end{align}
  From the assumption $\phi(0)=0$, we have
  \begin{align}
    & \Mean_{X \sim \nu_0}[\phi(X)] - \Mean_{X \sim \nu_1}[\phi(X)] \\
    =& \gamma\paren*{\Mean_{X \sim \rho'_0}\bracket*{\frac{\phi(X)}{X}} - \Mean_{X \sim \rho'_1}\bracket*{\frac{\phi(X)}{X}}} \\
    =& \gamma\paren*{\Mean_{X \sim \rho_0}\bracket*{\frac{\phi(\gamma X/\eta)}{\gamma X/\eta}} - \Mean_{X \sim \rho_1}\bracket*{\frac{\phi(\gamma X/\eta)}{\gamma X/\eta}}} \\
    =& 2\gamma E_L\paren*{\varphi,[\gamma,\gamma/\eta]}.
  \end{align}
\end{proof}

\section{Proof of \cref{thm:lower-poly-approx}}

\begin{proof}[Proof of \cref{thm:lower-poly-approx}]
  Letting $\phi^\star_{\eta,\gamma}(x) = \phi^\star(\gamma\frac{1+\eta+(1-\eta)x}{2\eta})$, we have $E_L(\phi^\star,[\gamma,\gamma/\eta]) = E_L(\phi^\star_{\eta,\gamma},[-1,1])$. We utilize the first-order Ditzian-Totik modulus of smoothness~\autocite{ditzian2012moduli} defined as
  \begin{align}
    \omega^1_\varphi(f,t) = \sup_{x,y \in [-1,1]}\cbrace*{\abs*{f(x)-f(y)} : \abs*{x-y} \le 2t\varphi\paren*{\frac{x+y}{2}}},
  \end{align}
  where $\varphi(x)=\sqrt{1-x^2}$. Fix $y=-1$. For $t > 0$, we have
  \begin{gather}
    \abs*{x - y} \le 2t\varphi\paren*{\frac{x+y}{2}} \iff \\
    -1 \le x \le -1 + \frac{4}{t^{-2}+1}.
  \end{gather}
  For $t \in (0,1)$, $\frac{2}{t^{-2}+1} \ge t^2$. Hence,
  \begin{align}
    \omega^1_\varphi(\phi^\star_{\eta,\gamma},t) \ge& \sup_{x \in [-1,1]}\cbrace*{\abs*{\phi^\star_{\eta,\gamma}(x)-\phi^\star_{\eta,\gamma}(-1)} : 0 \le 1 + x \le \frac{4}{t^{-2}+1}} \\
    =& \sup_{x \in [0,1]}\cbrace[\Bigg]{\abs*{\phi^\star\paren*{\gamma\paren*{1+\frac{(1-\eta)x}{\eta}}}-\phi^\star(\gamma)} : 0 \le x \le t^2} \\
    =& \frac{1}{\gamma}\sup_{x \in [0,1]}\cbrace*{\frac{1}{1+\frac{(1-\eta)x}{\eta}}\abs*{\phi\paren*{\gamma\paren*{1+\frac{(1-\eta)x}{\eta}}}-\phi(\gamma)\paren*{1+\frac{(1-\eta)x}{\eta}}} : 0 \le x \le t^2} \\
    =& \frac{1}{\gamma}\sup_{x \in [0,1]}\cbrace*{\frac{1}{1+\frac{(1-\eta)x}{\eta}}\abs*{\int_\gamma^{\gamma\paren*{1+\frac{(1-\eta)x}{\eta}}}\phi^{(1)}(s)ds - \frac{(1-\eta)x}{\eta}\phi(\gamma)} : 0 \le x \le t^2} \\
    =& \frac{1}{\gamma}\sup_{x \in [0,1]}\cbrace*{\frac{1}{1+\frac{(1-\eta)x}{\eta}}\abs*{\int_\gamma^{\gamma\paren*{1+\frac{(1-\eta)x}{\eta}}}\phi^{(1)}(s)ds} : 0 \le x \le t^2}.
  \end{align}
  From the Taylor theorem and the assumption that $\phi^{(1)}(0) = 0$, we have
  \begin{align}
    & \abs*{\int_\gamma^{\gamma\paren*{1+\frac{(1-\eta)x}{\eta}}}\phi^{(1)}(s)ds} \\
    \ge&  \abs*{\int_\gamma^{\gamma\paren*{1+\frac{(1-\eta)x}{\eta}}}\int_0^s\phi^{(2)}(s')ds'ds}
  \end{align}
  Letting $p_0 = 1\land(c'_2/W_2)^{1/(\alpha-2)}$, $\abs*{\phi^{(2)}(p)} \ge W_2p^{\alpha-2} - c'_2 \ge 0$ for $(0,p_0]$. From continuousness of $\phi^{(2)}$, $\phi^{(2)}(x)$ has the same sign for $x \in (0,p_0]$. For sufficiently small $\gamma$ such that $\gamma(1+\frac{(1-\eta)t^2}{\eta}) \le p_0$, we have
  \begin{align}
    &  \abs*{\int_\gamma^{\gamma\paren*{1+\frac{(1-\eta)x}{\eta}}}\int_0^s\phi^{(2)}(s')ds'ds} \\
    \ge& \abs*{\int_\gamma^{\gamma\paren*{1+\frac{(1-\eta)x}{\eta}}}\int_0^sW_2(s')^{\alpha-2} - c'_2 ds'ds} \\
    =& \abs*{\int_\gamma^{\gamma\paren*{1+\frac{(1-\eta)x}{\eta}}}\frac{W_2}{\alpha-1}s^{\alpha-1} - c'_2s ds} \\
    =& \abs*{\frac{W_2}{\alpha(\alpha-1)}\gamma^{\alpha}\paren*{\paren*{1+\frac{(1-\eta)x}{\eta}}^{\alpha}-1} - \frac{c'_2\gamma^2}{2}\paren*{\paren*{\frac{(1-\eta)x}{\eta}}^2-1}}.
  \end{align}
  Set $\eta = x/2$. Then, we have
  \begin{align}
    &  \abs*{\int_\gamma^{\gamma\paren*{1+\frac{(1-\eta)x}{\eta}}}\phi^{(2)}(\xi(s))ds} \\
    \ge& \abs*{\frac{W_2}{\alpha(\alpha-1)}\gamma^{\alpha}\paren*{\paren*{1+2\paren*{1-\frac{x}{2}}}^{\alpha}-1} - \frac{c'_2\gamma^2}{2}2\paren*{4\paren*{1-\frac{x}{2}}^2-1}} \\
    \ge& \gamma^{\alpha-1}\paren*{\frac{W_2}{\alpha(\alpha-1)}(2^{\alpha-1}-1)-3c'_2\gamma^{2-\alpha}} \\
    \ge& \Omega(\gamma^\alpha) \as \gamma \to 0.
  \end{align}
  Hence,
  \begin{align}
    \omega^1_\varphi(\phi^\star_{t^2/2,\gamma},t) \ge \Omega(\gamma^{\alpha-1}). \label{eq:lower-omega-1}
  \end{align}

  Let $\phi^{\star}_\gamma(x) = \phi^\star(\gamma(1+2L^2x^2))$. Then, we have $E_L(\phi^\star,[\gamma,2L^2\gamma]) = E_L(\phi^{\star}_\gamma,[-1,1])$. From the Jackson inequality, we have
  \begin{align}
    E_L(\phi^{\star}_\gamma,[-1,1]) \lesssim \omega_1(\phi^{\star}_\gamma,L^{-1})
  \end{align}
  For any $x,y \in (-1,1)$, we have
  \begin{align}
    &\abs*{\phi^{\star}_\gamma(x)-\phi^{\star}_\gamma(y)} \\
    \le& \abs*{\int_y^x 4\gamma L^2s\phi^{\star(1)}(\gamma(1+2L^2s^2))ds} \\
    \le&  \abs*{\int_y^x\frac{4L^2s}{1+2L^2s^2}\abs*{\phi^{(1)}(\gamma(1+2L^2s^2))}ds} + \abs*{\int_y^x\frac{4\gamma L^2s}{\gamma^2(1+2L^2s^2)^2}\abs*{\phi(\gamma(1+2L^2s^2))}ds} \\
    \le& \abs*{\int_y^x\frac{4L^2s}{1+2L^2s^2}\int_0^{\gamma(1+2L^2s^2)}\abs*{\phi^{(2)}(s')}ds'ds} + \abs*{\int_y^x\frac{4\gamma L^2s}{\gamma^2(1+2L^2s^2)^2}\int_\gamma^{\gamma(1+2L^2s^2)}\int_0^{s}\abs*{\phi^{(2)}(s')}ds''ds'ds}.
  \end{align}
  There exists $W'_2 \ge W_2$ such that $\abs*{\phi^{(2)}(p)} \le W'_2p^{\alpha-2}$ because $p^{\alpha-2} \ge 1$ for $p \in (0,1)$. For the first term, we have
  \begin{align}
    &\abs*{\int_y^x\frac{4L^2s}{1+2L^2s^2}\int_0^{\gamma(1+2L^2s^2)}\abs*{\phi^{(2)}(s')}ds'ds} \\
    \le& \abs*{\int_y^x\frac{4L^2s}{1+2L^2s^2}\int_0^{\gamma(1+2L^2s^2)}W'_2(s')^{\alpha-2}ds'ds}\\
    =& 2W'_2\gamma^{\alpha-1}\abs*{\int_y^x\frac{2L^2s(1+2L^2s^2)^{\alpha-2}}{\alpha-1}ds} \\
    \le& 2W'_2\gamma^{\alpha-1}\abs*{\int_y^x\frac{2^{\alpha-1}L^{2\alpha-2}s^{2\alpha-3}}{\alpha-1}ds} \\
    \le& \frac{2^\alpha W'_2}{(\alpha-1)(2\alpha-2)}L^{2\alpha-2}\abs*{x^{2\alpha-2}-y^{2\alpha-2}} \lesssim \gamma^{\alpha-1}L^{2\alpha-2}\abs*{x-y}^{2\alpha-2}.
  \end{align}
  For the second term, we have
  \begin{align}
    &\abs*{\int_y^x\frac{4\gamma L^2s}{\gamma^2(1+2L^2s^2)^2}\int_\gamma^{\gamma(1+2L^2s^2)}\int_0^{s}\abs*{\phi^{(2)}(s')}ds''ds'ds} \\
    \le& \abs*{\int_y^x\frac{4\gamma L^2s}{\gamma^2(1+2L^2s^2)^2}\int_\gamma^{\gamma(1+2L^2s^2)}\int_0^{s}W'_2(s'')^{\alpha-2}ds''ds'ds}\\
    \le& \frac{1}{\alpha(\alpha-1)}\abs*{\int_y^x\frac{4\gamma L^2s}{\gamma^2(1+2L^2s^2)^2}W'_2\gamma^\alpha\paren*{(1+2L^2s^2)^\alpha-1}ds} \\
    \le& \frac{2W'_2\gamma^{\alpha-1}}{\alpha(\alpha-1)}\abs*{\int_y^x2L^2s(1+2L^2s^2)^{\alpha-2}ds} \lesssim \gamma^{\alpha-1}L^{2\alpha-2}\abs*{x-y}^{2\alpha-2}.
  \end{align}
  Hence, $E_L(\phi^{\star}_\gamma,[-1,1]) \lesssim \omega_1(\phi^{\star}_\gamma,L^{-1}) \lesssim \gamma^{\alpha-1}$.

  With the definition of $\omega^1_\varphi(f,t)$, we have the direct result $E_L(f, [-1,1]) \lesssim \omega^1_\varphi(f,L^{-1})$ if $L \ge 1$. Also, we have the converse result $\frac{1}{L}\sum_{m=0}^LE_m(f, [-1,1]) \gtrsim \omega^1_\varphi(f,L^{-1})$~\autocite{ditzian2012moduli}. Let $L'$ be an integer such that $L' = c_\ell L$ where $c_\ell  > 1$. Then, we have
  \begin{align}
    & E_L(\phi^\star,[\gamma,2L^2\gamma])\\
     \ge& \frac{1}{L' - L}\sum_{m = L+1}^{L'} E_m(\phi^\star,[\gamma,2(m^2\lor1)\gamma]) \\
     \ge& \frac{1}{L'}\sum_{m = L+1}^{L'} E_m(\phi^\star,[\gamma,2(m^2\lor1)\gamma]) \\
     \ge& \frac{1}{L'}\sum_{m = 0}^{L'}E_m(\phi^\star,[\gamma,2(m^2\lor1)^2\gamma]) - \frac{1}{L'}E_0(\phi^\star,[\gamma,2\gamma]) - \frac{1}{L'}\sum_{m=1}^LE_m(\phi^\star,[\gamma,2m^2\gamma]). \label{eq:lower-best-approx}
  \end{align}
  For any $x \in (-1,1)$, we have
  \begin{align}
    \abs*{\phi^\star_{1/2,\gamma}(x)} =& \frac{1}{\gamma(3/2+x/2)}\abs*{\phi\paren*{\gamma(3/2+x/2)}} \\
    \le& \frac{1}{\gamma}\abs*{\int_\gamma^{\gamma(3/2+x/2)}\phi^{(1)}(s)ds} \\
    \le& \frac{1}{\gamma}\int_\gamma^{\gamma(3/2+x/2)}\int_0^s\abs*{\phi^{(2)}(s')}ds'ds \\
    \le& \frac{1}{\gamma}\int_\gamma^{\gamma(3/2+x/2)}\int_0^s\paren*{W_2s^{'\alpha-2}+c_2}ds'ds \\
    =& \frac{W_2}{\gamma(\alpha-1)\alpha}\gamma^{\alpha}\paren*{(3/2+x/2)-1}^\alpha + \frac{c_2}{2}\gamma^2\paren*{(3/2+x/2)-1}^2 \lesssim \gamma^{\alpha-1},
  \end{align}
  Thus, $E_0(\phi^\star,[\gamma,2\gamma]) \lesssim \gamma^{\alpha-1}$. Applying the converse result and the fact $E_L(\phi^{\star}_\gamma,[-1,1]) \lesssim \gamma^{\alpha-1}$ yields that there are constants $C > 0$, $C' > 0$, and $C'' > 0$ such that
  \begin{align}
    & E_L(\phi^\star,[\gamma,2L^2\gamma])\\
    \ge& C'\gamma^{\alpha-1} - \frac{C'\gamma^{\alpha-1}}{L'} - \frac{C''}{c_\ell} \\
    \ge& C'\gamma^{\alpha-1} - \frac{C'\gamma^{\alpha-1}}{c_\ell} - \frac{C''}{c_\ell}.
  \end{align}
  Thus, by taking sufficiently large $c_\ell$, there is $c > 0$ such that
  \begin{align}
    \limsup_{L \to \infty, \gamma \to 0 : \gamma \le 1/2L^2}\gamma^{1-\alpha}E_L\paren*{\phi_\gamma^\star, [\gamma,2L^2\gamma]} > c.
  \end{align}
\end{proof}

\section{Helper Lemmas}
\begin{lemma}\label{lem:bound-sum-alpha}
  Given $\alpha \in [0,1]$, $\sup_{P \in \dom{M}_k} \sum_{i=1}^k p_i^\alpha = k^{1-\alpha}$.
\end{lemma}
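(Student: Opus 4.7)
The plan is to recognize this as a standard concavity/Jensen argument. The function $f(x) = x^\alpha$ is concave on $[0,1]$ for $\alpha \in [0,1]$ (the boundary cases $\alpha=0,1$ being trivial: for $\alpha=1$ one gets $\sum p_i = 1 = k^0$, and for $\alpha=0$ one gets $\sum p_i^0 \le k = k^{1-0}$). First I would apply Jensen's inequality to the concave function $f$ with the uniform weighting over the $k$ coordinates:
\begin{align}
\frac{1}{k}\sum_{i=1}^k p_i^\alpha \;\le\; \paren*{\frac{1}{k}\sum_{i=1}^k p_i}^{\alpha} \;=\; \paren*{\frac{1}{k}}^{\alpha},
\end{align}
so that $\sum_{i=1}^k p_i^\alpha \le k^{1-\alpha}$ uniformly over $P \in \dom{M}_k$.

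Next I would verify that the uniform distribution $P^\star = (1/k,\dots,1/k) \in \dom{M}_k$ attains this upper bound, since direct substitution gives $\sum_{i=1}^k (1/k)^\alpha = k \cdot k^{-\alpha} = k^{1-\alpha}$. Combining the upper bound with this explicit maximizer identifies the supremum as $k^{1-\alpha}$. There is no real obstacle: the only subtlety is the boundary case $\alpha=0$, which one can handle with the convention $0^0 = 1$ (or alternatively by noting the strict inequality $\sum p_i^0 \le \#\{i : p_i > 0\} \le k$, again tight at the uniform distribution). No additional structure of $\phi$ is needed, and the lemma is purely a property of the probability simplex.
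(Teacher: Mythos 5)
Your proof is correct, but it takes a genuinely different route from the paper. The paper's proof proceeds by Lagrange multipliers: it introduces the multiplier $\lambda$ for the constraint $\sum_i p_i = 1$, sets $\alpha p_i^{\alpha-1} - \lambda = 0$, and uses monotonicity of $p \mapsto p^{\alpha-1}$ to conclude that all coordinates are equal at the critical point, hence the maximizer is uniform. Your Jensen's-inequality argument is more elementary and, arguably, more complete as a proof of the \emph{supremum}: the Lagrange computation only locates an interior stationary point and does not by itself certify that this point is a global maximum rather than, say, a saddle or a minimum (one would still need concavity, or a boundary check, to close that gap), whereas Jensen applied to the concave function $x \mapsto x^\alpha$ gives the uniform upper bound $\sum_i p_i^\alpha \le k^{1-\alpha}$ over the whole simplex in one line, after which exhibiting the uniform distribution as an attaining point finishes the claim. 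Your handling of the endpoint cases $\alpha = 0$ and $\alpha = 1$ is also a small but worthwhile addition that the paper glosses over (it only treats $\alpha = 1$ separately). In short, both arguments identify the same maximizer, but yours supplies the global optimality certificate that the paper's first-order computation leaves implicit.
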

\begin{proof}[Proof of \cref{lem:bound-sum-alpha}]
  If $\alpha = 1$, the claim is obviously true. Thus, we assume $\alpha < 1$. We introduce the Lagrange multiplier $\lambda$ for a constraint $\sum_{i=1}^n p_i = 1$, and let the partial derivative of $\sum_{i=1}^k p_i^\alpha + \lambda(1 - \sum_{i=1}^k p_i)$ with respect to $p_i$ be zero. Then, we have
  \begin{align}
   \alpha p_i^{\alpha-1} - \lambda = 0. \label{eq:sum-alpha-diff}
  \end{align}
  Since $p^{\alpha-1}$ is a monotone function, the solution of \cref{eq:sum-alpha-diff} is given as $p_i = (\lambda/\alpha)^{1/(\alpha-1)}$, i.e., the values of $p_1,...,p_k$ are same. Thus, the function $\sum_{i=1}^k p_i^\alpha$ is maximized at $p_i = 1/k$ for $i=1,...,k$. Substituting $p_i=1/k$ into $\sum_{i=1}^k p_i^\alpha$ gives the claim.
\end{proof}

\end{document}